\newif\ifarxiv
\newcommand{\E}{\mathbb{E}}
\newcommand{\R}{\mathbb{R}}
\newcommand{\cA}{\mathcal{A}}
\newcommand{\cF}{\mathcal{F}}
\newcommand{\cH}{\mathcal{H}}
\newcommand{\ba}{\boldsymbol{a}}
\newcommand{\eps}{\varepsilon}
\renewcommand{\>}{\rangle}
\DeclareMathOperator*{\argmin}{arg\,min}
\newtheorem{theorem}{Theorem}[section]
\newtheorem{lemma}[theorem]{Lemma}
\newtheorem{definition}[theorem]{Definition}
\newtheorem{corollary}[theorem]{Corollary}
\newtheorem{example}[theorem]{Example}
\newenvironment{proofof}[1]{\par{\noindent \textit{Proof of #1.}}}{\hspace*{\fill} $\qed$ \par}
\newcommand{\mirah}[1]{\textcolor{magenta}{[Mirah: #1]}}
\newcommand{\temp}[1]{c^{\text{temp}}_{#1}}
\newcommand{\perm}[1]{c^{\text{perm}}_{#1}}
\newcommand{\permmean}[1]{c^{\text{perm-avg}}_{#1}}
\title{Algorithmic Aspects of Strategic Trading}
\author{Michael Kearns and Mirah Shi}
\date{\today}
\title{Algorithmic Aspects of Strategic Trading}
\begin{document}

\maketitle

\begin{abstract}
Algorithmic trading in modern financial markets is widely acknowledged to exhibit strategic, game-theoretic behaviors whose complexity can be difficult to model. A recent series of papers~\citep{chriss2024optimal, chriss2024positionbuildingcompetitionrealworldconstraints, chriss2024competitiveequilibriatrading, chriss2025positionbuildingcompetitiongame} has made progress in the setting of trading for \emph{position building}. Here parties wish to buy or sell a fixed number of shares in a fixed time period in the presence of both temporary and permanent market impact, resulting in exponentially large strategy spaces. While these papers primarily consider the existence and structural properties of equilibrium strategies, in this work we focus on the algorithmic aspects of the proposed model. We give an efficient algorithm for computing best responses, and show that while the temporary impact only setting yields a potential game, best response dynamics do not generally converge for the general setting, for which no fast algorithm for (Nash) equilibrium computation is known. This leads us to consider the broader notion of Coarse Correlated Equilibria (CCE), which we show can be computed efficiently via an implementation of Follow the Perturbed Leader (FTPL). While we focus on equilibrium computation, our FTPL implementation learns no-regret strategies in any (adversarial) trading environment. 
We illustrate the model and our results with an experimental investigation, where FTPL
exhibits interesting behavior in different regimes of the relative weighting between temporary and permanent
market impact.
\end{abstract}

\ifarxiv
\else
\fi


\section{Introduction}

There is both a vast commercial industry and a large quantitative finance literature centered on the problem of optimally executing trades in electronic exchanges under various conditions. Many brokerages and investment banks offer trading services tracking precise benchmarks such as the volume-weighted average price (VWAP) of a stock, or the prices obtained relative to the start of trading. Such services are both informed by and influence research in algorithmic trading (see Related Work below). The overarching goal in algorithmic trading is to acquire or sell a predetermined number of shares in a specified period of time\footnote{Such directives would typically come from higher-level constraints, such as the need to buy or sell shares of different stocks in order to maintain a portfolio that tracks a common index such as the S\&P 500, or from a hedge fund's quantitative model that detects and acts on perceived mispricings of assets.}, while minimizing the \emph{market impact} incurred by trading --- that is, the tendency of trading to push the asset price against the interests of the trader (buying causing prices to rise, selling causing prices to fall).

Despite the fact that trading in modern electronic markets has very obvious strategic aspects, and that traders informally incorporate game-theoretic considerations in their decisions and choices, it is rare to see such considerations explicitly modeled, primarily due to sheer complexity of doing so --- the possible strategies or algorithms are virtually innumerable, and there are a vast number of different exchange mechanisms and order types.

In a series of recent papers, Chriss~\citep{chriss2024optimal, chriss2024positionbuildingcompetitionrealworldconstraints, chriss2024competitiveequilibriatrading, chriss2025positionbuildingcompetitiongame} has made significant progress with the introduction and analysis of a stylized but realistic model for competitive trading. In Chriss' model, multiple traders play a game in which each player wishes to acquire either a long or short position in a common asset in a fixed time window, and each wishes to minimize their cost in doing so. As is standard in the finance literature, costs decompose into \emph{temporary} and \emph{permanent} market impact. Broadly speaking, temporary impact models the ``mechanical'' influence on prices inherent in the continuous double auction common in modern stock exchanges (due to worse prices as one eats further into the limit order books; see~\citep{nevmyvakaRL} for background), while permanent impact models the longer-term ``psychological'' effects of trading, such as perceptions of the value of the underlying asset. As we shall see in the model, at equilibrium, temporary impact tends to make traders want to avoid each other temporally, while permanent impact tends to make traders want to ``front run'' (trade before) each other. These competing forces, along with the fact that it may be beneficial to sell some shares en route to acquiring a net long position, make for a rich set of equilibrium strategies, which is the primary focus of Chriss' papers\footnote{Chriss' work is closely related to \citet{carlin}, who study equilibria in similar competitive trading setup.}.

In this work, we focus on the algorithmic and learning aspects of Chriss' model, and in doing so broaden the class of equilibria considered (specifically to coarse correlated equilibria (CCE), the class to which no-regret dynamics are known to converge). Our results include the following:
\begin{itemize}
    \item We prove that the algorithmic problem of computing the best-response trading schedule to the fixed actions of the other players admits an efficient dynamic programming solution.
    \item We show that if we consider temporary market impact only, then the game is a potential game and thus best-response dynamics will converge rapidly to a pure Nash equilibrium (NE).
    \item We show that (a slight variant of) permanent impact only is a zero-sum game, and thus the general game is a weighted mixture of a potential game and a zero-sum game. While this last statement is in fact true of {\em any game}, the particular mixture we obtain in our setting has interesting implications for our experimental results. We also observe that the general game is neither a potential nor zero-sum game. 
    \item Given that the general game seems to admit no special form, and thus computing Nash equilibria may be intractable, we next turn attention to computing (approximate) coarse correlated equilibria via no-regret dyanmics. We prove that despite the exponentially large strategy space, there is a computationally efficient implementation of Follow the Perturbed Leader (FTPL) for our game. In addition to their computational tractability, CCE are interesting in our setting due to the possibility of higher social welfare, and the suggestion that lightweight correlation at the exchanges themselves might render trading less costly. 
    Beyond equilibrium computation, our implementation of FTPL can be used to learn no-regret strategies in \textit{any} environment, where other traders could be acting adversarially.
    \item We conclude with an extensive experimental investigation of FTPL dynamics and CCE properties in different regimes of the relative weight on temporary and permanent impact. Despite the fact that there are no general guarantees about the CCE that FTPL will find in arbitrary games, the special structure of our aforementioned decomposition is reflected experimentally, with near-pure NE being found when temporary impact dominates and approximate mixed NE being found in the regime where permanent impact has twice the weight temporary impact. 
\end{itemize}

\subsection{Related Work}

The starting point for our work is the recent series of papers by Chriss introducing and analyzing a game-theoretic trading model~\citep{chriss2024optimal,chriss2024positionbuildingcompetitionrealworldconstraints,chriss2024competitiveequilibriatrading, chriss2025positionbuildingcompetitiongame}. Chriss begins by establishing the existence of Nash equilibria --- since he works in a continuous time and volume model, the pure strategy spaces are infinite, and thus existence does not immediately follow from Nash's celebrated theorem. Chriss imposes continuity and boundary conditions on strategies, which together allow him to prove existence. He then proceeds to examine equilibrium structure and to consider a number of variants of the model. 
Chriss' model is related to earlier work on optimal trade execution in a non-strategic setting~\citep{AlmgrenPortfolio}.
Here we consider a discrete time and volume version of Chriss' model, for which the pure strategy spaces are finite (though exponential in the time horizon) and thus mixed NE are guaranteed to exist. Since in reality trading must occur in discrete time steps and in whole shares, moving to a discrete model allows us to consider algorithmic issues more precisely, which is our primary interest.

Key to Chriss' model are standard notions of (temporary and permanent) market impact, on which there is a large literature;
see~\citep{Gatheral3Models,GatheralMI,Hautsch,Zarinelli,bouchaud,Webster} for a representative but partial sample. Broadly speaking, this literature considers various models for how trading activity influences asset prices, implications of those models for trading strategies, and empirical validation. A smaller body of work considers the algorithmic aspects of optimal trading, including machine learning approaches~\citep{evendarLimit,GanchevDark,nevmyvakaRL,kakadeVWAP}. Our work also focuses on algorithmic considerations, but in a game-theoretic setting.

Several other works consider trading games but operate in models different from ours. \citet{carlin} model players as either ``buyers" who wish to buy or sell a predetermined number of shares over a fixed time period or ``predators” seeking a zero net position, and they derive a pure Nash equilibrium. Similar to Chriss and our model, they use a linear model of temporary and permanent impact (with additional noise governed by a stochastic process). \citet{cont} study a sequential-move game, deriving Stackelberg equilibria when a resource-rich but less informed institutional trader ``leads," and an information-rich but limited-resource high frequency trader ``follows." Like Chriss, both \citet{carlin} and \citet{cont} work in a continuous time and volume model.

\section{Model and Preliminaries}

\ifarxiv
\paragraph{The trading game.}
\else
\textbf{The trading game.} 
\fi
We begin by describing the problem of strategic trading, where traders wish to build a position in a stock over a period of time. More precisely, we consider $n$ players, where every player $i\in[n]$ wishes to distribute purchases of $V_i$ shares of a stock over $T$ days (or other unit of time). Every player chooses a \textit{trading strategy} that specifies a trading schedule acquiring a target volume $V_i$ by day $T$. Here negative values of $V_i$ indicate a net short position, and positive values a net long position.

\begin{definition}[Trading Strategy]
    A trading strategy $a: [T] \to \mathbb{Z}$ is a mapping from a time step $t$ to the number of shares held by a player at time $t$, satisfying $a(0) = 0$ and $a(T) = V$. 
\end{definition}

A trading strategy can be equivalently described by the number of shares bought at every time step. Given a trading strategy $a$, we denote by $a'(t)$ the number of shares bought at time $t$---i.e. $a'(t) = a(t) - a(t-1)$---so that we can equivalently write $a(t) = \sum_{s=1}^t a'(s)$. We will interpret negative values of $a'(t)$ as the number of shares \textit{sold} at time $t$. We allow strategies that both buy and sell shares, regardless of the desired final net position $V$.

The action set $\cA(V_i)$ of a player $i$ is the set of all trading strategies $a$ satisfying $a(0)=0$ and $a(T)=V_i$. We will at times further restrict the action sets by implementing upper and lower trading limits---$\theta_U$ and $\theta_L$---bounding the number of shares that can be bought at any time step. We define the action set $\cA(V_i, \theta_L, \theta_U)$ as the set of all strategies $a$ additionally satisfying $\theta_L\leq a'(t) \leq \theta_U$. We will simply write $\cA_i$ when the parameters $V_i, \theta_L,$ and $\theta_U$ are not important to the discussion. We write $\ba  = (a_1,...,a_n) \in \prod_{i=1}^n \cA_i$ to denote an action profile and $\cA_{-i} = \prod_{j\neq i} \cA_j$ to denote the action space of all players excluding player $i$.

A player's cost per share purchased will take into account two basic sources of market impact --- \textit{temporary impact} and \textit{permanent impact}. 

\begin{definition}[Temporary Impact Cost]
    The temporary impact cost of a trading strategy $a_i\in\cA_i$ against strategies $a_{-i}\in\cA_{-i}$ is:
    \[
    c^{\text{temp}}(a_i, a_{-i}) = \sum_{t=1}^T a'_i(t) \sum_{j=1}^n a'_j(t)
    \]
\end{definition}

\begin{definition}[Permanent Impact Cost]
    The permanent impact cost of a trading strategy $a_i\in\cA_i$ against strategies $a_{-i}\in\cA_{-i}$ is:
    \[
    c^{\text{perm}}(a_i, a_{-i}) = \sum_{t=1}^T a'_i(t) \sum_{j=1}^n a_j(t-1)
    \]
\end{definition}

In other words, temporary impact considers the number of \textit{instantaneous} shares bought/sold by all players at any time step, while permanent impact considers the number of shares bought/sold by all players \textit{prior to} that time step. The cost is then formulated as a linear function of market impact. Following \citet{chriss2024optimal}, we define a player's general cost as the sum of their temporary impact cost and their permanent impact cost. We will be able to control the relative contributions of temporary and permanent impact costs via a \textit{market impact coefficient} $\kappa$. 

\begin{definition}[Cost of Trading]
    Fix $\kappa\geq 0$. The cost of a trading strategy $a_i\in\cA_i$ against strategies $a_{-i}\in\cA_{-i}$ is given by:
    \[
    c(a_i, a_{-i}) = \sum_{t=1}^T \left(a'_i(t) \sum_{j=1}^n a'_j(t) + \kappa \cdot a'_i(t) \sum_{j=1}^n a_j(t-1) \right)
    \]
\end{definition}

It is worth noting that this model can be ``explained'' in terms of assumptions on the underlying 
limit order book dynamics that mediate all trading activity. More specifically, considering (without
loss of generality) only
players who wish to buy shares to obtain a long position, if we assume that (a) the distribution of
share prices in the sell order book is uniform, and that (b) once consumed, shares in the sell book
are never replenished by the arrival of new shares, then we recover Chriss' model. Assumption (a)
corresponds to his linear temporary cost model, and assumption (b) corresponds to his linear
permanent cost model. Assumption (b) can be viewed as an extreme form of permanent impact, in
that any trading activity that drives the price up will never be reversed --- the market always
revalues the security at the current price level. It is then possible to interpret $\kappa$
as a \emph{liquidity replenishment} parameter, in that intermediate values of $\kappa$ model
new sell orders arriving at previous price levels at some rate. \citet{carlin} attribute permanent impact to a similar phenomenon---that decreases in supply cause increases in price---but do not directly map it onto limit order book dynamics.

While both of these assumptions are stylized and
somewhat unrealistic in practice, they at least ground our model in assumptions about the low-level dynamics
of the exchanges. They also point to more realistic variants of the model, in which we assume
more natural price distributions in the order books (for instance, it is common for much more
liquidity to be aggregated near the bid and ask prices, and to thin out away from them), and
less extreme replenishment assumptions.

Given a fixed action profile, the best response of player $i$ is a strategy that minimizes cost.

\begin{definition}[Best Response]
    Consider a player $i$ with action set $\cA(V_i, \theta_L, \theta_U)$ and cost function $c$. The best response of player $i$ to action profile $a_{-i}$ is the action $a_i^* = \argmin_{a\in\cA_i} c(a, a_{-i})$.
\end{definition}

\ifarxiv
\paragraph{Equilibria Concepts.}
\else
\textbf{Equilibria Concepts.} 
\fi
We will study several basic equilibria concepts, defined below in increasing generality. 
\ifarxiv
\else
We give some examples of equilibria strategies in Appendix \ref{app:examples}. 
\fi

\begin{definition}[Pure Nash Equilibrium]
    An action profile $\ba$ is an $\eps$-approximate pure Nash equilibrium if for all $i$, $c(a_i,a_{-i})\leq \min_{a\in\cA_i}c(a,a_{-i}) + \eps$. When $\eps=0$, $\ba$ is a pure Nash equilibrium.
\end{definition}

\begin{definition}[Mixed Nash Equilibrium]
    A profile of (independent) distributions $\mathbf{D} = (D_1\times...\times D_n) \in \prod_{i=1}^n \Delta \cA_i$ is an $\eps$-approximate mixed Nash equilibrium if for all $i$, $\E_{\ba\sim\mathbf{D}}[c(a_i,a_{-i})] \leq \min_{a\in\cA_i}\E_{\ba\sim\mathbf{D}}[c(a,a_{-i})] + \eps$. When $\eps=0$, we say $\mathbf{D}$ is a mixed Nash equilibrium.
\end{definition}

\begin{definition}[Correlated Equilibrium (CE)]
    A distribution $\mathbf{D}$ over action profiles is an $\eps$-approximate correlated equilibrium if for all $i$, for all swap functions $\phi_i:\cA_i\to\cA_i$, $\E_{\ba\sim\mathbf{D}}[c(a_i,a_{-i})] \leq \E_{\ba\sim\mathbf{D}}[c(\phi(a_i),a_{-i})] + \eps$. When $\eps=0$, $\mathbf{D}$ is a correlated equilibrium.
\end{definition}

\begin{definition}[Coarse Correlated Equilibrium (CCE)]
    A distribution $\mathbf{D}$ over action profiles is an $\eps$-approximate coarse correlated equilibrium if for all $i$, $\E_{\ba\sim\mathbf{D}}[c(a_i,a_{-i})] \leq \min_{a\in\cA_i} \E_{\ba\sim\mathbf{D}}[c(a,a_{-i})] + \eps$. When $\eps=0$, we say $\mathbf{D}$ is a coarse correlated equilibrium.
\end{definition}

\ifarxiv
\paragraph{Examples.}
To provide some intuition of the game, we give some examples of equilibria strategies, under varying market impact coefficients $\kappa$. Recall that $\kappa$ determines the relative contributions of temporary and permanent impact. Figure \ref{fig:ex-buy-only} shows pure Nash equilibria strategy pairs for the \textit{buy-only} setting (i.e. $\theta_L=0$). In this setting, we see a clear tension between temporary and permanent impact; when players pay only temporary impact cost (i.e. $\kappa=0$), the tendency is to spread out trading activity to avoid the opponent---and themselves. When players pay only permanent impact cost, the tendency is to trade ahead (in fact, buying everything at $t=1$ incurs $0$ permanent impact cost in our model). Here, $\kappa=2$ is large enough to induce this behavior. For the intermediary case of $\kappa=1$, players strike a balance between the two.

\begin{figure}
\centering
\begin{tabular}{cccc}
& \includegraphics[width=35mm,trim={0mm 4mm 3mm 0},clip]{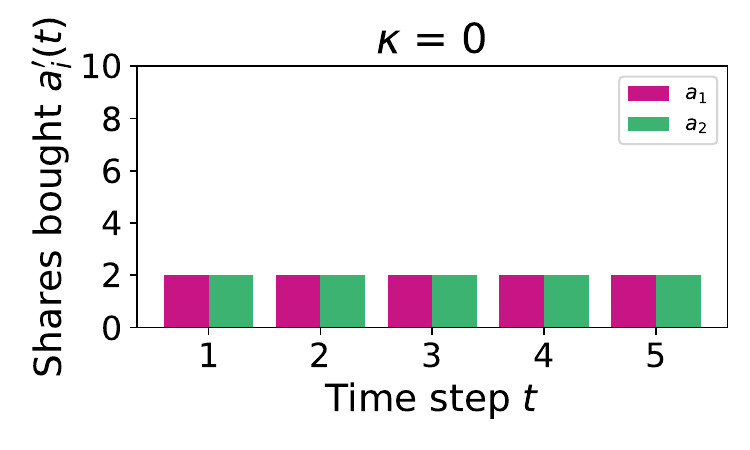} & 

\includegraphics[width=33mm,trim={12mm 4mm 3mm 0},clip]{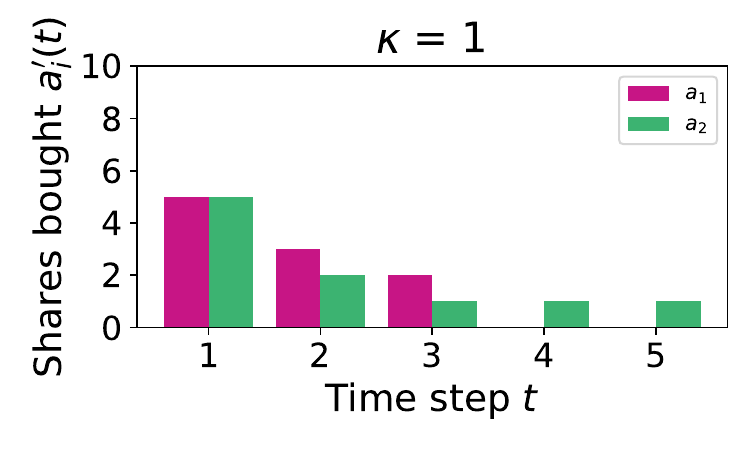}  
& \includegraphics[width=33mm,trim={12mm 4mm 3mm 0},clip]{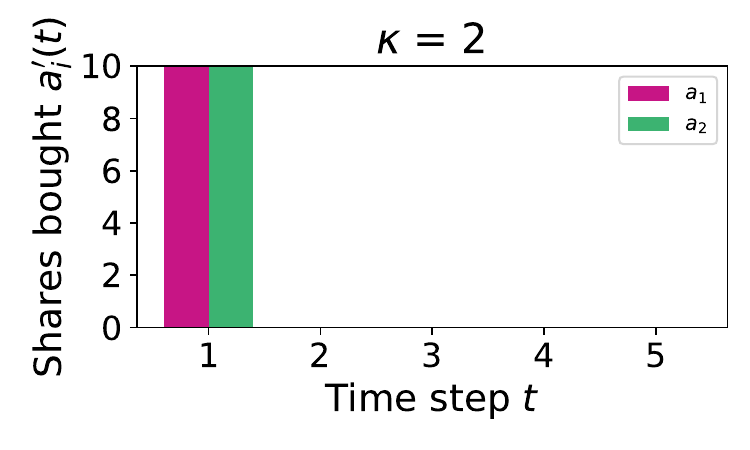} 
\end{tabular}
\caption{Pure Nash equilibria strategies $(a_1, a_2)$ in the trading game with two players, for different $\kappa$. In this example, $T=5$, $V_1=V_2=10$, $\theta_U=10$, and $\theta_L=0$.}
\label{fig:ex-buy-only}
\end{figure}

\textit{Selling} complicates the picture. For instance, buying upfront was previously the best \textit{buy-only} strategy for large enough $\kappa$. When selling is allowed, players tend to want to sell immediately after their opponent buys (when costs are high) and buy immediately after their opponent sells (when costs are low). Figure \ref{fig:ex-sell} gives examples of best response strategies exhibiting this behavior.

\begin{SCfigure}
\centering
\begin{tabular}{cccc}
& \includegraphics[width=35mm,trim={8mm 4mm 3mm 0},clip]{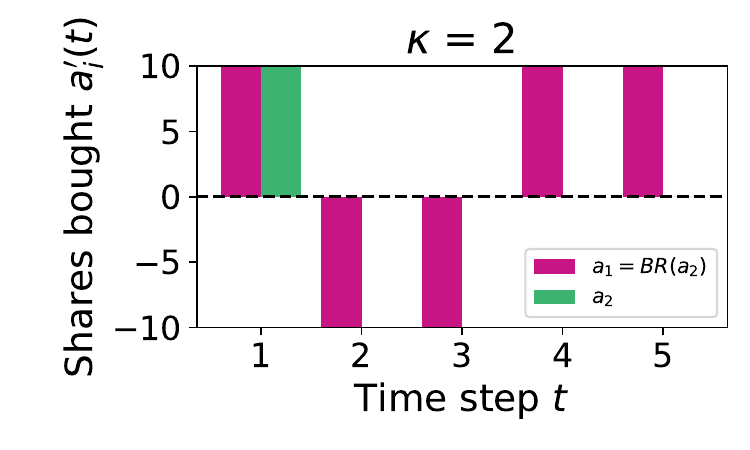} & \includegraphics[width=33mm,trim={17mm 4mm 3mm 0},clip]{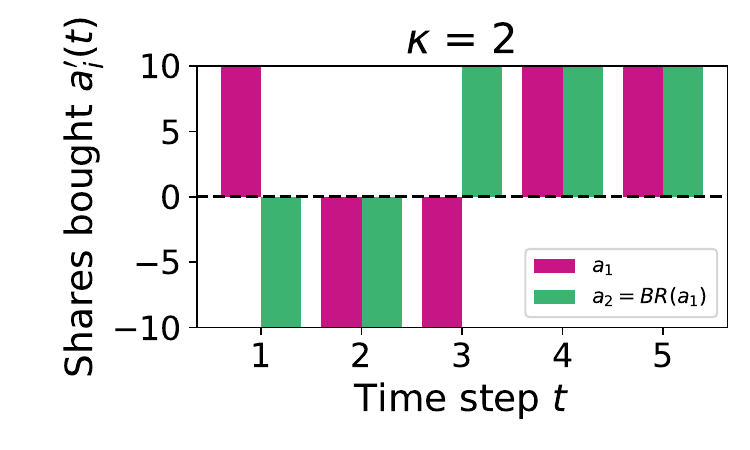}
\end{tabular}
\caption{Examples of best response strategies under $\kappa=2$ when allowing for selling. On the left, $a_1$ is a best response to $a_2$; on the right, $a_2$ is a best response to $a_1$. Here, $T=5$, $V_1=V_2=10$, $\theta_U=10$, and $\theta_L=-10$.}
\label{fig:ex-sell}
\end{SCfigure}
\fi

\section{Computing a Best Response}\label{sec:BR}

We begin by giving a dynamic programming algorithm that finds a best response to any profile of trading strategies. The dynamic programming algorithm we present in this section (Algorithm \ref{alg:dp-BR}) will later serve as a building block in our algorithms for equilibrium computation.

\begin{algorithm}[t]
    \KwIn{Target volume $V$, trading limits $\theta_L, \theta_U$, one-step cost function $p^t$}
    \KwOut{Best response $a^*  = \argmin_{a\in \cA(V,\theta_L, \theta_U)} \phi^T(a)$, where $\phi^T(a) = \sum_{t=1}^T p^t(a(t-1), a'(t))$}
    
    \ifarxiv
    \vspace{0.5em}
    \fi


    \For{$s=V-\theta_U t$ \KwTo $V-\theta_L t$}{
        
        Initialize $\mathrm{OPT}(T, s) = p^t(V-s, s)$
        
        Initialize $\mathrm{BR}(T, s) = s$
    }

    \For{$t=T-1$ \KwTo $1$}{
        \For{$s=V-\theta_U t$ \KwTo $V-\theta_L t$}{
            Let $ \mathrm{OPT}(t, s) = \min_{\theta_L\leq k \leq \theta_U} \left( \mathrm{OPT}(t+1, s-k) + p^t(V-s, k) \right) $ 
            and $\mathrm{BR}(t, s) = \argmin_{\theta_L\leq k \leq \theta_U} \left( \mathrm{OPT}(t+1, s-k) + p^t(V-s, k) \right)$
        }
    }

    \For{$t=1$ \KwTo $T$ \tcp{recover the optimal strategy $a^*$} }{ 
        Let ${a^*}'(t) = \mathrm{BR}(t, V)$\;
        Let $V = V - {a^*}'(t)$\;
    }
    Return $a^*$ \;

    \caption{Best response over one-step costs (\textsf{BR})}
    \label{alg:dp-BR}
\end{algorithm}

The key insight is that the optimal strategy beginning at any time $t$ depends only on the permanent impact of prior trading activity---which is determined by the number of shares held prior to $t$. For a strategy that must hold $V$ shares at time $T$, this can be expressed as the number of remaining shares that need to be bought. That is, if $m$ shares are held before $t$, then $V-m$ shares must be bought from $t$ onwards. Thus, if we knew the number of remaining shares to be bought, we can define a subproblem solving for the optimal strategy beginning at time $t$ with $s$ shares remaining. In short, then, our dynamic programming algorithm solves for the best response inductively over time steps and remaining shares. The number of inductive steps will determine the computation required---which we will see depends polynomially on $\theta_L, \theta_U$, and $T$.

This idea lends itself to more general cost minimization problems over trading strategies; in fact, we will present and analyze a more general form of the algorithm than what is required to compute a best response in the trading game. In particular, we show how to compute 
$
a^* = \argmin_{a\in\cA_i} \phi^T(a)
$
for any cost function $\phi^T(a)$ that can be written as the sum of \textit{one-step costs} $p^t$ that depend on the ``state" of trades at time $t$: 
$$\phi^T(a) = \sum_{t=1}^T p^t(a(t-1), a'(t))$$ 
We remark that $p^t$ is defined to take as input $a(t-1)$ rather than $a(t)$ simply for ease of presentation later on --- the quantities $a(t-1)$ and $a(t)$ are essentially interchangeable, since $a(t) = a(t-1) + a'(t)$. 

\ifarxiv
Our trading cost can be written in this way. Specifically, we can decompose the cost $c(a_i,a_{-i})$ of a strategy $a_i$ into one-step costs, parameterized by $a_{-i}$ and $\kappa$: $$c^t(a_i(t-1), a_i'(t); a_{-i}, \kappa) = a'_i(t) \sum_{j=1}^n a'_j(t) + \kappa \cdot a'_i(t) \sum_{j=1}^n a_j(t-1)$$ so that $c(a_i,a_{-i}) = \sum_{t=1}^T c^t(a_i(t-1), a_i'(t); a_{-i}, \kappa)$. Thus, instantiating our algorithm with $p^t = c^t$ computes a best response in the trading game. 
\else
Observe that our trading cost can be written in this way, and thus an appropriate instantiation of the algorithm computes a best response in the trading game. 
\fi
We will later rely on this general form of the algorithm in Section \ref{sec:no-regret-dynamics}, when we reduce to a cost minimization problem that can be formulated as the sum of one-step costs. 
\ifarxiv
\else
The guarantees are as follows; the proof can be found in Appendix \ref{app:BR-proofs}.
\fi

\begin{theorem}\label{thm:dp}
    Given a target volume $V$, trading limits $\theta_L,\theta_U$, and a cost function $\phi^T$ such that $\phi^T(a) = \sum_{t=1}^T p^t(a(t-1), a'(t))$ for some function $p^t$, Algorithm \ref{alg:dp-BR} computes $a^*  = \argmin_{a\in \cA(V,\theta_L, \theta_U)} \phi^T(a)$. Moreover, Algorithm \ref{alg:dp-BR} runs in time $O((\theta_U-\theta_L)^2T^2)$.
\end{theorem}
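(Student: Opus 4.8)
The argument is a textbook dynamic-programming correctness proof by backward induction on the time step, followed by an operation count; the only care needed is in the bookkeeping of what the table entries mean. I would first fix the semantics of the table: interpret the second index $s$ of $\mathrm{OPT}(t,s)$ as ``the number of shares that still have to be bought at or after time $t$'', so that a tail $a'(t),\dots,a'(T)$ is consistent with $(t,s)$ precisely when $a(t-1)=V-s$ and $a(T)=V$ (equivalently, $\sum_{\tau=t}^T a'(\tau)=s$). Define $\mathrm{OPT}(t,s)$ to be the minimum of $\sum_{\tau=t}^{T} p^\tau(a(\tau-1),a'(\tau))$ over all such tails that in addition obey $\theta_L\le a'(\tau)\le\theta_U$, with the convention that $\mathrm{OPT}(t,s)=+\infty$ when no such tail exists (equivalently, one restricts the loop ranges to states admitting a legal tail). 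With this reading, the first argument $V-s$ appearing in every update is exactly $a(t-1)$, and the target quantity is $\min_{a\in\cA(V,\theta_L,\theta_U)}\phi^T(a)=\mathrm{OPT}(1,V)$, since $a(0)=0=V-V$.

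Next I would verify the recurrence. Base case $t=T$: the constraint $a(T)=V$ forces $a'(T)=V-a(T-1)=s$, so the unique admissible tail has cost $p^T(V-s,s)$ and exists exactly when $\theta_L\le s\le\theta_U$ — this is what the initialization records. Inductive step: an admissible tail from $(t,s)$ is the same thing as a choice $a'(t)=k$ with $\theta_L\le k\le\theta_U$, contributing $p^t(V-s,k)$, followed by an admissible tail from $(t+1,\,s-k)$ — the post-trade holdings are $V-s+k=V-(s-k)$, so $s-k$ shares remain — and conversely every such concatenation is admissible; taking the minimum over $k$ yields exactly the update $\mathrm{OPT}(t,s)=\min_{\theta_L\le k\le\theta_U}\left(\mathrm{OPT}(t+1,s-k)+p^t(V-s,k)\right)$ of Algorithm~\ref{alg:dp-BR}. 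Moreover the lookups stay in range: if $s\in[V-\theta_U t,\,V-\theta_L t]$ and $k\in[\theta_L,\theta_U]$ then $s-k\in[V-\theta_U(t+1),\,V-\theta_L(t+1)]$, which is the range used at step $t+1$. Hence by induction $\mathrm{OPT}(t,s)$ equals the value defined above for every entry the algorithm fills.

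Finally, the recovery loop: reading the overwritten variable as a running remaining target $V_t := V-a^*(t-1)$ (so $V_1=V$), a short forward induction shows each $(t,V_t)$ is a finite (feasible) state and that setting ${a^*}'(t)=\mathrm{BR}(t,V_t)$ selects a minimizer in the recurrence at $(t,V_t)$ while updating the target to $V_{t+1}=V_t-{a^*}'(t)$; peeling off one $p^t$ term at a time then shows the assembled schedule lies in $\cA(V,\theta_L,\theta_U)$ and attains $\phi^T(a^*)=\mathrm{OPT}(1,V)$, i.e.\ is a best response. For the running time, at step $t$ the inner loop ranges over $(\theta_U-\theta_L)t+1$ values of $s$ and each update minimizes over $\theta_U-\theta_L+1$ values of $k$ with $O(1)$ work each (assuming $p^t$ is evaluable in constant time), for $O((\theta_U-\theta_L)^2 t)$ work; summing $t=1,\dots,T$ gives $O((\theta_U-\theta_L)^2 T^2)$, and the initialization ($O((\theta_U-\theta_L)T)$) and the length-$T$ recovery are dominated.

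The proof has no deep step; the part most likely to trip one up is precisely this bookkeeping — nailing down the meaning of the state $s$ so that the first argument of $p^t$ is genuinely $a(t-1)$, treating infeasible/boundary states so a spuriously small one-step value cannot corrupt the minimum (hence the $+\infty$ convention, or equivalently a check that the recovered schedule respects $\theta_L,\theta_U$), and remembering that the symbol $V$ is reused as the running remaining target inside the recovery loop. I expect reconciling the index ranges written in the pseudocode with the set of genuinely reachable states to be the fiddliest point, but it is routine arithmetic.
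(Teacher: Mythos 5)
Your proposal is correct and follows essentially the same argument as the paper: the same backward induction on $\mathrm{OPT}(t,s)$ with state $s$ interpreted as shares remaining (so $a(t-1)=V-s$), the same base case and recurrence, recovery via the stored minimizers, and the same operation count giving $O((\theta_U-\theta_L)^2T^2)$. Your extra attention to infeasible states (the $+\infty$ convention) and the check that $s-k$ stays within the range used at step $t+1$ are minor refinements the paper leaves implicit, not a different approach.
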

\ifarxiv
The algorithm is simple to describe. Let $\mathrm{OPT}(t, s)$ be the minimum cost to buy $s$ shares beginning at time $t$. First, observe that at the last time step $T$, a strategy must buy all remaining shares $s$. Furthermore, if $s$ shares remain, it must be that $V-s$ shares are held before $T$. Thus, $\mathrm{OPT}(T, s) = p^t(V-s, s)$. Now we work backwards. We can compute: $$\mathrm{OPT}(t, s) = \min_k [ \mathrm{OPT}(t+1, s-k) + p^t(V-s, k) ]$$ Why? The one-step cost of buying $k$ shares at time $t$ with $s$ shares remaining is $p^t(V-s, k)$; the minimum remaining cost is simply the minimum cost of buying $s-k$ shares beginning at the next time step, i.e. $\mathrm{OPT}(t+1, s-k)$. The cost of a best response strategy is then $\mathrm{OPT}(1, V)$; some simple bookkeeping will allow us to recover the optimal strategy.

We present these ideas in more detail below.
\begin{proofof}{Theorem \ref{thm:dp}}
    We first show that Algorithm \ref{alg:dp-BR} finds a best response. As above, let $\mathrm{OPT}(t, s)$ be the minimum cost for a strategy to buy $s$ shares beginning at time $t$. The optimal cost is therefore $\mathrm{OPT}(1, V)$. Let $a$ denote any strategy in $\cA(V, \theta_L, \theta_U)$. Since $a$ satisfies $\theta_L\leq a'(t) \leq \theta_U$ for all $t$, we have that $\theta_L t\leq a(t) \leq \theta_U t$ and thus, $V-\theta_Ut \leq s \leq V-\theta_Lt$ for all $t$. 
    
    We now proceed via induction. If $s$ shares remain at the last time step, then it must be that $a(T-1) = V-s$ and $a'(T)=s$. Thus we can define the base case $\mathrm{OPT}(T, s)$ as the cost of buying $s$ shares at time $T$, given that $V-s$ shares are held up until time $T$, i.e.:
    \[
    \mathrm{OPT}(T, s) = p^t(V-s, s)
    \]

    The inductive step rests on the following fact: for $t=1,...,T-1$ and $s = V-\theta_Ut,...,V-\theta_Lt$, 
    \begin{align*}
        \mathrm{OPT}(t, s) = \min_{\theta_L\leq k \leq \theta_U} \left( \mathrm{OPT}(t+1, s-k) + p^t(V-s, k) \right)
    \end{align*} 
    To see this, observe that if $s$ shares remain at time $t$, then it must be that $V - s$ shares are held up until time $t$, i.e. $a(t-1) = V-s$. Suppose $k$ shares are bought at time $t$---i.e. $a'(t)=k$. Then, the one-step cost incurred at time $t$ is precisely $p^t(V-s, k)$. The optimal remaining cost is the minimum cost to buy the remaining $s-k$ shares beginning at time $t+1$---that is, $\mathrm{OPT}(t+1, s-k)$. The optimal solution to buy $s$ shares beginning at time $t$ buys some number of shares $k \in [\theta_L,\theta_U]$ at time step $t$. Since the inductive step chooses $k$ to minimize the cost beginning at time $t$, it is optimal. This proves the inductive step.

    Now, for every $t, s$ pair, Algorithm \ref{alg:dp-BR} stores the minimizer of the previous expression: $$\mathrm{BR}(t,s) = \argmin_{\theta_L\leq k \leq \theta_U} \left( \mathrm{OPT}(t+1, s-k) + p^t(V-s, k) \right)$$ Thus, backtracking starting at $\mathrm{\mathrm{BR}}(1,V)$ recovers the number of shares to buy at every time $t$ in an optimal solution to buy $V$ shares starting at $t=1$, and so recovers an optimal strategy $a^*(t)$.

    It remains analyze the running time of Algorithm \ref{alg:dp-BR}. There are 3 nested iterations. The first iterates through each $t\in[T]$. For each $t$, it iterates through all values $s$ between $V-\theta_Ut$ and $V-\theta_Lt$. Then for each value of $s$, it finds $\mathrm{OPT}(t,s)$ by iterating through all values $k$ between $\theta_L$ and $\theta_U$. Recovering the optimal strategy takes time $T$. Thus, the running time is:
    \begin{align*}
        \left(\sum_{t=1}^T (V-\theta_Lt-(V-\theta_Ut)) (\theta_U-\theta_L)\right) + T 
        &= \left((\theta_U-\theta_L) \sum_{t=1}^T (\theta_Ut-\theta_Lt)\right) + T\\
        &= \left((\theta_U-\theta_L)^2 \sum_{t=1}^T t\right) + T\\
        &= \left((\theta_U-\theta_L)^2 \cdot \frac{T(T+1)}{2}\right) + T\\
        &= O((\theta_U-\theta_L)^2T^2)
    \end{align*}
    which proves the theorem. 
\end{proofof}
\fi

\section{A Decomposition of the Trading Game}\label{sec:decomposition}

In this section, we give a decomposition of the trading game that will have implications for equilibrium computation. At a high level, we show that the trading game is a mixture of a potential game and a constant-sum game. While this is the case for any game
\ifarxiv
\footnote{Any game is the sum of a potential game and zero-sum game (private communication from Aaron Roth): Consider an arbitrary game where player $i\in[n]$ has cost function $c_i$. We can decompose this into the sum of two games where every player $i$ has cost $c_{i,1}=\frac{\sum_{i=1}^n c_i}{n}$ in the first, and $c_{i,2} = \frac{(n-1)c_i-\sum_{j\neq i} c_j}{n}$ in the second. In the first game, every player has the same cost, and so the sum of costs $\sum_{i=1}^n c_{i,1}$ is a potential function. In the second game, the sum of costs $\sum_{i=1}^n c_{i,2} = 0$, since $(n-1) \sum_{i=1}^n c_i = \sum_{i=1}^n \sum_{j\neq i} c_j$, and so is zero-sum.}
\else
\footnote{Any game is the sum of a potential game and zero-sum game (private communication, anonymous colleague): Consider an arbitrary game where player $i\in[n]$ has cost function $c_i$. We can decompose this into the sum of two games where every player $i$ has cost $c_{i,1}=\frac{\sum_{i=1}^n c_i}{n}$ in the first, and $c_{i,2} = \frac{(n-1)c_i-\sum_{j\neq i} c_j}{n}$ in the second. In the first game, every player has the same cost, and so the sum of costs $\sum_{i=1}^n c_{i, 1}$ is a potential function. In the second game, the sum of costs $\sum_{i=1}^n c_{i,2} = 0$, since $(n-1) \sum_{i=1}^n c_i = \sum_{i=1}^n \sum_{j\neq i} c_j$, and so is zero-sum.}
\fi
, we show that, interestingly, the potential game arises from trading under temporary impact only, while the constant-sum game arises essentially from trading under permanent impact only. 
\ifarxiv
More precisely, we will show that the trading cost decomposes into:
\[
c(a_i, a_{-i}) = \left(1-\frac{\kappa}{2}\right)\cdot\temp{}(a_i, a_{-i}) + \kappa\cdot\permmean{}(a_i, a_{-i})
\]
where $\temp{}$ defines a potential game and $\permmean{}$ (a slight modification of permanent impact cost) defines a constant-sum game.
\else
In what follows, we first consider the regimes of temporary impact only and permanent impact only (for which we can efficiently compute (mixed) Nash equilibria); combined they will give us the decomposition theorem.
\fi

\ifarxiv
This decomposition shows that the basic structure of the trading game differs with underlying market impact\footnote{Empirical studies suggest that the ratio of temporary to permanent impact varies significantly in markets \citep{sadka, carlin}; our result highlights how incentives can vary with this ratio.}. Most saliently, when $\kappa=0$ (i.e. when players face only temporary impact), the game is a potential game, and so simple ``best-response dynamics" converge to a pure Nash equilibrium. When $\kappa=2$ (i.e. the contribution of permanent impact is twice that of temporary impact), the game is constant-sum. In this case, the empirical history of ``no-regret dynamics" converges to a mixed Nash equilibrium \citep{Cesa-Bianchi_Lugosi_2006}. For all other $\kappa$, the game is a weighted mixture of a potential game and a constant-sum game; the value of $\kappa$ determines its proximity to either. Later on, we will see that this basic structure is reflected in interesting ways in our experimental evaluations.
\fi

\ifarxiv
The remainder of the section is dedicated to substantiating these ideas. We begin by separately analyzing the terms in the decomposition --- this coincides with thinking separately about the temporary impact only and permanent impact only regimes. In subsection \ref{subsec:temp-impact}, we focus on the temporary impact only regime: we show that $\temp{}$ defines a potential game and provide a simple learning dynamic---best response dynamics---that converge to a pure Nash equilibrium. 
In subsection \ref{subsec:perm-impact}, we focus on the permanent impact only regime: we define the variant of permanent impact cost $\permmean{}$ and show that it is constant-sum. Finally, in subsection \ref{subsec:decomp}, we prove the decomposition of the general trading game. 
\fi

\subsection{The Temporary Impact Regime}\label{subsec:temp-impact}

\ifarxiv
Recall that the temporary impact cost is summarized by the instantaneous number of shares bought/sold:
\[
    c^{\text{temp}}(a_i, a_{-i}) = \sum_{t=1}^T a'_i(t) \sum_{j=1}^n a'_j(t)
\]
\else
Recall that the temporary impact cost $c^{\text{temp}}$ is given by the instantaneous number of shares bought/sold.
\fi
As mentioned in the Introduction, temporary impact can be viewed as modeling the mechanical aspects of trading in the double-auction order book mechanism of modern electronic markets, in which queues or ``books'' of buy and sell orders are ordered by price, and (say) a buyer demanding immediately liquidity must consumer successive orders with increasing prices in the sell book.

We show that the game defined by $c^{\text{temp}}$ is a potential game, and so, by the classical result of \citet{monderer96potential}, the simple procedure of \textit{best response dynamics} converges to a pure Nash equilibria.  

\begin{definition}[Potential Game]
    Consider an $n$-player game $G$ where each player $i\in[n]$ chooses actions from an action set $\cA_i$ and has cost function $c_i$. $G$ is an exact potential game if there exists a potential function $\phi: \cA_1 \times ... \times \cA_n \to \mathbb{R}$ such that for all players $i \in [n]$ and actions $a_i, b_i \in \cA_i$, 
    \[
    \phi(b_i, a_{-i}) - \phi(a_i, a_{-i}) = c_i(b_i, a_{-i}) - c_i(a_i, a_{-i})
    \]
\end{definition}

\begin{theorem}\citep{monderer96potential}\label{thm:potential}
    In any finite potential game, best response dynamics converges to a pure Nash equilibrium. 
\end{theorem}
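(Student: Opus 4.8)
The plan is to use the potential function $\phi$ from the definition of a potential game as a monotone progress measure along the trajectory of best-response dynamics. First I would fix the convention: best-response dynamics starts from an arbitrary profile $\ba^{(0)} \in \cA_1 \times \cdots \times \cA_n$, and at each round, if some player $i$ is not already playing a best response to the current $a_{-i}$, we update $a_i$ to any action $b_i$ with $c_i(b_i, a_{-i}) < c_i(a_i, a_{-i})$ (in particular, to a best response); if no such player exists, the dynamics halts. The point is that an update requires a \emph{strict} cost improvement for the deviating player.

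Next I would show each update strictly decreases $\phi$. If player $i$ switches from $a_i$ to $b_i$ with $c_i(b_i, a_{-i}) < c_i(a_i, a_{-i})$, then the defining identity of a potential game gives
\[
\phi(b_i, a_{-i}) - \phi(a_i, a_{-i}) = c_i(b_i, a_{-i}) - c_i(a_i, a_{-i}) < 0,
\]
so $\phi$ is strictly decreasing along the sequence of profiles produced by the dynamics. Since the game is finite, the joint action space is finite and $\phi$ takes only finitely many distinct values; a strictly decreasing sequence in a finite set has finite length, so the dynamics must halt after finitely many updates. (Equivalently: strict monotonicity of $\phi$ forbids any profile from recurring, and the state space is finite.)

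Finally, I would observe that the dynamics halts precisely at a profile $\ba$ where no player has a strictly improving unilateral deviation, i.e. $c_i(a_i, a_{-i}) \le \min_{a \in \cA_i} c_i(a, a_{-i})$ for every $i$, which is exactly the $\eps = 0$ condition of a pure Nash equilibrium; hence the terminal profile is a pure NE. The only subtlety — and the point I would state carefully rather than a genuine obstacle — is the tie-breaking/indifference convention: the strict-improvement requirement is what makes $\phi$ strictly decrease and rules out cycling among profiles of equal potential. If one instead allowed weakly improving moves, the argument would need an auxiliary lexicographic tie-break to preserve termination; with the stated convention no further work is needed.
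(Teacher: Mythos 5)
Your proof is correct and follows exactly the standard Monderer--Shapley argument that the paper itself sketches (each strict improvement strictly decreases the potential, finiteness forces termination, and a terminal profile is a pure Nash equilibrium); the paper simply cites \citet{monderer96potential} and gives this same rationale informally. Your added care about the strict-improvement convention is a reasonable clarification but not a departure in approach.
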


In best response dynamics, players move sequentially to a beneficial deviation, as long as the strategy profile is not a pure Nash equilibrium. The rationale behind Theorem \ref{thm:potential} is simple. Any deviation strictly decreases a player's cost, and thus the potential function. Since the game is finite, the potential function must reach a minimum, at which point it must be that are no beneficial deviations---that is, players reach a Nash equilibrium. Given this fact, it suffices to produce a potential function for the temporary impact only setting. 
\ifarxiv
\else
All proofs in this section can be found in Appendix \ref{app:decomposition-proofs}.
\fi


\begin{theorem}\label{thm:temp-potential}
    Consider an instance of the trading game where for every player $i\in[n]$, the cost of strategy $a_i$ against strategies $a_{-i}$ is given by the temporary impact cost $c^{\text{temp}}(a_i, a_{-i})$, i.e. $\kappa=0$. This is a potential game with potential function:
    \[
    \phi(\ba) = \sum_{t=1}^T \sum_{i=1}^n a_i'(t) \sum_{j\geq i} a_j'(t)
    \]
\end{theorem}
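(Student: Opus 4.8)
The plan is to verify the exact-potential condition $\phi(b_i,a_{-i}) - \phi(a_i,a_{-i}) = c^{\text{temp}}(b_i,a_{-i}) - c^{\text{temp}}(a_i,a_{-i})$ directly, exploiting the fact that both sides decompose additively over time steps. Write $\phi(\ba) = \sum_{t=1}^T \phi_t(\ba)$ with $\phi_t(\ba) = \sum_{i=1}^n a_i'(t)\sum_{j\ge i} a_j'(t)$, and likewise $c^{\text{temp}}(a_i,a_{-i}) = \sum_{t=1}^T c_t$ with $c_t = a_i'(t)\sum_{j=1}^n a_j'(t)$. When player $i$ switches from $a_i$ to $b_i$ while holding $a_{-i}$ fixed, the only quantities that change are the values $a_i'(t)$, $t\in[T]$, and each $a_i'(t)$ appears only inside $\phi_t$ and $c_t$. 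Hence it suffices to show, for each fixed $t$ and each fixed choice of the other players' trades $\{a_j'(t)\}_{j\ne i}$, that $\phi_t - c_t$ does not depend on $a_i'(t)$; summing over $t$ then gives that $\phi(\ba) - c^{\text{temp}}(a_i,a_{-i})$ is the same whether player $i$ plays $a_i$ or $b_i$, which is exactly the potential identity.

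For the per-time-step claim, fix $t$, write $y = a_i'(t)$ and $x_j = a_j'(t)$ for $j\ne i$, and rewrite $\phi_t = \sum_{i'\le j} a_{i'}'(t)\,a_j'(t)$ as a sum over ordered index pairs $i'\le j$ (the diagonal included once). The variable $y$ occurs in $\phi_t$ in exactly three places: the diagonal term $y^2$; the terms $y\,x_j$ for $j>i$, arising from $i'=i$; and the terms $x_{i'}\,y$ for $i'<i$, arising from $j=i$. Collecting these, the $y$-dependent part of $\phi_t$ equals $y^2 + y\sum_{j\ne i} x_j$ and all remaining terms are independent of $y$. On the other hand $c_t = y\sum_{j=1}^n a_j'(t) = y^2 + y\sum_{j\ne i} x_j$ exactly, so $\phi_t - c_t$ is independent of $y$, as needed. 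Since this holds for every player $i$ and every pair $a_i,b_i\in\cA_i$, $\phi$ is an exact potential function and the temporary-impact game is a potential game; combining with Theorems~\ref{thm:potential} and the finiteness of $\cA_i$ yields convergence of best-response dynamics to a pure Nash equilibrium.

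The only delicate point is the bookkeeping in the second paragraph: one must correctly account for $a_i'(t)$ appearing both as the outer and as the inner summation index in the asymmetric sum $\sum_i a_i'(t)\sum_{j\ge i} a_j'(t)$. Once those occurrences are collected, the identification with $c_t$ is immediate. As a sanity check one can instead use the symmetric rewriting $\phi_t = \tfrac12\big[\big(\sum_j a_j'(t)\big)^2 + \sum_j a_j'(t)^2\big]$, from which the dependence on $a_i'(t)$ — namely $a_i'(t)^2 + a_i'(t)\sum_{j\ne i} a_j'(t)$, up to $a_i'(t)$-free terms — is transparent; this is an optional remark rather than part of the main argument.
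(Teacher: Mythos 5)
Your proof is correct and takes essentially the same route as the paper's: a direct algebraic verification of the exact-potential identity, with your bookkeeping done per time step (showing $\phi_t - c_t$ is independent of $a_i'(t)$, hence $\phi - c^{\text{temp}}_i$ is independent of player $i$'s entire strategy) rather than the paper's grouping by player pairs via the symmetric function $h(a_i,a_j)=\sum_{t} a_i'(t)a_j'(t)$ and cancellation of terms not involving the deviating player. Your accounting of the three occurrences of $a_i'(t)$ in $\phi_t$ (the diagonal $y^2$, the terms with $i$ as outer index, and the terms with $i$ as inner index) is accurate, so the argument goes through.
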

\ifarxiv
\begin{proof}
    The task is to show that the change in $\phi$ exactly measures the change in temporary impact cost resulting from a unilateral deviation. 
    For ease of notation, let's define $h(a_i, a_j) \coloneq \sum_{t=1}^T a_i'(t)a_j'(t)$. And so we can write $c^{\text{temp}}(a_i, a_{-i}) = \sum_{j=1}^n h(a_i, a_j)$ and $\phi(\ba) = \sum_{i=1}^n \sum_{j\geq i} h(a_i, a_j)$.

    Suppose player $k$ deviates from $a_k$ to $b_k$. Since $h$ is symmetric, i.e. $h(a_i, a_j) = h(a_j, a_i)$, we can write the change in potential as:
    \begin{align*}
        \phi(b_k, a_{-k}) - \phi(a_k, a_{-k}) &= \sum_{j=1}^n h(b_k, a_j) + \sum_{i\neq k} \sum_{j\geq i, j\neq k} h(a_i, a_j) - \sum_{j=1}^n h(a_k, a_j) - \sum_{i\neq k} \sum_{j\geq i, j\neq k} h(a_i, a_j) \\
        &= \sum_{j=1}^n h(b_k, a_j) - \sum_{j=1}^n h(a_k, a_j) \\
        &= c^{\text{temp}}(b_k, a_{-k}) - c^{\text{temp}}(a_k, a_{-k})
    \end{align*}
    That is, all terms not involving $k$ cancel out, and the remaining $n$ terms involving $k$ exactly match the change in cost. This proves the theorem.
    

\end{proof}
\fi

\ifarxiv
We have thus established that a pure Nash equilibrium exists in the temporary impact only setting and can be found using best response dynamics --- but how quickly? 
\else
How quickly does best response dynamics converge?
\fi
To answer this, it is common to settle for an $\eps$-approximate equilibrium. We can bound the number of rounds best response dynamics will run for, as long as each deviation leads to a large enough improve in cost---at least $\eps$. That is, as long as the strategy profile is not an $\eps$-approximate Nash equilibrium, players sequentially move to a strategy that lowers their cost by at least $\eps$. Notice we can implement best response dynamics using Algorithm \ref{alg:dp-BR} to sequentially compute best responses, and verifying if it gives a large enough improvement.

\ifarxiv
\begin{algorithm}[t]
    \KwIn{Target volumes $V_1,...,V_n$, trading limits $\theta_L,\theta_U$}
    \KwOut{$\eps$-approximate Nash equilibrium $\ba$}
    
    Initialize $\ba=(a_1,...,a_n)$ arbitrarily.

    Define $c^{\text{temp, t}}(a_i(t-1), a_i'(t); a_{-i}) = a_i'(t)\sum_{j=1}^n a'_j(t)$ to be the one-step temporary cost\;

    \For{$i=1$ \KwTo $n$}{
        Let $\tilde{a}_i \gets \textsf{BR}(V_i, \theta_L, \theta_U, c^{\text{temp, t}}(a_i(t-1), a_i'(t); a_{-i})$)\; 
        If $c^{\text{temp}}(\tilde{a}_i, a_{-i}) \leq c^{\text{temp}}(a_i, a_{-i}) - \eps$, set $a_i \gets \tilde{a}_i$\;
    }
    
    Return $\ba$
    
    \caption{$\eps$-approximate Nash equilibrium (temporary impact only)}
    \label{alg:BR-dynamics}
\end{algorithm}
\fi

\begin{theorem}\label{thm:br-dynamics}
    Best response dynamics (Algorithm \ref{alg:BR-dynamics}) returns an $\eps$-approximate Nash equilibrium in the temporary impact only setting. Moreover, it has running time bounded by $O\left( \frac{n^3\theta^4T^3}{\eps} \right)$, where $\theta = \max\{|\theta_L|, |\theta_U|\}$.
\end{theorem}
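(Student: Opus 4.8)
The plan is to bound two quantities: the number of improving best-response moves, and the cost of each move. For the per-move cost, each call to \textsf{BR} runs in time $O((\theta_U-\theta_L)^2T^2) = O(\theta^2 T^2)$ by Theorem \ref{thm:dp} (instantiated with the one-step temporary cost $c^{\text{temp},t}$); checking whether the resulting deviation improves cost by at least $\eps$ takes $O(nT)$ time to evaluate $c^{\text{temp}}$. So one round of the inner loop over all $n$ players costs $O(n \cdot \theta^2 T^2 + n^2 T) = O(n\theta^2 T^2)$ assuming $n \leq \theta^2 T$ (or just absorbing the lower-order term). The main work is therefore to bound the number of times the outer structure repeats, i.e., the number of improving moves before we reach an $\eps$-approximate equilibrium.

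For that, I would invoke the potential function $\phi$ from Theorem \ref{thm:temp-potential}: by the exact-potential property, every improving move (one that decreases the deviating player's temporary cost by at least $\eps$) decreases $\phi(\ba)$ by at least $\eps$. Hence the number of improving moves is at most $(\phi_{\max} - \phi_{\min})/\eps$, where $\phi_{\max}$ and $\phi_{\min}$ are the maximum and minimum values of $\phi$ over all action profiles. The key estimate is then a bound on the range of $\phi$. Since $\phi(\ba) = \sum_{t=1}^T \sum_{i=1}^n a_i'(t)\sum_{j\geq i} a_j'(t)$ and each $|a_i'(t)| \leq \theta$, each of the $T \cdot \binom{n+1}{2} = O(n^2 T)$ summands has absolute value at most $\theta^2$, so $|\phi(\ba)| \leq O(n^2\theta^2 T)$ for every profile, giving a range of $O(n^2\theta^2 T)$ and thus at most $O(n^2\theta^2 T/\eps)$ improving moves. (One must note that the algorithm as written needs an outer \textbf{while} loop around the \textbf{for} $i=1$ \textbf{to} $n$ loop that repeats until a full pass produces no improving move; I would state this and then observe that between any two consecutive improving moves there are at most $n$ non-improving player-examinations, so the total number of \textsf{BR} calls is $O(n) \cdot O(n^2\theta^2 T/\eps) = O(n^3\theta^2 T/\eps)$.)

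Multiplying the number of \textsf{BR} calls by the per-call cost $O(\theta^2 T^2)$ yields total running time $O(n^3\theta^2 T/\eps) \cdot O(\theta^2 T^2) = O(n^3\theta^4 T^3/\eps)$, matching the claimed bound (the cost-evaluation overhead $O(nT)$ per call is lower order and gets absorbed). Correctness of the output is immediate: the loop terminates only when a complete pass over all $n$ players finds no deviation improving any player's cost by more than $\eps$, which is exactly the definition of an $\eps$-approximate Nash equilibrium; and the potential argument guarantees termination.

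The main obstacle I anticipate is not any single step but getting the bookkeeping in the potential-range estimate tight enough — in particular being careful that $\phi$ is a genuine global potential (so that a deviation by \emph{any} player, not just in a fixed order, decreases it), and confirming that the $n$-dependence in the final bound comes out as $n^3$ rather than a worse power. The factor $n^3$ decomposes as: $n^2$ from the number of terms in $\phi$ (the $\sum_i \sum_{j\geq i}$), and one more $n$ from the number of player-examinations per improving move. If instead one counts only improving moves (assuming the while-loop is restructured so non-improving examinations are cheap), one would get $n^2$ there and need the remaining $n$ elsewhere; I would double-check which accounting the authors intend, but either way the stated $O(n^3\theta^4T^3/\eps)$ is an upper bound.
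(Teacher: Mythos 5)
Your proposal is correct and follows essentially the same argument as the paper: both use the exact-potential property from Theorem \ref{thm:temp-potential} to bound the number of $\eps$-improving deviations by the range of $\phi$, namely $O(n^2\theta^2 T/\eps)$, charge up to $n$ calls of Algorithm \ref{alg:dp-BR} per improving move, and multiply by the $O(\theta^2T^2)$ per-call cost from Theorem \ref{thm:dp} to get $O(n^3\theta^4T^3/\eps)$. Your remark that the pseudocode needs the single pass over players repeated until no $\eps$-improving deviation remains is a fair reading of what the paper implicitly assumes, and does not affect the argument.
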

\ifarxiv
\begin{proof}
    By definition, the strategy profile found by the algorithm is an $\eps$-approximate Nash equilibrium. Now, by Theorem \ref{thm:temp-potential}, we have that for any player $i$, $c^{\text{temp}}(a_i, a_{-i}) - c^{\text{temp}}(\tilde{a}_i, a_{-i}) = \phi(a_i,a_{-i}) - \phi(\tilde{a}_i,a_{-i})$, where $\phi(\ba) = \sum_{t=1}^T \sum_{i=1}^n a_i'(t) \sum_{j\geq i} a_j'(t)$ is the potential function. Thus, for every deviation from $a_i$ to $\tilde{a}_i$, $ \phi(a_i,a_{-i}) - \phi(\tilde{a}_i,a_{-i}) \geq \eps$. And so to bound the running time, it suffices to bound the magnitude of $\phi$. Since $|a_i'(t)|\leq \theta$ for all players $i$ and time steps $t$, we can calculate for any $\ba$:
    \begin{align*}
        |\phi(\ba)| = \left| \sum_{t=1}^T \sum_{i=1}^n a_i'(t) \sum_{j\geq i} a_j'(t) \right| \leq \sum_{t=1}^T \sum_{i=1}^n \sum_{j\geq i} \theta^2 = \frac{n(n+1)T\theta^2}{2}
    \end{align*}
    Therefore the algorithm halts after at most $\frac{2n(n+1)T\theta^2}{2\eps} = \frac{n(n+1)T\theta^2}{\eps} $ deviations. Each deviation is found using at most $n$ calls to Algorithm \ref{alg:dp-BR}. Thus, plugging in the guarantees of Algorithm \ref{alg:dp-BR} (Theorem \ref{thm:dp}) bounds the total running time. 
\end{proof}
\fi

We conclude this discussion by verifying that the general trading game is \textit{not} a potential game; we show that best response dynamics can cycle in the presence of both temporary and permanent impact. The intuition is that temporary and permanent impact can create counteracting forces. As we previously discussed, temporary impact causes players to want to spread out their trades so as to avoid buying many shares at any time step. On the other hand, permanent impact causes players to want to trade ahead of everyone else. The tension between the two behavior---spreading out trades and trading ahead---can cause best response dynamics to oscillate.

\begin{theorem}\label{thm:not-potential}
    The general trading game is not a potential game. 
\end{theorem}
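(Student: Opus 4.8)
The plan is to use the standard Monderer--Shapley characterization: an $n$-player game is an exact potential game if and only if, along every ``four-cycle'' of unilateral deviations between two players, the alternating sum of cost changes vanishes. In the spirit of the remark preceding the theorem, I would actually establish the stronger fact that best-response dynamics can strictly cycle. Concretely, it suffices to exhibit a single instance together with a short loop of action profiles $\ba^{(1)} \to \ba^{(2)} \to \cdots \to \ba^{(m)} \to \ba^{(1)}$, where each arrow is a unilateral move (ideally a genuine best response) that strictly decreases the moving player's cost. If an exact potential $\phi$ existed, then by definition $c_i(\ba^{(k+1)}) - c_i(\ba^{(k)}) = \phi(\ba^{(k+1)}) - \phi(\ba^{(k)})$ at each step; summing around the loop telescopes to $0$, yet the left-hand side is a sum of strictly negative terms --- a contradiction. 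This both proves the theorem and justifies the claim that best-response dynamics oscillates in the general game.

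First I would fix the smallest instance capable of displaying the temporary-versus-permanent tension: $n=2$ players, a short horizon (even $T=3$ should suffice, matching the scale of the examples in Section~2), symmetric targets $V_1=V_2=V$, and trading limits $\theta_L<0<\theta_U$ loose enough that selling is permitted --- the oscillation is cleanest when selling is allowed, as Figure~\ref{fig:ex-sell} already suggests. I would then pick a concrete $\kappa>0$; the value $\kappa=2$ used in the examples is a natural candidate, since it makes the permanent-impact incentive (front-run the opponent, and sell precisely when the opponent is long so as to be credited against the inflated price) strong enough to overturn the temporary-impact incentive (spread trades out so as not to buy or sell simultaneously with the opponent).

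Second, with the parameters pinned down I would write out an explicit loop of action profiles and verify, by direct evaluation of $c(a_i,a_{-i}) = \sum_{t=1}^T \big( a_i'(t)\sum_{j=1}^n a_j'(t) + \kappa\, a_i'(t)\sum_{j=1}^n a_j(t-1)\big)$, that each unilateral move strictly lowers the mover's cost, so the alternating sum around the loop is strictly negative (in particular nonzero). Running Algorithm~\ref{alg:dp-BR} on this small instance makes it mechanical to confirm that each move is in fact a best response, not merely an improving deviation, which strengthens the conclusion.

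The content of the argument is conceptual --- two impact terms pushing a player in opposite temporal directions --- and I expect the only real obstacle to be bookkeeping: locating the smallest clean instance and value of $\kappa$ for which a fully explicit, short best-response cycle exists, and double-checking the cost arithmetic around the loop. No new machinery beyond the cost formula and the telescoping-potential contradiction should be required.
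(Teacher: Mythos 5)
Your overall route is the same as the paper's: exhibit a short loop of strictly cost-decreasing unilateral deviations and conclude that no exact potential can exist. The paper phrases the contradiction via Theorem \ref{thm:potential} (best-response dynamics must converge in a finite potential game), while you phrase it as a telescoping sum of $\phi$-differences around the loop; these are the same argument, and your telescoping version is, if anything, slightly cleaner, since it only requires \emph{improving} deviations rather than exact best responses (the paper's cycle, too, is only verified to consist of improving moves), so your extra insistence on certifying best responses via Algorithm \ref{alg:dp-BR} is unnecessary.

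The genuine gap is that the proposal never produces the object that constitutes the proof. For a statement of the form ``this game is not a potential game,'' essentially the entire content is the explicit, arithmetically verified counterexample, and you defer exactly that step as ``bookkeeping,'' asserting that $T=3$, $\kappa=2$ with selling allowed ``should suffice'' without exhibiting a loop or evaluating a single cost. The paper's witness lives at different parameters ($T=5$, $\kappa=1$, $V_1=V_2=5$, buy-only increment sequences) and is a three-move cycle returning to the starting profile, with the costs checked numerically ($36\to 33$, $35\to 34$, $32\to 31$). Your proposed parameter point is also not obviously the easiest one: $\kappa=2$ is precisely the constant-sum point of the decomposition, where by the telescoping identity behind Lemma \ref{lem:perm-avg-zero-sum} the self-interaction part of a player's cost collapses to the constant $V_i^2$, so whether a strict improvement cycle exists there (or whether the Monderer--Shapley four-cycle condition fails, which it does, but again needs a concrete pair of deviations) must be demonstrated, not assumed. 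Until an explicit cycle or failed four-cycle is written down and its cost arithmetic checked, what you have is a correct plan rather than a proof.
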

\ifarxiv
\begin{proof}
    Theorem \ref{thm:potential} tells us that for any finite potential game, best response dynamics is guaranteed to converge. Thus it suffices to give an instance for which best response dynamics does not converge. Below we show an instance with $T=5$, $\kappa=1$, and two players, both with the action set $\cA(V)$ for $V=5$.  

    Consider a run of best response dynamics, where player 1's strategy $a_1$ is initialized to be: $$a_1(1)=2, a_1(2)=2, a_1(3)=1, a_1(4)=0, a_1(5)=0$$ and player 2's strategy $a_2$ is initialized to be:
    $$a_2(1)=1, a_2(2)=1, a_2(3)=1, a_2(4)=1, a_2(5)=1$$ 
    Now, player 2 can decrease his cost against $a_1$ by playing $a_2'$, where:
    $$a_2'(1)=3, a_2'(2)=1, a_2'(3)=0, a_2'(4)=0, a_2'(5)=1$$ We have that $c(a_2, a_1) = 36$ while $c(a_2', a_1) = 33$. Then, player 1 can decrease her cost against $a_2'$ by playing $a_1'$, where: $$a_1'(1)=2, a_1'(2)=1, a_1'(3)=1, a_1'(4)=1, a_1'(5)=0$$ We have that $c(a_1, a_2') = 35$ while $c(a_1', a_2') = 34$. Then, player 2 can decrease his cost against $a_1'$ by playing $a_2''$, where:
    $$a_2''(1)=2, a_2''(2)=2, a_2''(3)=1, a_2''(4)=0, a_2''(5)=0$$ We have that $c(a_2', a_1') = 32$ while $c(a_2'', a_1') = 31$. Note that $a_2'' = a_1$, and so best response dynamics will cycle, i.e. it will not converge. This completes the proof.    
\end{proof}
\fi

\subsection{The Permanent Impact Regime}\label{subsec:perm-impact}

Next, turning to the permanent impact only setting, we show that permanent impact essentially induces a constant-sum game. More accurately, while permanent impact cost $c^{\text{perm}}$ is \textit{not} constant-sum\ifarxiv\else (Appendix \ref{app:not-zero-sum})\fi, a semantic-preserving variant of $c^{\text{perm}}$ \textit{is}. 
\ifarxiv
The following example demonstrates that $c^{\text{perm}}$ is not constant-sum.

\begin{example}
Consider the trading game with two players. Suppose both players buy all $V=V_1=V_2$ shares at $t=1$ and 0 shares at every step
afterwards. Then the sum of permanent impact costs for both players is $0$. On the
other hand, suppose both players buy $V/T$ shares at each time step. Then the sum of permanent impact 
costs is $$2 \sum_{t=1}^T \frac{V}{T}\left(\frac{2V}{T}\cdot(t-1)\right) = \frac{4V^2}{T^2}\sum_{t=1}^T t - \frac{4V^2}{T} = \frac{4V^2}{T^2}\cdot\frac{T(T+1)}{2} - \frac{4V^2}{T} = 2V^2 - \frac{2V^2}{T}$$
which approaches $2V^2$ as $T$ becomes large. Thus the permanent impact only setting is not constant-sum.\footnote{In fact, using the same example, we can show that the general game is not constant-sum. If both players buy all $V$ shares upfront, the sum of temporary impact costs for both players is $4V^2$, and so the sum of temporary and permanent impact costs is $4V^2$. On the other hand, if both players buy $V/T$ shares at each time step, then the sum of temporary costs is $2\sum_{t=1}^T (V/T)(2V/T) = 4V^2/T$, which approaches 0 as $T$ becomes large. So the sum of temporary and permanent impact costs approaches $\kappa\cdot 2V^2$ as $T$ becomes large. Thus the general game is not zero-sum for $\kappa\neq 2$.} 
\end{example}

Now, we consider a slight variant of permanent impact cost that is in fact constant-sum
\ifarxiv
\footnote{Andrew Bennett, private communication.}.
\else
\footnote{Private communication, anonymous colleague.}.
\fi
\fi
We define $$\permmean{}(a_i,a_{-i}) \coloneqq \frac{1}{2} \sum_{t=1}^T a'_i(t) \sum_{j=1}^n (a_j(t-1) + a_j(t))$$ to be the cost averaging the permanent impact contribution from the previous and current time step. 

\begin{lemma}\label{lem:perm-avg-zero-sum}
    The variant of permanent impact cost $\permmean{}$ satisfies the following: for any action profile $\ba \in \cA(V_1) \times...\times \cA(V_n) $, we have:
    \[
    \sum_{i=1}^n \permmean{}(a_i,a_{-i}) = \frac{1}{2} \left( \sum_{i=1}^n  V_i \right)^2
    \]
\end{lemma}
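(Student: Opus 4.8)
The plan is to pass to the \emph{aggregate} holdings process and recognize a telescoping sum. Define $A(t) \coloneqq \sum_{j=1}^n a_j(t)$, the total number of shares held across all players at time $t$, and correspondingly $A'(t) \coloneqq \sum_{j=1}^n a_j'(t) = A(t) - A(t-1)$. The boundary conditions $a_j(0) = 0$ and $a_j(T) = V_j$ give $A(0) = 0$ and $A(T) = \sum_{j=1}^n V_j$.

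First I would interchange the order of the outer sum over $i$ and the sum over $t$ inside $\permmean{}$. Since the factor $\sum_{j=1}^n (a_j(t-1) + a_j(t)) = A(t-1) + A(t)$ does not depend on $i$, summing the identity $\permmean{}(a_i,a_{-i}) = \tfrac12 \sum_{t=1}^T a_i'(t)\,(A(t-1)+A(t))$ over $i$ yields
\[
\sum_{i=1}^n \permmean{}(a_i,a_{-i}) = \frac12 \sum_{t=1}^T A'(t)\,\bigl(A(t-1) + A(t)\bigr).
\]

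Next I would observe that $A'(t)\,(A(t-1) + A(t)) = \bigl(A(t) - A(t-1)\bigr)\bigl(A(t) + A(t-1)\bigr) = A(t)^2 - A(t-1)^2$, so the sum over $t$ telescopes to $A(T)^2 - A(0)^2$. Plugging in $A(0) = 0$ and $A(T) = \sum_{i=1}^n V_i$ gives exactly $\tfrac12\bigl(\sum_{i=1}^n V_i\bigr)^2$, completing the proof.

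There is no real obstacle here; the only ``insight'' required is that averaging the permanent-impact contribution over the previous and current step is precisely what turns the per-step term into a difference of squares of the aggregate position, which then telescopes independently of how the individual players split up the total volume. (By contrast, with the unaveraged $c^{\text{perm}}$ one gets $\sum_t A'(t) A(t-1)$, which does not telescope and depends on the trajectory, consistent with the earlier example showing $c^{\text{perm}}$ is not constant-sum.)
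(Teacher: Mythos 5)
Your proof is correct and follows essentially the same route as the paper: both sum over players, factor out the aggregate holdings $\sum_j (a_j(t-1)+a_j(t))$, recognize the per-step term as a difference of squares of the total position, and telescope using the boundary conditions. Introducing the explicit aggregate process $A(t)$ is just a notational repackaging of the paper's computation.
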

\ifarxiv
\begin{proof}
    By expanding out $a_i'(t)$ for every $i$ and rearranging the summations, we compute:
    \begin{align*}
        \sum_{i=1}^n \permmean{}(a_i,a_{-i})
        &= \sum_{i=1}^n \frac{1}{2} \sum_{t=1}^T a'_i(t) \sum_{j=1}^n (a_j(t-1) + a_j(t)) \\
        &= \frac{1}{2} \sum_{t=1}^T \sum_{i=1}^n  (a_i(t) - a_i(t-1)) \sum_{j=1}^n (a_j(t-1) + a_j(t)) \\
        &= \frac{1}{2} \sum_{t=1}^T \left(\sum_{i=1}^n  a_i(t) - \sum_{i=1}^n a_i(t-1)\right) \left(\sum_{j=1}^n (a_j(t-1) +  \sum_{j=1}^n a_j(t)\right) \\
        &= \frac{1}{2} \sum_{t=1}^T \left(\sum_{i=1}^n  a_i(t) - \sum_{i=1}^n a_i(t-1)\right) \left(\sum_{i=1}^n (a_i(t-1) +  \sum_{i=1}^n a_i(t)\right)\\
        &= \frac{1}{2} \sum_{t=1}^T \left( \left(  \sum_{i=1}^n  a_i(t) \right)^2 - \left(  \sum_{i=1}^n  a_i(t-1) \right)^2 \right)
    \end{align*}
    where the second-to-last step switches the indexing notation and the last step follows from the identity $(a-b)(a+b)=a^2-b^2$. Now, expanding out the telescoping sum, this quantity equals:
    \begin{align*}
        \frac{1}{2} \left( \left( \sum_{i=1}^n  a_i(T) \right)^2 - \left( \sum_{i=1}^n  a_i(0) \right)^2 \right) = \frac{1}{2} \left( \sum_{i=1}^n  V_i \right)^2
    \end{align*}
    by the boundary conditions $a_i(0)=0$ and $a_i(T) = V_i$ for all $i$. 
\end{proof}
\fi

\ifarxiv
\else
In Appendix \ref{app:not-zero-sum}, we verify that the general game is not constant-sum. 
\fi

\subsection{Decomposition Theorem} \label{subsec:decomp}

Finally, we show that the general cost of trading $c$ can be written as a weighted sum of the temporary impact cost $\temp{}$ and the time-averaged variant of permanent impact cost $\permmean{}$. 
\ifarxiv
This implies that the general setting is a mixture of a potential game---coinciding with the temporary impact regime---and a constant-sum game---coinciding with (roughly) the permanent impact regime.
\fi

\begin{theorem}\label{thm:decomp}
    Fix a market impact coefficient $\kappa$. Then, the cost of trading can be written as:
    \[
    c(a_i, a_{-i}) = \left(1-\frac{\kappa}{2}\right)\cdot\temp{}(a_i, a_{-i}) + \kappa\cdot\permmean{}(a_i, a_{-i})
    \]
    In particular, the classes of potential games and constant-sum games are closed under scalar multiplication, and so by Theorem \ref{thm:temp-potential} and Lemma \ref{lem:perm-avg-zero-sum}, the terms in the decomposition correspond to a potential game and a constant-sum game. 
\end{theorem}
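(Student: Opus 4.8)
The plan is to prove the displayed identity by a short direct calculation, reducing everything to the two cost primitives $\temp{}$ and $\perm{}$. The crucial observation is that $\permmean{}$ is itself an affine combination of $\perm{}$ and $\temp{}$: using the defining relation $a_j(t) = a_j(t-1) + a'_j(t)$, we have $a_j(t-1) + a_j(t) = 2a_j(t-1) + a'_j(t)$, and therefore
\[
\permmean{}(a_i,a_{-i}) = \frac{1}{2}\sum_{t=1}^T a'_i(t)\sum_{j=1}^n\bigl(2a_j(t-1) + a'_j(t)\bigr) = \perm{}(a_i,a_{-i}) + \frac{1}{2}\,\temp{}(a_i,a_{-i}).
\]
Rearranging gives $\perm{} = \permmean{} - \tfrac{1}{2}\,\temp{}$. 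Substituting this into the definition $c(a_i,a_{-i}) = \temp{}(a_i,a_{-i}) + \kappa\,\perm{}(a_i,a_{-i})$ and collecting terms yields $c = \bigl(1 - \tfrac{\kappa}{2}\bigr)\temp{} + \kappa\,\permmean{}$, which is exactly the claim.

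For the second assertion of the theorem I would invoke the relevant closure properties explicitly. If $\phi$ is an exact potential for a game with costs $c_i$, then for any scalar $\lambda$ the function $\lambda\phi$ is an exact potential for the game with costs $\lambda c_i$, since the defining telescoping identity $\phi(b_i,a_{-i}) - \phi(a_i,a_{-i}) = c_i(b_i,a_{-i}) - c_i(a_i,a_{-i})$ scales linearly on both sides. Applying this to $\lambda = 1 - \tfrac{\kappa}{2}$ and the potential $\phi$ of Theorem \ref{thm:temp-potential} shows that $\bigl(1-\tfrac{\kappa}{2}\bigr)\temp{}$ defines a potential game. Similarly, if $\sum_i c_i$ is constant over all action profiles then so is $\sum_i \lambda c_i$; by Lemma \ref{lem:perm-avg-zero-sum}, $\kappa\,\permmean{}$ is then constant-sum with constant $\tfrac{\kappa}{2}\bigl(\sum_i V_i\bigr)^2$.

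I do not expect a genuine obstacle: the content is a one-line bookkeeping identity together with the (immediate) behavior of potential and constant-sum games under scalar multiplication. The only minor point worth flagging is that $1 - \tfrac{\kappa}{2}$ is negative when $\kappa > 2$; this is harmless, since the definition of an exact potential game imposes no sign constraint on the potential function, so a negatively scaled potential still certifies the potential-game structure.
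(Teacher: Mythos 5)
Your proposal is correct and is essentially the paper's own argument: the paper expands $c = \temp{} + \kappa\,\perm{}$ and adds and subtracts $\tfrac{\kappa}{2}a'_i(t)\sum_j a'_j(t)$ terms, which is the same bookkeeping as your intermediate identity $\permmean{} = \perm{} + \tfrac{1}{2}\temp{}$ followed by substitution, just organized slightly differently. Your explicit verification of the scalar-multiplication closure (and the remark that a negative coefficient for $\kappa>2$ is harmless) is a fine elaboration of what the paper simply asserts.
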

\ifarxiv
\begin{proof}
    We compute:
    \begin{align*}
        c(a_i, a_{-i}) &= \temp{}(a_i,a_{-i}) + \kappa \cdot \perm{}(a_i,a_{-i}) \\
        &= \sum_{t=1}^T a'_i(t) \sum_{j=1}^n a'_j(t) + \kappa \sum_{t=1}^T a'_i(t) \sum_{j=1}^n a_j(t-1)  \\
        &= \sum_{t=1}^T a'_i(t) \sum_{j=1}^n a'_j(t) \\ & \ \ \ + \sum_{t=1}^T \left(\frac{\kappa}{2} a'_i(t) \sum_{j=1}^n a_j(t-1) + \frac{\kappa}{2} a'_i(t) \sum_{j=1}^n a_j'(t) - \frac{\kappa}{2} a'_i(t) \sum_{j=1}^n a_j'(t) + \frac{\kappa}{2} a'_i(t) \sum_{j=1}^n a_j(t-1) \right) \\
        &= \sum_{t=1}^T a'_i(t) \sum_{j=1}^n a'_j(t) + \sum_{t=1}^T \left(\frac{\kappa}{2} a'_i(t) \sum_{j=1}^n a_j(t-1) + \frac{\kappa}{2} a'_i(t) \sum_{j=1}^n a_j(t) - \frac{\kappa}{2} a'_i(t) \sum_{j=1}^n a'_j(t) \right) \\
        &= \left(1-\frac{\kappa}{2}\right) \sum_{t=1}^T a'_i(t) \sum_{j=1}^n a'_j(t) + \frac{\kappa}{2} \sum_{t=1}^T a'_i(t) \sum_{j=1}^n (a_j(t-1) + a_j(t)) \\
        &= \left(1-\frac{\kappa}{2}\right)\cdot\temp{}(a_i,a_{-i}) + \kappa\cdot\permmean{}(a_i,a_{-i})
    \end{align*}
    as desired. 
\end{proof}
\fi

\ifarxiv
\else
This decomposition shows that the basic structure of the trading game differs with underlying market impact\footnote{Empirical studies suggest that the ratio of temporary to permanent impact varies significantly in markets \citep{sadka, carlin}; our result highlights how incentives can vary with this ratio.}. Most saliently, when $\kappa=0$ (i.e. when players face only temporary impact), the game is a potential game, and so simple best-response dynamics converge to a pure Nash equilibrium. When $\kappa=2$ (i.e. the contribution of permanent impact is twice that of temporary impact), the game is constant-sum. In this case, the empirical history of no-regret dynamics converges to a mixed Nash equilibrium \citep{Cesa-Bianchi_Lugosi_2006} (see Section \ref{sec:no-regret-dynamics} for more). For all other $\kappa$, the game is a weighted mixture of a potential game and a constant-sum game; the value of $\kappa$ determines its proximity to either. Later on, we will see that this basic structure is reflected in interesting ways in our experimental evaluations.
\fi




\section{Efficient Equilibria Computation in the General Game}\label{sec:no-regret-dynamics}

We have just established that the general trading game admits a meaningful decomposition; however, this has no immediate implications for tractable equilibria computation. In this section, we relax our goal to the computation of CCE, a broader class of equilibria that encapsulates Nash equilibria. We show how to efficiently implement \textit{no-regret dynamics}, for which the empirical history of play converges to a joint distribution that is an (approximate) CCE. In no-regret dynamics, each player chooses a sequence of (randomized) actions by running a regret-minimizing algorithm over $R$ rounds.

\begin{definition}[Regret]
    The (average) regret of a player $i$ who chooses a sequence of strategies $a_{i,1},...,a_{i,R} \in \cA_i$ is defined as:
    \[
    Reg_i(R) = \max_{a_i\in\cA_i} \frac{1}{R} \sum_{r=1}^R (c(a_{i,r}, a_{-i,r}) - c(a_i, a_{-i,r}))
    \]
    where $a_{-i,r}$ is the profile of strategies chosen by players excluding $i$ at round $r$.
\end{definition}

No regret encodes the idea that if a player looks back at the history of play, they would find that no deviation to a fixed strategy would have improved their cost. It thus follows that if each player has regret bounded by $\eps$, the time-averaged, empirical distribution of play is an $\eps$-approximate CCE. 

The difficulty in our setting is producing a no-regret algorithm that is computationally efficient, given that the space of trading strategies is exponentially large. 
\ifarxiv
A classic no-regret algorithm is the Multiplicative Weights algorithm (see~\citep{AroraMW} for an extensive survey), which achieves regret decreasing at a rate of $O(\sqrt{\ln|\cA|/R})$. Thus, if all players run their own Multiplicative Weights algorithm, the empirical distribution of play converges to an $\eps$-approximate CCE after $O(\ln|\cA|/\eps^2)$ rounds. 
\else
A classic no-regret algorithm is the Multiplicative Weights algorithm (see~\citep{AroraMW} for an extensive survey).
\fi
However, running Multiplicative Weights will be computationally expensive for our problem: it maintains an explicit distribution over the strategy space and so requires per-round computation that is linear in the number of strategies---which, for our setting, is exponential in $T$. 

We show that an instantiation of the Follow The Perturbed Leader (FTPL) algorithm \citep{kalai03efficient} can be used to implement no-regret dynamics efficiently---the number of rounds required to reach an approximate equilibrium and the per-round runtime is polynomial in our problem parameters. We note that FTPL maintains no-regret guarantees in adversarial environments, so while our main focus is on the joint execution of FTPL by all players, our instantiation of FTPL can be used by a \textit{single} player to obtain vanishing regret in \textit{any} environment, where other players could be acting adversarially. 
\ifarxiv
Next we introduce FTPL and state its regret guarantees.
\else
Our result is:
\fi

\ifarxiv
\paragraph{FTPL preliminaries.} 
FTPL (Algorithm \ref{alg:ftpl}) is a no-regret algorithm for \textit{online linear optimization} (OLO) problems, defined by a learner with strategy space $\cF\subseteq\R^d$ and an adversary with strategy space $\cH\subseteq\R^d$. At every round $r \in [R]$, the learner chooses a strategy $f_r\in\Delta\cF$ and the adversary chooses a strategy $h_r\in\cH$. The learner then observes $h_r$ and incurs cost $\<f_r, h_r\>$. 

For every round $r$, let $H_r = \sum_{s=1}^{r-1} h_s$ be the cumulative cost observed so far. To run FTPL, the learner best responds to a noisy version of $H_r$. In particular, the learner optimizes over the cost $\<f, H_r+N_r\>$, where $N_r$ is chosen uniformly at random from $[0,\eta]^d$. Rephrased, the learner samples from a distribution given by the randomized best response. Crucially, this distribution is maintained only implicitly (unlike e.g. Multiplicative Weights); for low-dimensional problems, then, the brunt of the computation lies in the optimization/best response step.

\begin{theorem}\citep{kalai03efficient}\label{thm:ftpl}
    Let $D = \max_{f,f'\in\cF}\|f-f'\|_1, M =\max_{h\in\cH}\|h\|_1,$ and $C = \max_{f\in\cF,h\in\cH} |\<f, h\>|$. Against any adversary's choice of strategies $h_1,...,h_R$, FTPL (Algorithm \ref{alg:ftpl}) with noise parameter $\eta = \sqrt{\frac{2MCR}{D}}$ obtains regret:
    $$
    \max_{f\in\cF} \frac{1}{R} \sum_{r=1}^R \left( \E[\<f_r, h_r\>] - \<f, h_r)\> \right) \leq 2\sqrt{\frac{DMC}{R}}
    $$
    where the expectation is taken over the noise vectors.
\end{theorem}

\paragraph{An OLO formulation of the trading game.}
In order to implement FTPL, we must first cast our problem as an instance of OLO. We will take the perspective of player $i$ (who corresponds to the learner) computing a no regret strategy against other players (who, in aggregate, correspond to the adversary). Although players' cost functions are \textit{not} linear in their actions, we show how to ``linearize" the problem by constructing higher-dimensional representations of the strategy spaces for the learner and adversary. Broadly speaking, we use the fact that the cost function is linear in the opponents' strategies $a_{-i}$ and introduce dimensions to represent nonlinearities in $a_i$---in particular the product relationships. More specifically, the learner will play over the space $\cF_{\cA_i} = \{f(a_i)\}_{a_i\in\cA_i} \subseteq \R^{2T}$ and the adversary will play over the space $\cH_{\cA_{-i}} = \{h(a_{-i})\}_{a_{-i}\in \cA_{-i}} \subseteq \R^{2T}$, where $h$ and $f$ apply the following transformations:
\[
f(a_i) = \begin{bmatrix}
a'_i(1) \\
\vdots \\
a'_i(t) \\
\vdots \\
a'_i(T) \\
a'_i(1)(a'_i(1) + \kappa a_i(0)) \\
\vdots \\
a'_i(t)(a'_i(t) + \kappa a_i(t-1)) \\
\vdots \\
a'_i(T)(a'_i(T) + \kappa a_i(T-1))
\end{bmatrix} 
\hspace{1em}, \hspace{1em}
h(a_{-i}) = 
\begin{bmatrix}
\sum_{j\neq i} a'_j(1) + \kappa a_j(0) \\
\vdots \\
\sum_{j\neq i} a'_j(t) + \kappa a_j(t-1) \\
\vdots \\
\sum_{j\neq i} a'_j(T) + \kappa a_j(T-1) \\
1 \\
\vdots \\
1
\end{bmatrix}
\]
Note that the dimension of the strategy space is only larger by a factor of 2. The transformation is cost-preserving: the learner's cost of playing $f(a_i)$ against $h(a_{-i})$ in the OLO problem matches their cost of playing $a_i$ against $a_{-i}$ in the trading game:
\begin{align*}
    \<f(a_i), h(a_{-i})\> &= \sum_{t=1}^T a_i'(t) \sum_{j\neq i} (a'_j(t) + \kappa a_j(t-1))  + \sum_{t=1}^T  a'_i(t)(a'_i(t) + \kappa a_i(t-1))\\
    &= \sum_{t=1}^T \left(a'_i(t) \sum_{j=1}^n a'_j(t) + \kappa a'_i(t) \sum_{j=1}^n a_j(t-1) \right)\\
    &= c(a_i, a_{-i})
\end{align*}

\begin{algorithm}[H]
    \For{$r=1$ \KwTo $R$}{
        Let $H_r = \sum_{s=1}^{r-1} h_s$ be the cumulative cost so far\;
        Let $N_r \sim [0, \eta]^{d}$ be a noise vector chosen uniformly at random\;
        Choose the strategy $f_{r} = \argmin_{f\in\cF} \<f, H_r + N_r\>$\;
        Observe $h_r$\;
    }
    \caption{FTPL}
    \label{alg:ftpl}
\end{algorithm}


With this instantiation in hand, we can now appeal to the guarantees of FTPL in the following corollary to Theorem \ref{thm:ftpl}. 
\begin{corollary}\label{cor:ftpl}
    In our instantiation, the quantities $D,M,$ and $C$ are polynomial in $n$, $T$, and $\theta$, where $\theta = \max\{|\theta_L|, |\theta_U|\}$. In particular, we have that $D\leq O(\theta^2 T^2)$, $M\leq O((n-1)\theta T^2)$, and $C\leq O(n\theta^2 T^2)$. Plugging this in, FTPL obtains regret bounded by $O\left(\frac{n\theta^{5/2}T^3}{\sqrt{R}}\right)$ in our instantiation.
\end{corollary}

It remains to consider how to solve the (randomized) best response problem of FTPL in our instantiation. Observe that there is a one-to-one correspondence between strategies $a_i\in\cA_i$ and $f(a_i)\in\cF$, and so we can speak interchangeably about choosing strategies $a_i$ and $f(a_i)$. Thus, we can write the best response problem as finding: 
\[
a_{i,r} = \argmin_{a_i\in\cA_i} \<f(a_i), H_r + N_r\>
\]
Using the notation $v^k$ for the $k^{th}$ coordinate of a vector $v$, observe that we can write:
\begin{align*}
    \<f(a_i), H_r + N_r\> &= \sum_{k=1}^{2T} f(a_i)^k (H_r+N_r)^k \\
    &= \sum_{t=1}^T a_i'(t)(H_r+N_r)^t + \sum_{t=1}^T a_i'(t)(a_i'(t)+\kappa a_i(t-1))(H_r+N_r)^{T+t}
\end{align*}
Notice that once we have fixed $H_r$ and $N_r$, the cost at each step is solely a function of $a_i(t-1)$ and $a_i'(t)$. Thus, we can invoke Algorithm \ref{alg:dp-BR} as a subroutine, instantiated with the one-step cost: $$p^t_r(a_i(t-1), a_i'(t)) \coloneqq  a_i'(t)(H_r+N_r)^t + a_i'(t)(a_i'(t)+\kappa a_i(t-1))(H_r+N_r)^{T+t}$$ 
Then, since computing $H_r$ and $N_r$ at every round can be done in time $2T$, the running time of FTPL directly inherits from the guarantees of Algorithm \ref{alg:dp-BR}.

\begin{corollary}\label{cor:ftpl-runtime}
    Our instantiation of FTPL has per-round running time $O(\theta^2T^2)$, where $\theta = \max\{|\theta_L|, |\theta_U|\}$. 
\end{corollary}

\paragraph{No-regret dynamics.} Finally, to implement no-regret dynamics, every player $i\in[n]$ maintains a copy of FTPL (Algorithm \ref{alg:ftpl}). In rounds $r\in[R]$, every player simultaneously draws a strategy $a_{i,r}$ from the distribution maintained by their copy of FTPL (therefore ensuring that each player's randomness is private). Then, every player observes the full action profile $(a_{1,r},...,a_{n,r})$ and updates their copy of FTPL with the cost vector $h(a_{-i,r})$.

\begin{corollary}\label{cor:ftpl-dynamics}
    For every player $i\in[n]$, let $a_{i,1},...,a_{i,R}$ be draws from the distributions maintained by FTPL in no-regret dynamics, set with noise parameter $\eta = nT\sqrt{2\theta R}$, where $\theta = \max\{|\theta_L|, |\theta_U|\}$. 
    Let $\mathbf{D}$ be the empirical distribution over the realized action profiles $\ba_1,...,\ba_R$, where $\ba_r = (a_{1,r},...,a_{n,r})$. Then, $\mathbf{D}$ is an $\eps$-approximate coarse correlated equilibrium after $R = O\left(\frac{n^2\theta^5T^6}{\eps^2}\right)$ rounds of no-regret dynamics, with total per-round running time $O(n\theta^2 T^2)$.
\end{corollary}
\else

\begin{theorem}[Informal version of Corollary \ref{cor:ftpl-dynamics}]
    There is an instantiation of FTPL such that its empirical play in no-regret dynamics is an $\eps$-approximate coarse correlated equilibrium after $R = O\left(\frac{n^2\theta^5T^6}{\eps^2}\right)$ rounds, with total per-round running time $O(n\theta^2 T^2)$. Here $\theta = \max\{|\theta_L|, |\theta_U|\}$. 
\end{theorem}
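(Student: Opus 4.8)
The plan is to take the perspective of a single player $i$, recast their repeated best-response problem against the other $n-1$ players as an instance of online linear optimization (OLO), run Follow The Perturbed Leader (FTPL) on that instance using the dynamic program of Theorem \ref{thm:dp} as the (perturbed) best-response oracle, and then close the loop with the textbook reduction from no-regret learning to coarse correlated equilibria. The only genuinely game-specific work is the first step, since $c(a_i,a_{-i})$ is linear in $a_{-i}$ but \emph{quadratic} in $a_i$ — through the temporary-impact self-term $a_i'(t)^2$ and the permanent-impact cross-term $a_i'(t)\,a_i(t-1)$ — so the cost is not of the form $\langle f,h\rangle$ on the nose. I would lift player $i$'s action to a vector $f(a_i)\in\R^{2T}$ whose first $T$ coordinates are the increments $a_i'(t)$ and whose last $T$ coordinates are the products $a_i'(t)\bigl(a_i'(t)+\kappa\, a_i(t-1)\bigr)$, and lift the opponents' aggregate to $h(a_{-i})\in\R^{2T}$ with first $T$ coordinates $\sum_{j\neq i}\bigl(a_j'(t)+\kappa\, a_j(t-1)\bigr)$ and last $T$ coordinates all equal to $1$. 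A one-line expansion then gives $\langle f(a_i),h(a_{-i})\rangle = c(a_i,a_{-i})$, so the lift is exactly cost-preserving while only doubling the ambient dimension — which is what keeps every FTPL parameter polynomial in $T$.

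With this OLO instance in hand, Theorem \ref{thm:ftpl} applies directly. I would estimate its three quantities using $|a_i'(t)|\le\theta$ and $|a_i(t)|\le \theta T$: this yields $D=\max_{f,f'}\|f-f'\|_1 = O(\theta^2T^2)$, $M=\max_h\|h\|_1 = O(n\theta T^2)$, and $C=\max|\langle f,h\rangle| = O(n\theta^2T^2)$, hence a regret bound of $O\!\bigl(n\theta^{5/2}T^3/\sqrt{R}\bigr)$ with the prescribed noise scale $\eta=\Theta(nT\sqrt{\theta R})$. The remaining worry is the computational cost of FTPL's update, namely $\argmin_{a_i}\langle f(a_i),H_r+N_r\rangle$. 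But once the perturbed cumulative vector $H_r+N_r$ is fixed, this inner product decomposes over time steps into a sum of one-step costs $p_r^t(a_i(t-1),a_i'(t)) = a_i'(t)(H_r+N_r)^t + a_i'(t)\bigl(a_i'(t)+\kappa\, a_i(t-1)\bigr)(H_r+N_r)^{T+t}$, each depending only on $(a_i(t-1),a_i'(t))$, so Theorem \ref{thm:dp} solves it in $O((\theta_U-\theta_L)^2T^2)=O(\theta^2T^2)$ time; maintaining $H_r$ and drawing $N_r$ adds only $O(T)$.

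To finish, have all $n$ players run independent copies of this FTPL, each round drawing a strategy from its own implicit distribution, then observing the full profile $(a_{1,r},\dots,a_{n,r})$ and updating with the cost vector $h(a_{-i,r})$. The standard fact is that if every player's expected regret is at most $\eps$, the empirical distribution over the $R$ realized profiles is an $\eps$-approximate CCE — this is immediate by comparing the definition of $Reg_i(R)$ to the CCE deviation condition and averaging. Choosing $R$ so that $n\theta^{5/2}T^3/\sqrt{R}\le\eps$, i.e.\ $R=O(n^2\theta^5T^6/\eps^2)$, and summing the $n$ per-round DP calls gives a per-round cost of $O(n\theta^2T^2)$, which is exactly the claimed bound (and, as noted, a single player running this FTPL retains the same vanishing-regret guarantee against an arbitrary environment). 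I expect the main obstacle to be the linearization step: one must exhibit a lift that \emph{simultaneously} reproduces the quadratic cost as an inner product, lives in dimension polynomial in $T$ (so that $D,M,C$, and therefore the round count, stay polynomial), and leaves the perturbed best-response problem in the ``sum of one-step costs'' form demanded by Theorem \ref{thm:dp}. The product coordinates $a_i'(t)\bigl(a_i'(t)+\kappa\, a_i(t-1)\bigr)$ are chosen precisely so that the per-step cost still depends only on $(a_i(t-1),a_i'(t))$; verifying this compatibility is the crux, whereas the parameter estimates and the no-regret-to-CCE reduction are routine.
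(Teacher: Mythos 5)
Your proposal is correct and follows essentially the same route as the paper: the identical $2T$-dimensional cost-preserving lift ($f(a_i)$ with increment coordinates and the product coordinates $a_i'(t)(a_i'(t)+\kappa a_i(t-1))$, $h(a_{-i})$ with aggregated opponent terms and trailing ones), the same bounds $D=O(\theta^2T^2)$, $M=O(n\theta T^2)$, $C=O(n\theta^2T^2)$ yielding regret $O(n\theta^{5/2}T^3/\sqrt{R})$, the same use of Algorithm \ref{alg:dp-BR} on the per-step decomposition of $\langle f(a_i),H_r+N_r\rangle$ for the perturbed best response, and the same no-regret-to-CCE reduction giving $R=O(n^2\theta^5T^6/\eps^2)$ with per-round cost $O(n\theta^2T^2)$. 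No gaps.
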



\textbf{Outline of the solution.} FTPL (Algorithm \ref{alg:ftpl}) is a no-regret algorithm for \textit{online linear optimization} problems, defined by a learner with strategy space $\cF\subseteq\R^d$ and an adversary with strategy space $\cH\subseteq\R^d$. At every round $r \in [R]$, the learner chooses a strategy $f_r\in\Delta\cF$ and the adversary chooses a strategy $h_r\in\cH$. The learner then observes $h_r$ and incurs cost $\<f_r, h_r\>$. 

In order to implement FTPL, we must first cast our problem as a linear optimization problem. We will take the perspective of a single player $i$ (who corresponds to the learner) computing a no regret strategy against other players (who, in aggregate, correspond to the adversary). Although players' cost functions are \textit{not} a priori linear in their strategies, we show how to ``linearize" the problem by constructing higher-dimensional representations of the strategy spaces: $\cF \subseteq \R^{2T}$ for the learner and $\cH \subseteq \R^{2T}$ for the adversary (note that the dimension is only larger by a factor of 2). In particular, the transformation is cost-preserving: for any $a_i\in\cA_i$ and $a_{-i}\in\cA_{-i}$, there exist $f \in\cF$ and $h\in\cH$ such that $\<f, h\> = c(a_i, a_{-i})$. Broadly speaking, we use the fact that the cost function is linear in the opponents' strategies $a_{-i}$ and introduce dimensions to represent nonlinearities in $a_i$.

FTPL proceeds as follows. For every round $r\in[R]$, let $H_r = \sum_{s=1}^{r-1} h_s$ be the cumulative cost observed so far. At every round, the learner best responds to the noisy cost $\<f, H_r+N_r\>$, where $N_r$ is chosen uniformly at random from $[0,\eta]^d$. Rephrased, the learner samples from a distribution given by the randomized best response. Crucially, this distribution is maintained only implicitly (unlike e.g. Multiplicative Weights); for low-dimensional problems, then, the brunt of the computation lies in the optimization/best response step. We show that Algorithm \ref{alg:dp-BR} can efficiently solve the (randomized) best response problem of FTPL in our instantiation. Therefore the running time of FTPL directly inherits from the guarantees of Algorithm \ref{alg:dp-BR}. 
Furthermore, the regret bound follows from the guarantees of FTPL under our instantiation. We present the full argument in Appendix \ref{app:FTPL-details}. 
\fi

In Appendix \ref{app:swap-regret}, we investigate computation of approximate CE via no-\textit{swap}-regret dynamics.

\section{Experiments}\label{sec:experiments}
\ifarxiv 
In light of our theoretical results, it is natural to ask how quickly no-regret dynamics converge to an approximate CCE in actual implementation, and what the approximate equilibria look like --- in particular, are they ``close" to the stronger notion of Nash equilibria? We empirically investigate these questions under different regimes of market impact, which dictate how ``close" or ``far'" the game is from a potential game and a zero-sum game.
\else
In light of our theoretical results, it is natural to ask what the approximate equilibria found by FTPL look like. Can we say anything more interesting or specific about these equilibria; in particular, are they ``close" to stronger forms of equilibria like Nash equilibria? We empirically investigate these questions below under different regimes of market impact, which dictate how ``close" or ``far'" the game is from a potential game and a zero-sum game. In Appendix \ref{app:ftpl-convergence}, we also test how quickly no-regret dynamics converge to an approximate CCE in actual implementation---here, we find that FTPL converges much quicker than our theory suggests.
\fi
The code can be found \href{https://github.com/mirahshi/strategic-trading/}{here}.

\ifarxiv
\paragraph{Parameter settings.} 
\else
\textbf{Parameter settings.} 
\fi
We implement no-regret dynamics between two players using our instantiation of FTPL as described in Section \ref{sec:no-regret-dynamics}. Throughout, we will fix the setting of $T=5, V_1=V_2=10, \theta_L=-5, \theta_U=5$ (thus both players have 5 time periods in which to acquire a net long position of 10 shares, and are able to buy or (short) sell up to 5 shares at each step)
while varying the market impact coefficient $\kappa$. For each setting of $\kappa$ we execute 100 runs of no-regret dynamics, each consisting of 2500 rounds. Recall that FTPL takes in an additional noise parameter $\eta$. Using the theoretical guideline of $\eta \approx \sqrt{\text{number of rounds}}$, we choose $\eta = 50$. 

\ifarxiv
\subsection{Convergence Rate}
First we examine how regret evolves as no-regret dynamics progresses, in order to evaluate the speed of convergence in our implementation. In Figures \ref{fig:reg-cumulative} and \ref{fig:reg-avg}, we show cumulative and average/per-round regret (respectively) as a function of rounds of no-regret dynamics for different settings of $\kappa$. We find that average regret converges to 0 (and so the empirical distribution converges to a coarse correlated equilibria) more rapidly that our theory suggests; while our asymptotic convergence rates scale as $O(T^6/\eps^2)$, we see that average regret (i.e. distance to coarse correlated equilibria) flattens out after 500-1000 rounds for all settings of $\kappa$. 

In Figure \ref{fig:reg-cumulative}, we see that regret behaves somewhat differently over the course of FTPL for different $\kappa$. Most notably, for $\kappa=2$, we see that regret oscillates. As a whole, as $\kappa$ increases, the shape of regret transitions from quickly flattening out, to oscillating, to quickly flattening out again. 
However, the individual trajectories at small $\kappa$ are quite smooth, while behavior becomes
more volatile at larger $\kappa$.
These findings reflect changes in the game's underlying structure --- as we saw in Section \ref{sec:decomposition}, the game morphs from being a potential game (at $\kappa=0$) to a constant-sum game (at $\kappa=2$).

\begin{figure}
\centering
\begin{tabular}{cccc}
& \includegraphics[width=45mm,trim={0mm 11mm 0mm 0},clip]{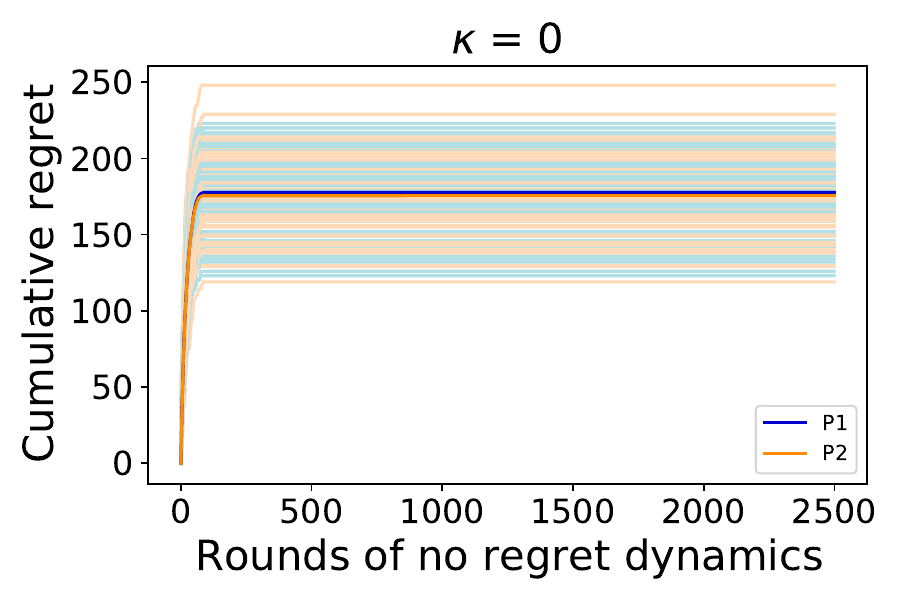} & 
\hspace{2pt}
\includegraphics[width=40mm,trim={11mm 11mm 3mm 0},clip]{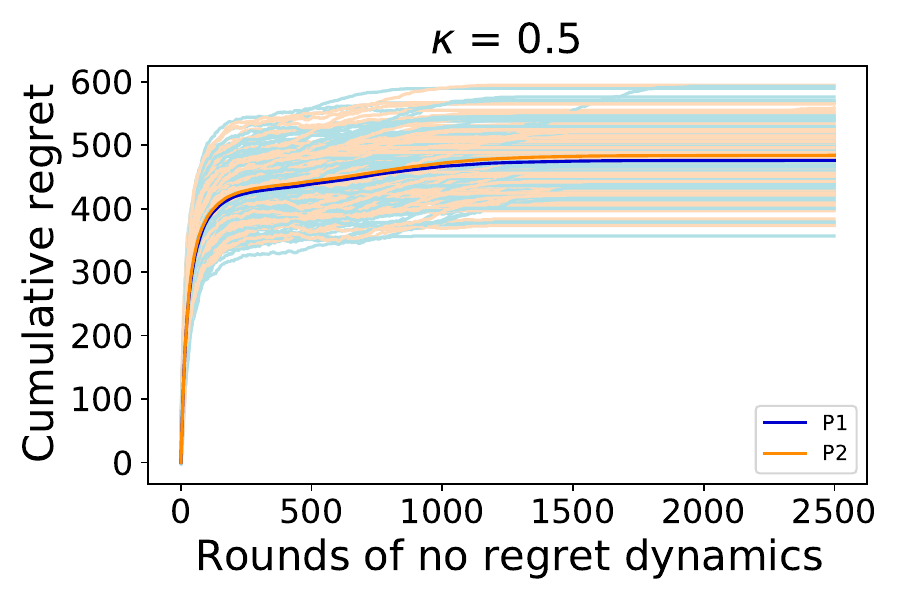}  
& \includegraphics[width=40mm,trim={11mm 11mm 3mm 0},clip]{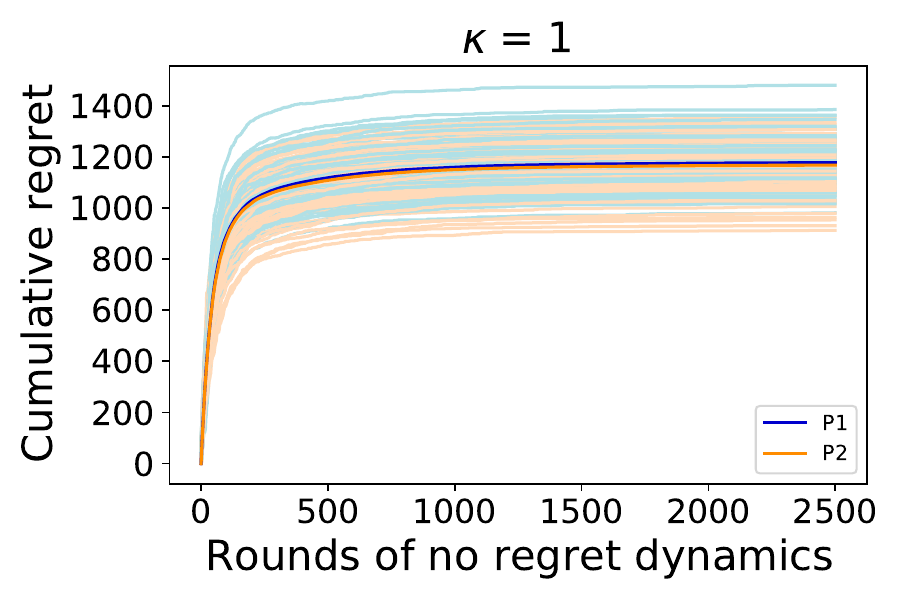} \\ & \includegraphics[width=44mm,trim={0 11mm 3mm 0},clip]{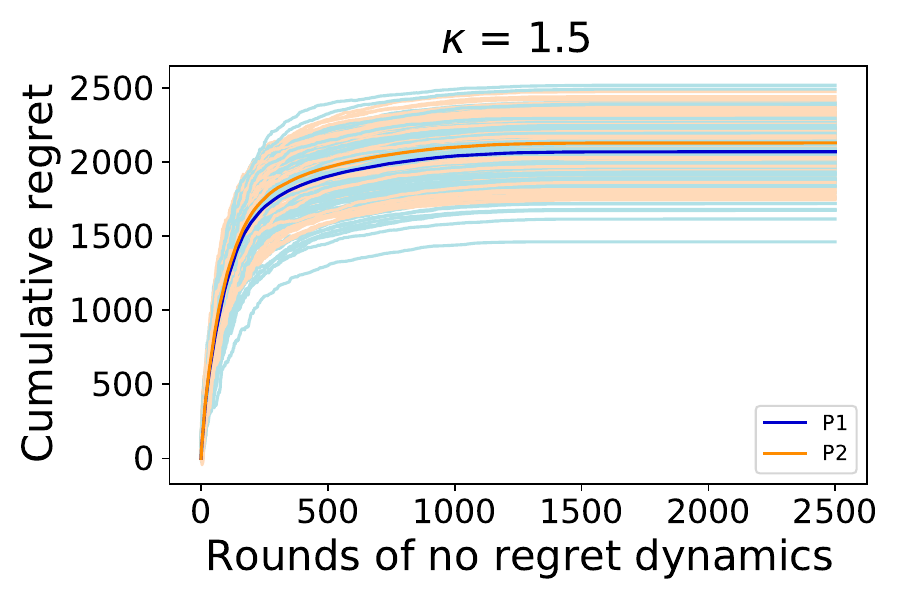} 
& \includegraphics[width=40mm,trim={11mm 11mm 3mm 0},clip]{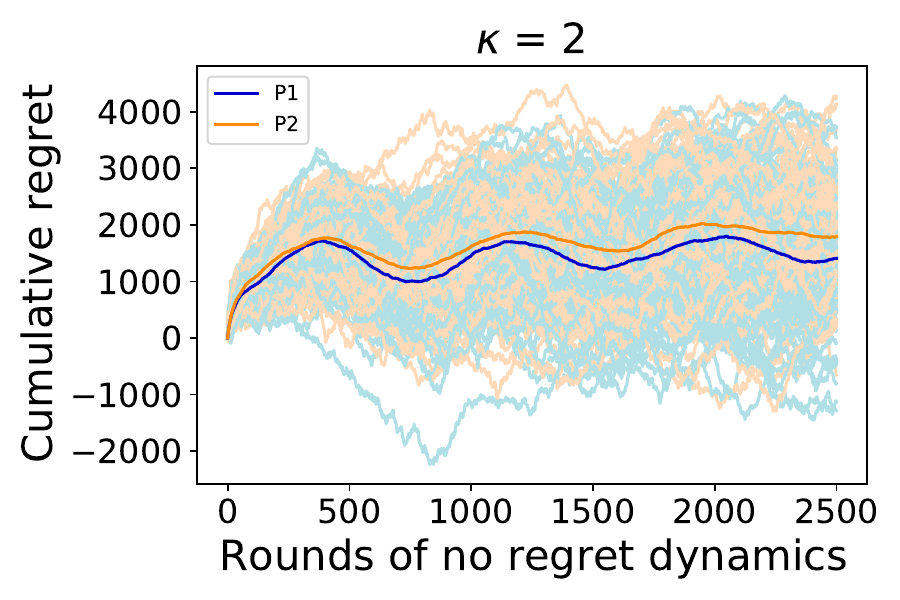} & \includegraphics[width=40mm,trim={11mm 11mm 3mm 0},clip]{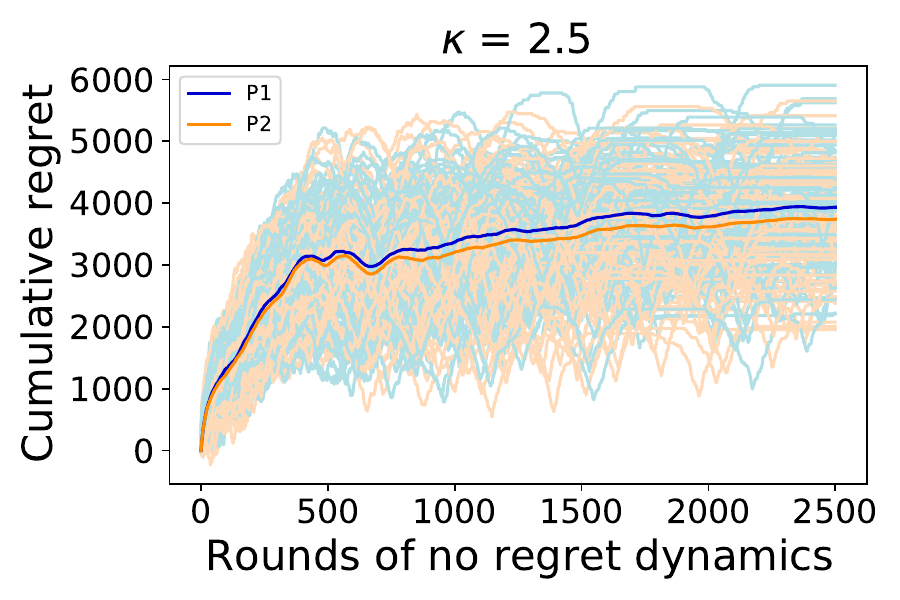} \\
& \includegraphics[width=44mm,trim={0mm 0mm 3mm 0},clip]{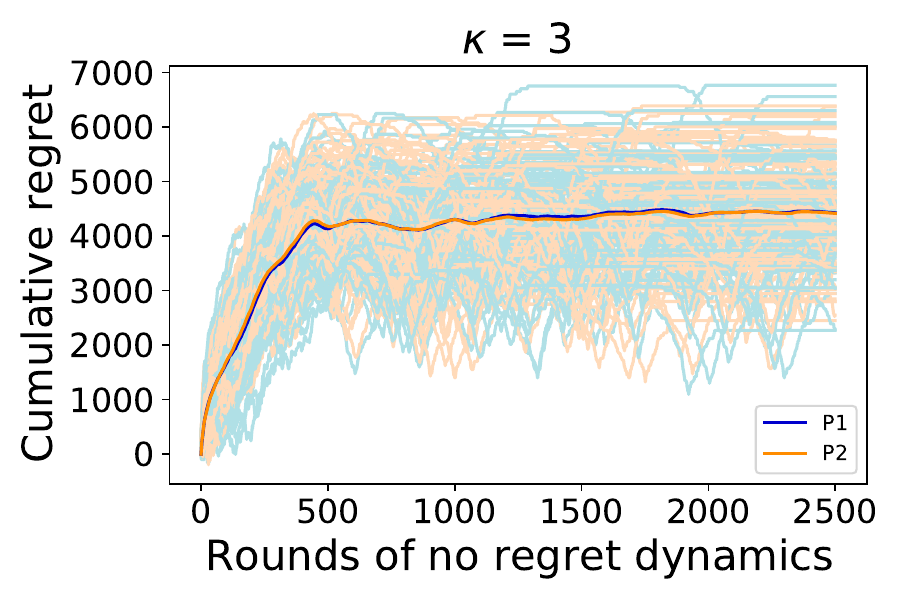} &  \includegraphics[width=40mm,trim={11mm 0 3mm 0},clip]{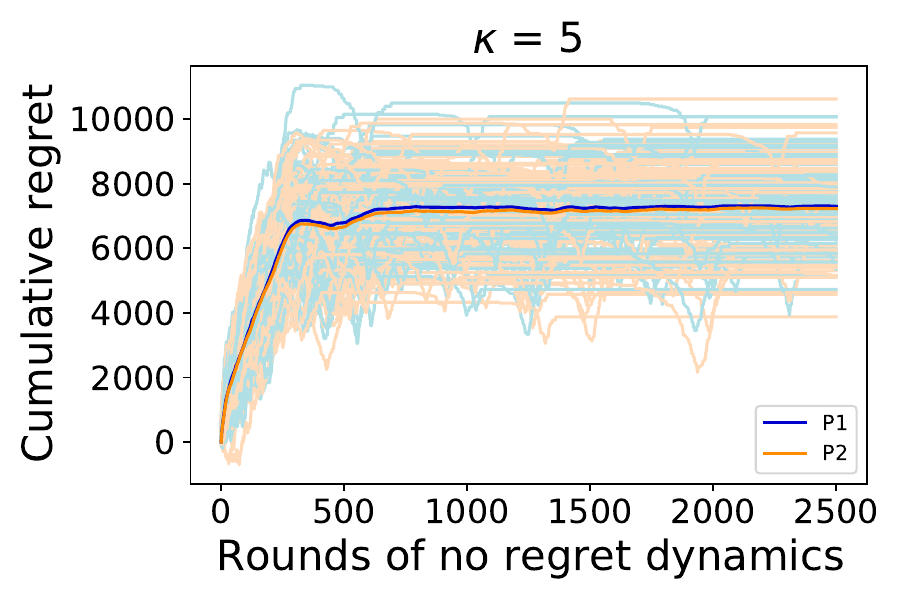} & \includegraphics[width=40mm,trim={11mm 0 3mm 0},clip]{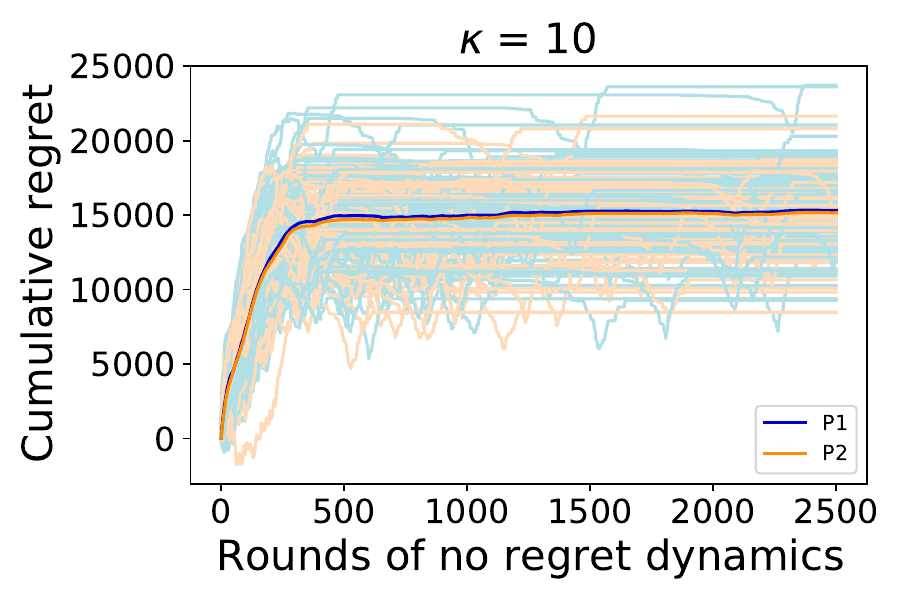} \\
\end{tabular}
\caption{Cumulative regrets of players 1 and 2 for varying $\kappa$, averaged across 100 runs of no-regret dynamics. Faint lines represent individual runs, dark lines represent averages.}
\label{fig:reg-cumulative}
\end{figure}

\begin{figure}
\centering
\begin{tabular}{cccc}
& \includegraphics[width=45mm,trim={0mm 3mm 0mm 0},clip]{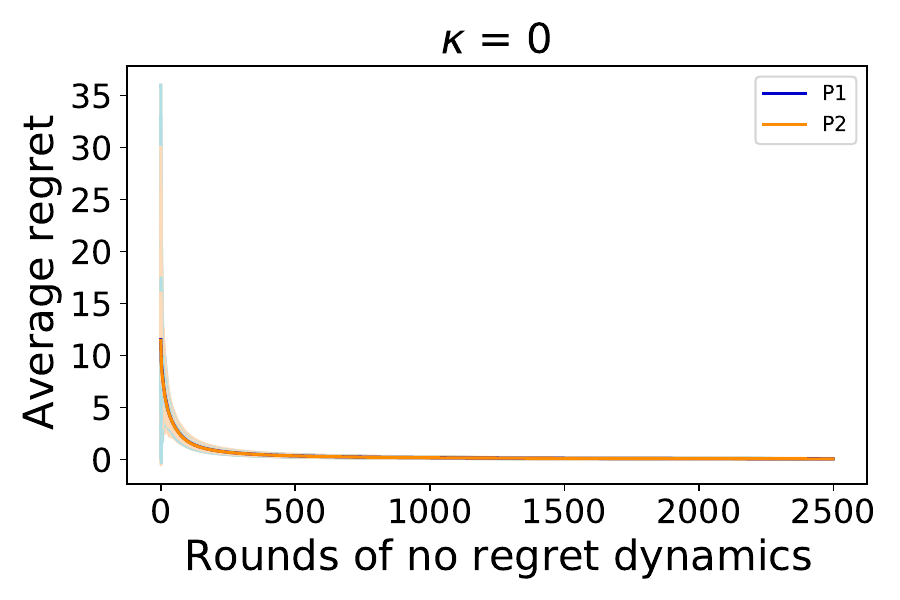} & 
\hspace{2pt}
\includegraphics[width=40mm,trim={11mm 3mm 3mm 0},clip]{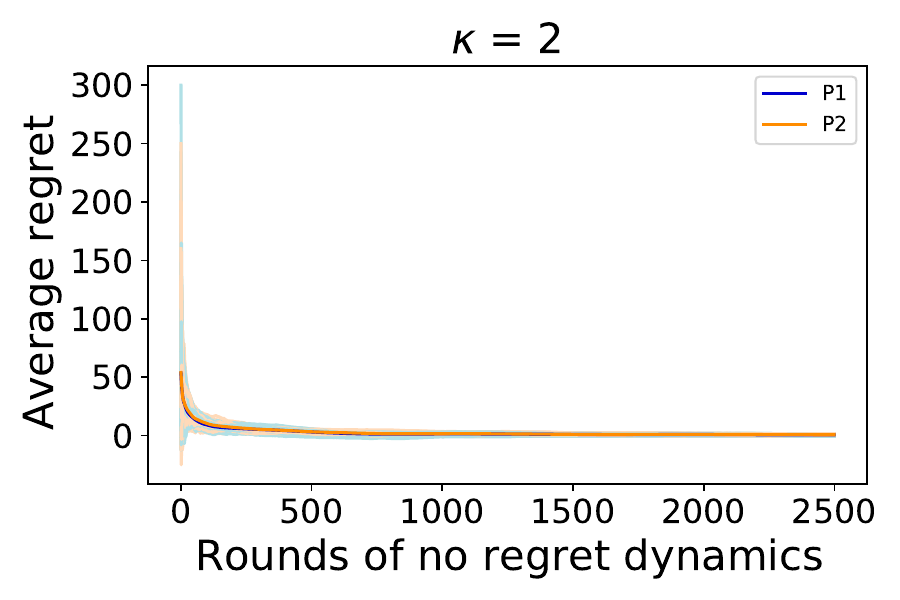}  
& \includegraphics[width=40mm,trim={11mm 3mm 3mm 0},clip]{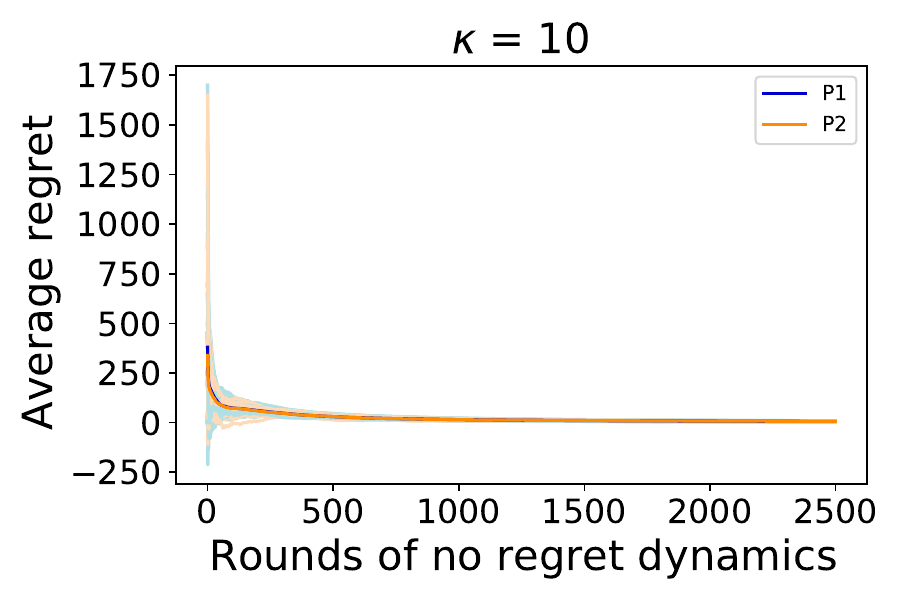} 
\end{tabular}
\caption{Average/per-round regrets of players 1 and 2 for varying $\kappa$, averaged across 100 runs of no-regret dynamics. We exclude other $\kappa$ for concision; the curves (as shown in this manner) look fairly identical.}
\label{fig:reg-avg}
\end{figure}
\fi

\ifarxiv
\subsection{Equilibria Properties}

We have established that no-regret dynamics converge to approximate coarse correlated equilibria fairly quickly. Now we ask: What do the approximate equilibria found by no-regret dynamics look like? Can we say anything more interesting or specific about these equilibria; in particular, are they close to stronger forms of equilibria like Nash equilibria?
\fi

\ifarxiv
\else
\textbf{Equilibria properties.} We investigate how close the learned strategies are to various equilibria.
\fi
Figure \ref{fig:dists-to-eq} answers the questions: how close is the outputted joint distribution to the stronger notion of CE, and how close are the outputted marginal distributions to a mixed Nash equilibrium? We measure these distances directly using the definitions:
\begin{align*}
    \text{Dist. to NE}&= \max_{i=1,2}  \left[\E_{\substack{a_1\sim D_1 \\ a_{2}\sim D_{2}}}[c(a_i, a_{-i})] - \min_{a^*\in \cA}\E_{a_{-i}\sim D_{-i}}[c(a^*, a_{-i})] \right]\\
    \text{Dist. to CE}&= \max_{i=1,2}  \left[ \E_{(a_1, a_{2})\sim \mathbf{D}}[c(a_i,a_{-i})] - \min_{\phi:\cA\to \cA}\E_{(a_1,a_2)\sim \mathbf{D}}[c(\phi(a_i),a_{-i})] \right]
\end{align*}
\begin{wrapfigure}{l}{0.4\textwidth}
\vspace{-2em}
  \begin{center}
    \includegraphics[width=\linewidth]{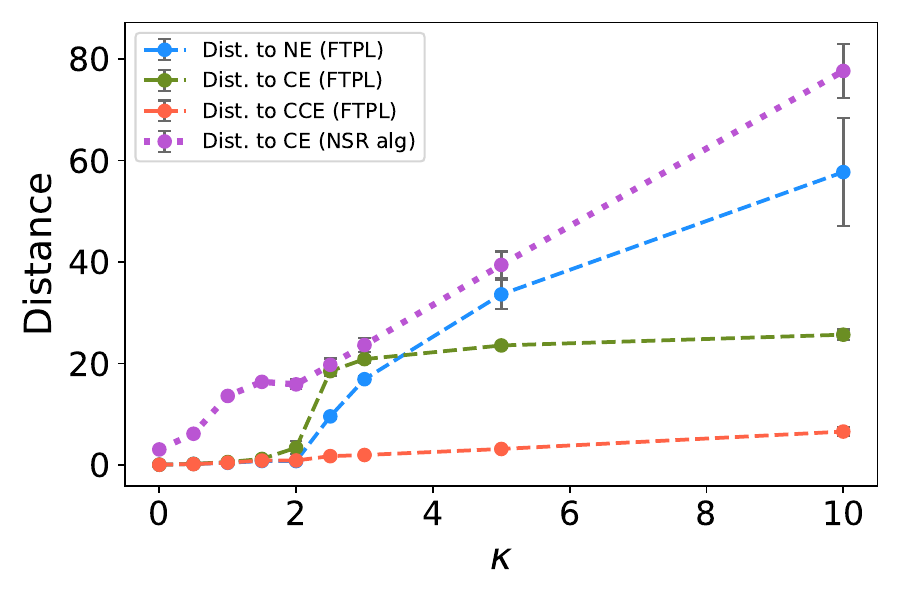}
  \end{center}
  \vspace{-8pt}
  \caption{Distances to Nash equilibria and CE for varying $\kappa$. We use distance to CCE (regret) as a baseline. The plots show means and std. deviations over 100 runs of no-regret dynamics. We also plot the distance to CE found by no-swap-regret dynamics (means and std. deviations over 20 runs of a no-swap-regret algorithm; see Appendix \ref{app:swap-regret}).}
\label{fig:dists-to-eq}
\end{wrapfigure}
Our results show that when the game interpolates between a potential game and zero-sum game ($\kappa\leq 2$), FTPL finds almost-exact CE and mixed Nash equilibria; in fact the distances closely mirror the distances to CCE, which is theoretically guaranteed to be low. 
\ifarxiv
\else
For small $\kappa$, FTPL in fact converges \textit{pure} Nash equilibria fairly quickly (see Figure \ref{fig:actions}, Appendix \ref{app:eq-experiments}). 
\fi
When the game is zero-sum (i.e. when $\kappa=2$), no-regret dynamics is guaranteed to converge to a mixed Nash equilibrium \citep{Cesa-Bianchi_Lugosi_2006}. Thus for $\kappa\geq 2$, the figure presents an intuitive relationship: as the game becomes less like a zero-sum game (i.e. as $\kappa$ increases beyond $2$), the distance to NE found by FTPL dynamics also increases. Meanwhile, for larger $\kappa$, the distance to CE increases sharply then plateaus. 

Figure \ref{fig:dists-to-eq} also shows the distance to to CE of the joint distribution outputted by a recent no-\textit{swap}-regret algorithm \citep{dagan2024external, peng2024fast}, which guarantees convergence to an approximate CE. We see that although FTPL dynamics only guarantees convergence to approximate CCE, its joint play is in fact closer to CE than an implementation of no-swap-regret dynamics for all $\kappa$. We formally present the no-swap-regret algorithm and give implementation details in Appendix \ref{app:swap-regret}.

\ifarxiv
\begin{wrapfigure}{r}{0.35\textwidth}
\vspace{-2em}
  \begin{center}
    \includegraphics[width=\linewidth]{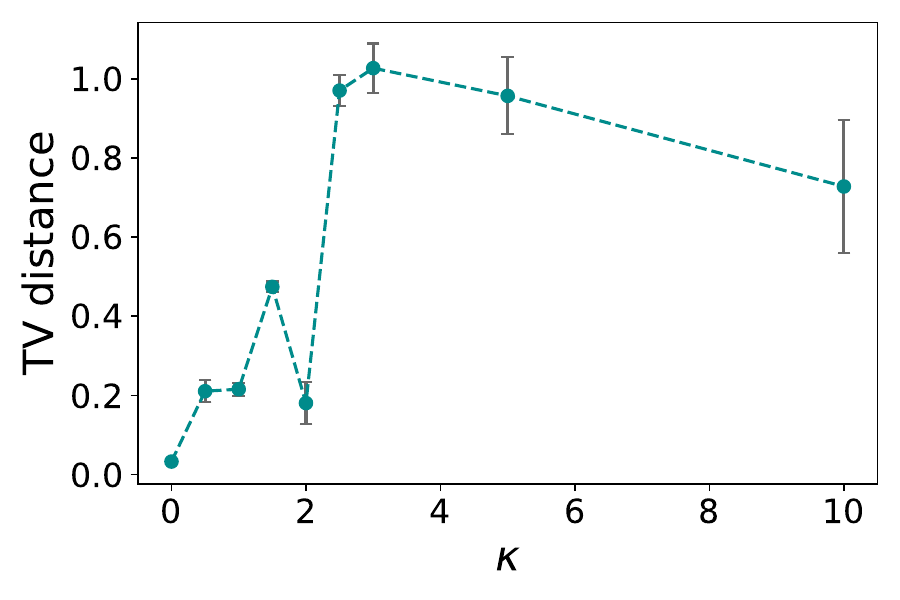}
  \end{center}
  \vspace{-8pt}
  \caption{TV distances between outputted joint distribution and product of marginal distributions, for varying $\kappa$. The plot shows means and std. deviations over 100 runs of no-regret dynamics.}
\label{table:tv-dists}
\end{wrapfigure}
Next we measure how ``correlated" the joint history of play is. Recall that a CCE is a mixed Nash equilibria if its joint distribution can be written as a product distribution---that is, each player's actions can be drawn independently from their own marginal distributions. Since the strategy spaces are not numeric but discrete, combinatorial objects, we cannot measure correlations between player actions in the standard way, but instead will examine
the total variation (TV) distance between the joint distribution returned by no-regret dynamics and the product of each player's marginal distribution. More specifically, let $\mathbf{D}$ be the empirical joint distribution over the realized action pairs $(a_{1,1},a_{2,1}),...,(a_{1,R},a_{2,R})$. Let $D_1$ be the marginal distribution over the first player's actions and $D_2$ be the marginal distribution over the second player's actions. We compute the TV distance between $\mathbf{D}$ and $D_1\times D_2$ as follows:
\[
TV(\mathbf{D},D_1\times D_2) = \sum_{(a_{1},a_{2})\in \text{supp}(\mathbf{D})} \left| \Pr_{\mathbf{D}}[(a_{1},a_{2})] - \Pr_{D_1}[a_1]\cdot\Pr_{D_2}[a_2] \right|
\]
In the sequel, we will refer to this distance informally as ``correlation'' between player strategies.
Figure \ref{table:tv-dists} shows TV distances for varying $\kappa$. For each setting of $\kappa$, we report the average TV distance computed over 100 runs of no-regret dynamics. As expected, TV distance is low for the special case of $\kappa=2$ (no-regret dynamics are known to converge to Nash equilibria in zero/constant-sum games). The TV distance is particularly low for $\kappa=0$ --- in our subsequent results, we see that this can be explained by the fact that when $\kappa=0$, no-regret dynamics finds a \textit{pure} Nash equilibrium fairly quickly (recall that pure Nash equilibria are guaranteed to exist in this regime). For large $\kappa$, the approximate coarse correlated equilibria found by no-regret dynamics exhibit high correlation.
\fi

\ifarxiv
\begin{figure}
\centering
\begin{tabular}{c}
\includegraphics[width=145mm,trim={45mm 8mm 0mm 2mm},clip]{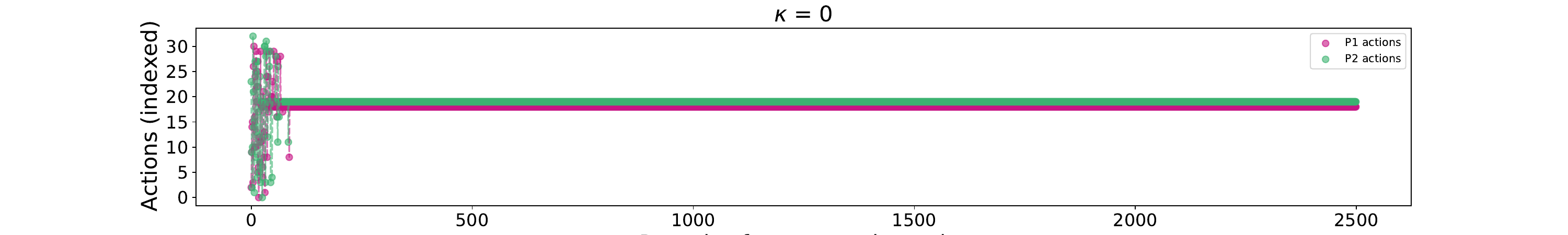} \\  \includegraphics[width=145mm,trim={45mm 8mm 0mm 0},clip]{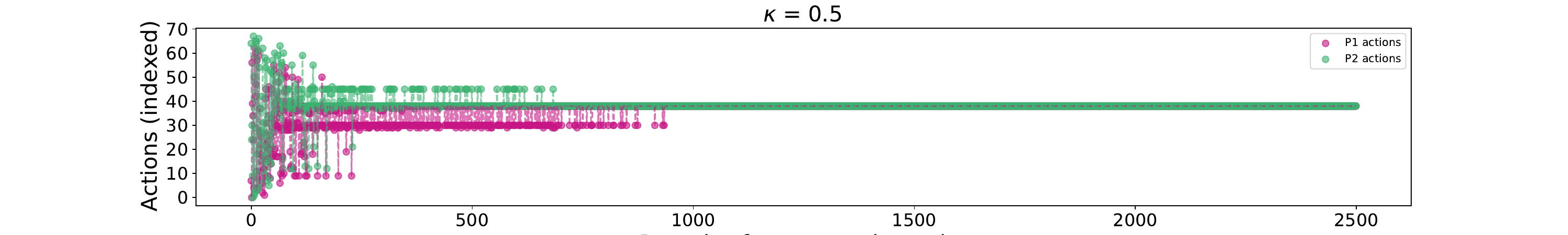} 
\\
\includegraphics[width=145mm,trim={42mm 8mm 0mm 0},clip]{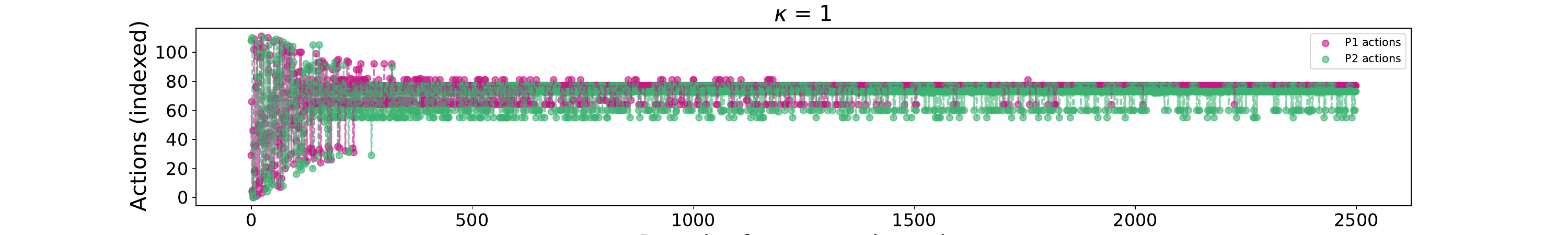}  \\
\includegraphics[width=145mm,trim={42mm 8mm 0mm 0},clip]{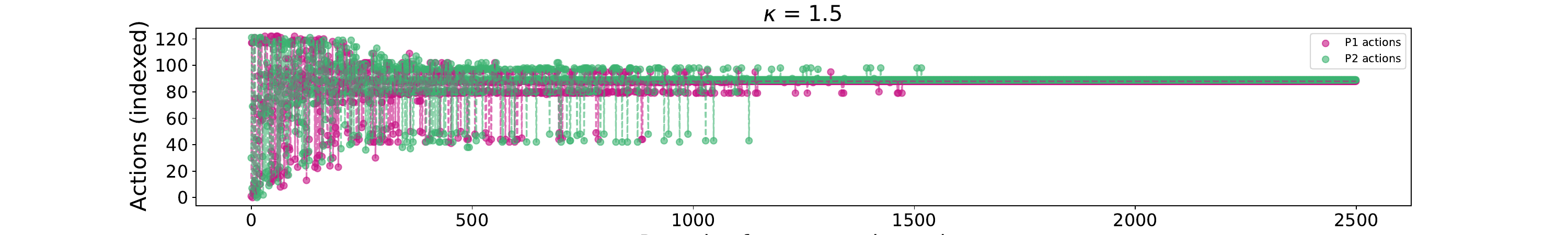}  \\
\includegraphics[width=145mm,trim={45mm 8mm 0mm 0},clip]{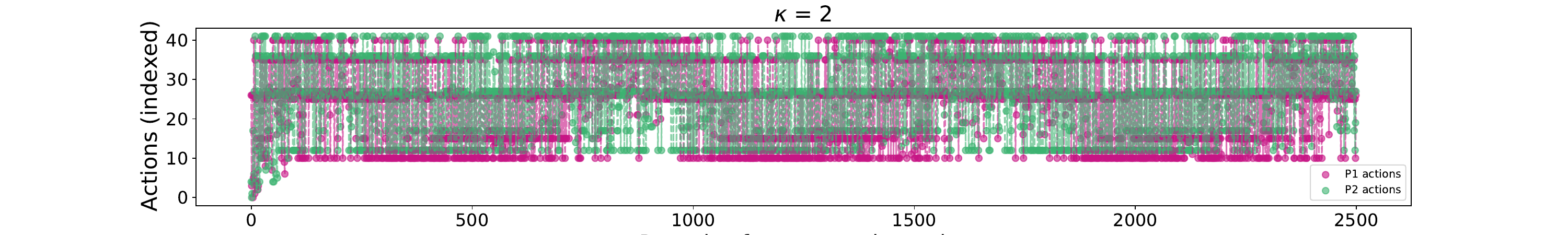}  \\
\includegraphics[width=145mm,trim={45mm 8mm 0mm 0},clip]{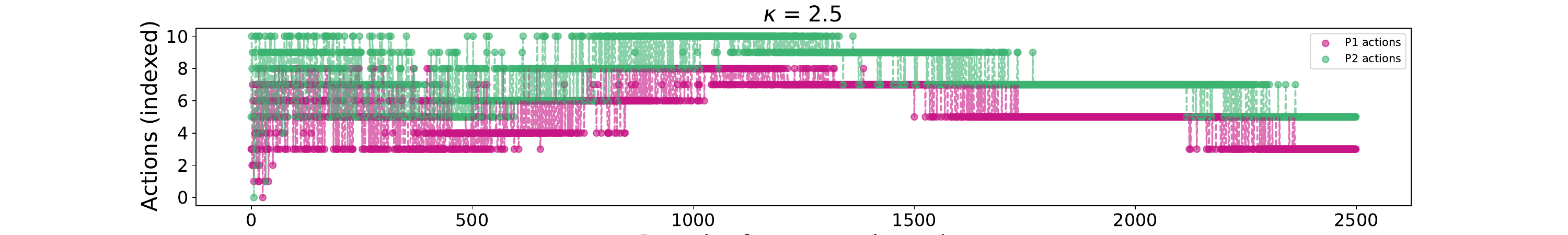}  \\
\includegraphics[width=145mm,trim={45mm 8mm 0mm 0},clip]{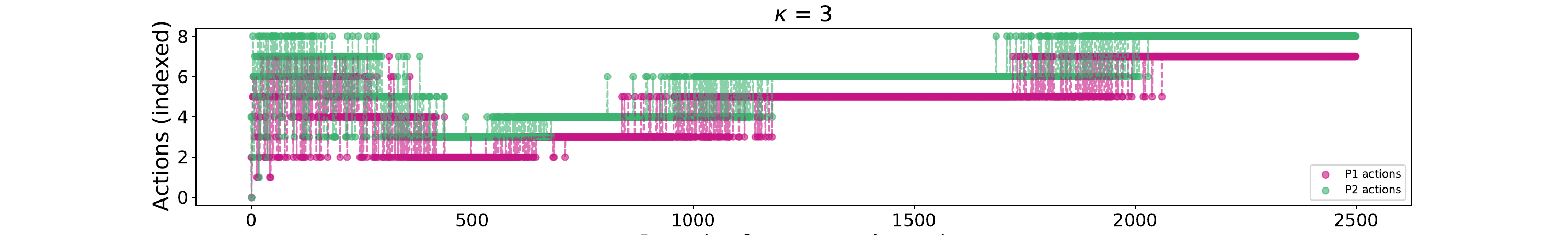}  \\
\includegraphics[width=145mm,trim={45mm 8mm 0mm 0},clip]{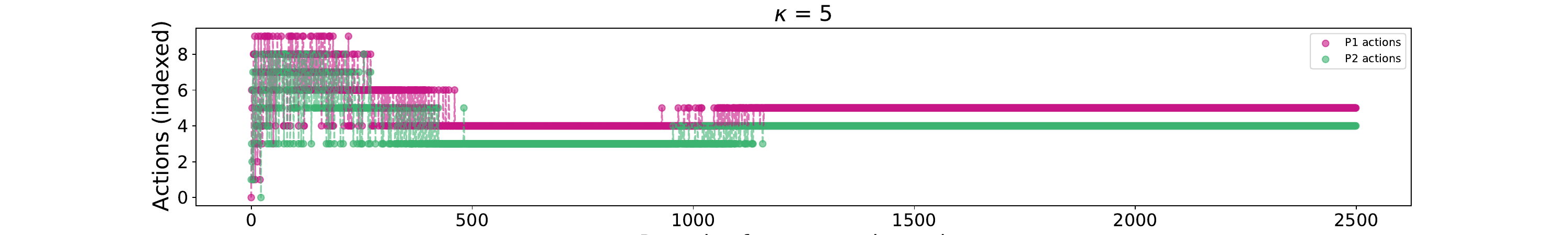}  \\
\includegraphics[width=145mm,trim={45mm 0mm 0mm 0},clip]{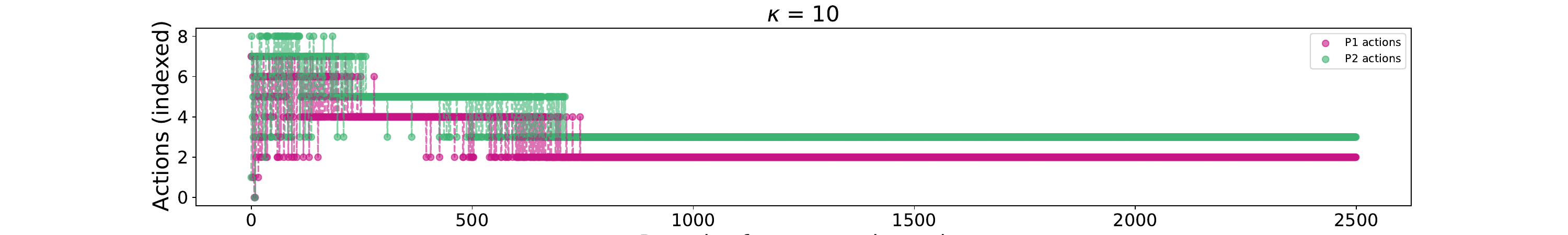}  \\
\end{tabular}
\caption{Actions played by players 1 and 2 over one run of no-regret dynamics for varying $\kappa$ (note: actions are indexed in no particular order and indices might differ from plot to plot; the purpose is to show the progression of actions played)}
\label{fig:actions}
\end{figure}
\fi

\ifarxiv
In Figure \ref{fig:actions}, we take a closer look at the strategy pairs played by FTPL over one run of no-regret dynamics (we note that although we show the outputs of just one run, the behavior is typical across runs). These visualizations help explain some phenomena we find above. First, for small $\kappa$, the equilibria are in fact ``close" to a \textit{pure} Nash equilibrium. Specifically, for $\kappa=0, 0.5,$ and $ 1.5$, no-regret dynamics converges to consistently plays a pure Nash equilibrium after some number of rounds. For $\kappa=0$, this strategy pair is reached fairly quickly. This helps explain why regret stabilizes rapidly and TV distances are low for small $\kappa$. For $\kappa=2$, players oscillate between playing, still, a small subset of actions. For higher $\kappa$, we see oscillatory behavior, albeit longer-lived. We note that for $\kappa=5$ and $10$, the strategy pairs found by the end of 2500 rounds are \textit{not} pure Nash equilibria (even though FTPL might appear to have stabilized), indicating that players might continue to cycle between strategy pairs as no-regret dynamics progresses. 
\fi

\begin{SCfigure}
    \includegraphics[width=55mm,trim={0mm 0mm 0mm 0},clip]{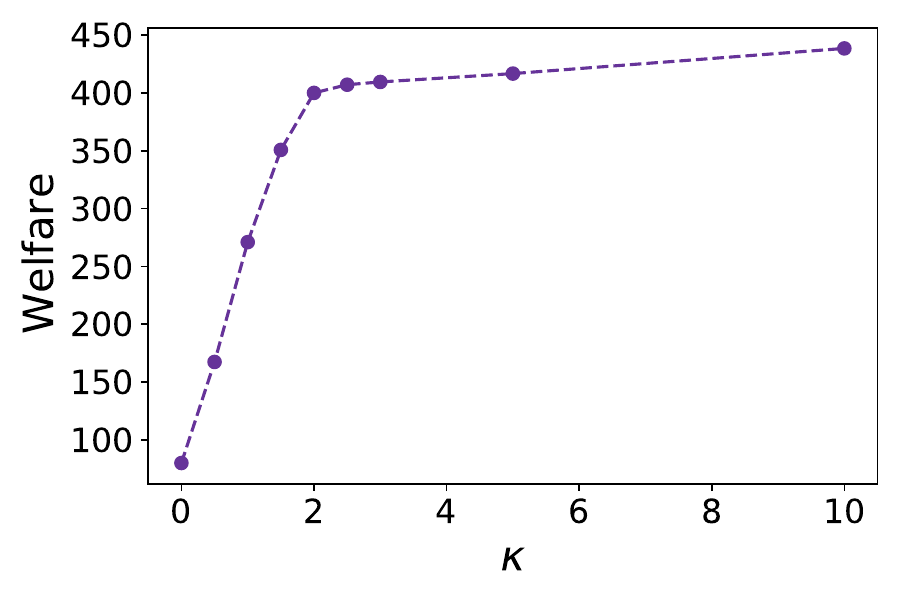}
  \caption{Welfare of approximate CCE for varying $\kappa$. For each $\kappa$, we plot the mean welfare computed over 100 runs of no-regret dynamics (std deviation bars are too small to appear on plot).}
\label{fig:welfare}
\end{SCfigure}

\ifarxiv
\else
\begin{wrapfigure}{r}{0.4\textwidth}
\vspace{-2em}
  \begin{center}
    \includegraphics[width=\linewidth]{figures/welfare.pdf}
  \end{center}
  \vspace{-6pt}
  \caption{Welfare of approximate CCE for varying $\kappa$. For each $\kappa$, we plot the mean welfare computed over 100 runs of no-regret dynamics (std deviation bars are too small to appear on plot).}
  \vspace{-10pt}
\label{fig:welfare}
\end{wrapfigure}
\fi
Finally, we touch on the \textit{social welfare} of the equilibria found by no-regret dynamics. Recall that expected welfare is defined as $\sum_{i=1}^n \E_{\ba\in\mathbf{D}}[c(a_i, a_{-i})]$ (since we are talking about costs, \textit{lower} welfare is better).
Figure \ref{fig:welfare} shows the welfare of the approximate coarse correlated equilibria found by no-regret dynamics for varying $\kappa$. In general, we would expect costs --- and thus welfare --- to increase as $\kappa$ increases. Interestingly, we see an inflection point at $\kappa=2$; welfare increases at a slower rate for large $\kappa$. 
\ifarxiv
Previously, we observed that for large $\kappa$, no-regret dynamics returns equilibria exhibiting high correlation (i.e. high TV distances). 
\else
We observe that for large $\kappa$, no-regret dynamics returns equilibria exhibiting high ``correlation" (Figure \ref{table:tv-dists}, Appendix \ref{app:eq-experiments}).
\fi
We can thus interpret this as preliminary evidence showing that allowing for correlation might improve social welfare in markets where permanent impact dominates. Indeed, we can think of correlation as a form of \textit{collusion}. Our findings suggest that as $\kappa$ increases---intuitively, this is also as we move farther from a zero-sum game---the potential benefits of collusion increase. Moreover, this type of collusive behavior can emerge organically from simple learning dynamics.

\section{Limitations and Future Work}

Our work leaves open several avenues for future work. In this work, we considered static strategies. Although we can view our static strategies as baselines or planned trajectories for more sophisticated strategies, future research could explore competitive trading with dynamic strategies that adapt to market activity. Other interesting directions include more general cost functions, such as non-linear functions and those modeling decaying permanent impact, and incorporating different types of (possibly unknown) players with objectives beyond pure position acquisition. Finally, in our game, the parameter $\kappa$ governs how close the game is to a potential game and a zero-sum game. Given that every game can be decomposed into a potential and a zero-sum game, it is broadly relevant to investigate how equilibria properties and computation evolve with shifting game structure, independent of our specific trading context.

\ifarxiv
\section{Acknowledgements}

We give warm thanks to Neil Chriss, Yuriy Nevmyvaka, Andrew Bennett and Anderson Schneider for helpful discussions.
\fi

\newpage
\bibliographystyle{ACM-Reference-Format}
\bibliography{main}

\newpage
\appendix

\ifarxiv
\else

\section{Examples of Equilibria Strategies}\label{app:examples}

To provide some intuition of the game, we give some examples of equilibria strategies, under varying market impact coefficients $\kappa$. Recall that $\kappa$ determines the relative contributions of temporary and permanent impact. Figure \ref{fig:ex-buy-only} shows pure Nash equilibria strategy pairs for the \textit{buy-only} setting (i.e. $\theta_L=0$). In this setting, we see a clear tension between temporary and permanent impact; when players pay only temporary impact cost (i.e. $\kappa=0$), the tendency is to spread out trading activity to avoid the opponent---and themselves. When players pay only permanent impact cost, the tendency is to trade ahead (in fact, buying everything at $t=1$ incurs $0$ permanent impact cost in our model). Here, $\kappa=2$ is large enough to induce this behavior. For the intermediary case of $\kappa=1$, players strike a balance between the two.

\begin{figure}
\centering
\begin{tabular}{cccc}
& \includegraphics[width=35mm,trim={0mm 4mm 3mm 0},clip]{figures/ex-0.pdf} & 

\includegraphics[width=33mm,trim={12mm 4mm 3mm 0},clip]{figures/ex-1.pdf}  
& \includegraphics[width=33mm,trim={12mm 4mm 3mm 0},clip]{figures/ex-2.pdf} 
\end{tabular}
\caption{Pure Nash equilibria strategies $(a_1, a_2)$ in the trading game with two players, for different $\kappa$. In this example, $T=5$, $V_1=V_2=10$, $\theta_U=10$, and $\theta_L=0$.}
\label{fig:ex-buy-only}
\end{figure}

\textit{Selling} complicates the picture. For instance, buying upfront was previously the best \textit{buy-only} strategy for large enough $\kappa$. When selling is allowed, players tend to want to sell immediately after their opponent buys (when costs are high) and buy immediately after their opponent sells (when costs are low). Figure \ref{fig:ex-sell} gives examples of best response strategies exhibiting this behavior.

\begin{SCfigure}
\centering
\begin{tabular}{cccc}
& \includegraphics[width=35mm,trim={8mm 4mm 3mm 0},clip]{figures/ex-sell-kappa-2-1.pdf} & \includegraphics[width=33mm,trim={17mm 4mm 3mm 0},clip]{figures/ex-sell-kappa-2-2.pdf}
\end{tabular}
\caption{Examples of best response strategies under $\kappa=2$ when allowing for selling. On the left, $a_1$ is a best response to $a_2$; on the right, $a_2$ is a best response to $a_1$. Here, $T=5$, $V_1=V_2=10$, $\theta_U=10$, and $\theta_L=-10$.}
\label{fig:ex-sell}
\end{SCfigure}

\fi

\section{Correlated Equilibria Computation}\label{app:swap-regret}

In Section \ref{sec:no-regret-dynamics}, we gave an efficient algorithm to compute approximate coarse correlated equilibria via no-regret dynamics. Here, we investigate efficient computation of correlated equilibria (CE), a stronger equilibrium concept than CCE. 

Just as coarse correlated equilibrium is tied to the notion of regret (otherwise referred to as \textit{external regret}), correlated equilibrium is tied to the notion of \textit{swap regret}. 

\begin{definition}[Swap Regret]
    Let $\Phi_i = \{\phi:\cA_i\to\cA_i\}$ be the collection of all functions mapping actions to actions. The (average) swap regret of a player $i$ who chooses a sequence of actions $a_{i,1},...,a_{i,R} \in \cA_i$ is defined as:
    \[
    SReg_i(R) = \max_{\phi\in\Phi_i} \frac{1}{R} \sum_{r=1}^R (c(a_{i,r}, a_{-i,r}) - c(\phi(a_{i,r}), a_{-i,r}))
    \]
    where $a_{-i,r}$ is the profile of strategies chosen by players excluding $i$ at round $r$.
\end{definition}

No swap regret is a stronger guarantee than no (external) regret; it asks that given the history of play, a player has no incentive to deviate to a fixed strategy \textit{conditioned on the strategy they chose}. By definition, if every player at swap regret at most $\eps$, then the empirical joint distribution of play is an $\eps$-approximate correlated equilibrium. 

To implement no-\textit{swap}-regret dynamics, we use an existing reduction transforming any no-regret algorithm to a no-swap-regret algorithm. The classical reduction of Blum and Mansour \citep{blum07external} guarantees, against any sequence of opponent actions, $SReg(R) \leq |\cA|\cdot Reg(R)$ given an algorithm that obtains regret bounded by $Reg(R)$ (for certain no-regret algorithms, a tighter analysis improves the dependence on $|\cA|$ by a factor of $\sqrt{|\cA|}$). To handle a large action space, we use recent reductions of \citet{dagan2024external} and \citet{peng2024fast} that guarantee vanishing swap regret at a rate depending only on the external regret guarantee of the no-regret algorithm, avoiding any dependence on the number of actions (however this comes at an exponential cost in the approximation parameter). We state its guarantees below but defer details of the reduction to \citet{dagan2024external} and \citet{peng2024fast}.

\begin{theorem}[Theorem 3.1 of \citet{dagan2024external}]\label{thm:treeswap}
    Fix an action set $\cA$. Fix $M,d, R \in\mathbb{N}$ such that $M^{d-1}\leq R \leq M^d$.   
    Given an algorithm that, against any adversarial sequence of actions, guarantees regret at most $Reg(R)$ after $R$ rounds, there is an algorithm producing randomized actions $p_{1},...,p_{R} \in\Delta\cA$ such that (in expectation over the randomized actions):
    \[
    SReg(R) \leq Reg(M) + \frac{3}{d}
    \]
    Moreover, if the per-round running time of the no-regret algorithm is $C$, then the per-round amortized running time of this algorithm is $O(C)$.
\end{theorem}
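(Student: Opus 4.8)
The plan is to establish Theorem~\ref{thm:treeswap} by the TreeSwap construction: a depth-$d$, $M$-ary tree of copies of the given external-regret algorithm, whose leaves are the $R$ rounds (since $M^{d-1}\le R\le M^d$, the $R$ rounds split into $d$ nested levels of $M$ roughly-balanced blocks; one may pad to exactly $M^d$ leaves at the end, which affects only lower-order terms). At each internal node $v$ of depth $h-1$ ($h=1,\dots,d$), instantiate one copy $\mathcal{A}_v$ of the base algorithm, run for $M$ ``meta-rounds'', where meta-round $j$ of $\mathcal{A}_v$ corresponds to the block of actual rounds under $v$'s $j$-th child. At the start of that meta-round, $\mathcal{A}_v$ emits a distribution $q_v^{(h)}\in\Delta\cA$; at its end, $\mathcal{A}_v$ receives as feedback the \emph{average} of the loss vectors incurred over that block (accumulated incrementally). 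On the actual round $t$, with root--leaf path $v_0,\dots,v_{d-1}$, the algorithm plays the uniform mixture $p_t=\frac1d\sum_{h=1}^d q_{v_{h-1}}^{(h)}$, sampling by choosing $h$ uniformly and then sampling from the cached $q_{v_{h-1}}^{(h)}$. Note the construction never refers to $\phi$.

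For the swap-regret bound, fix a swap function $\phi:\cA\to\cA$ and extend it to the linear pushforward $\hat\phi:\Delta\cA\to\Delta\cA$, so $\langle\ell,\hat\phi(q)\rangle=\mathbb{E}_{a\sim q}[\ell(\phi(a))]$ for every loss vector $\ell$. Since $\hat\phi$ acts as a stochastic matrix it has a stationary distribution $q^\star$ with $\hat\phi(q^\star)=q^\star$. Let $L_v$ be the average loss over node $v$'s block; the averaging identity is that $L_v$ is the mean of the averages of $v$'s children's blocks, i.e.\ exactly the feedback sequence $\mathcal{A}_v$ sees. Using $p_t=\frac1d\sum_h q^{(h)}$ and grouping rounds by their depth-$(h-1)$ ancestor, the total regret of $p_1,\dots,p_R$ against $\phi$ decomposes as $\frac1d\sum_{h=1}^d (A_h-B_h)$, where $A_h$ aggregates the block-weighted $\langle L_v,q_v^{(h)}\rangle$ and $B_h$ the block-weighted $\langle L_v,\hat\phi(q_v^{(h)})\rangle$ over depth-$(h-1)$ nodes. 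Now apply the external-regret guarantee of each $\mathcal{A}_v$ (valid against the adaptive sequence of averaged losses it sees), but with the \emph{fixed} comparator $\hat\phi(q^{(h-1)})$ --- the distribution the parent of $v$ emitted for the meta-round that is $v$, and at the root the fixed distribution $q^\star=\hat\phi(q^\star)$. Combined with the averaging identity, this yields the telescoping bound $A_h\le B_{h-1}+M^d\,Reg(M)$ for each $h$, with $B_0:=\sum_{t}\langle\ell_t,q^\star\rangle$. Summing over $h$ collapses the $B$-terms, leaving $\sum_h(A_h-B_h)\le d\,M^d\,Reg(M)+(B_0-B_d)$; since the losses are bounded, $|B_0-B_d|=O(M^d)$, and normalizing (taking $R=M^d$, the general case being handled by the reduction's treatment of the final partial block) gives $SReg(R)\le Reg(M)+O(1/d)$. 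A careful treatment of the loss normalization and of the padding sharpens $O(1/d)$ to the stated $3/d$.

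For the running-time claim, the cached distributions $q_v^{(h)}$ change only when a meta-round boundary is crossed, and the averaged feedback is maintained incrementally; thus the only nontrivial work on a given round is, occasionally, one update/query of one base algorithm per level whose meta-round just advanced. Summing the $O(C)$-per-round cost of the $M$-round base algorithm over the $O(R/M^{d-h})$ depth-$(h-1)$ nodes actually touched, for $h=1,\dots,d$, gives a geometric series totalling $O(R\cdot C)$ across all rounds and levels --- hence $O(C)$ amortized per round; sampling $p_t$ adds only $O(1)$ plus one base-algorithm sample.

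The main obstacle is choosing the comparators in the per-node external-regret applications so that the $d$ level-wise terms telescope. The crucial idea is that $\mathcal{A}_v$ should be compared not to a globally best action but to $\hat\phi$ of the distribution its \emph{parent} assigned to it --- seeded at the root by the stationary distribution $q^\star$ of $\phi$, which is exactly what makes the root term vanish. Then level $h$'s ``comparator cost'' $B_{h-1}$ cancels level $h$'s ``played cost'' $A_h$, so only the boundary terms $B_0,B_d$ survive, contributing the $O(1/d)$ slack; everything else (the averaging identity, the bounded-loss estimates, and the geometric bookkeeping for the runtime) is routine.
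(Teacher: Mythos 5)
The paper does not prove this statement: it is imported verbatim as Theorem 3.1 of \citet{dagan2024external}, and the authors explicitly defer all details of the reduction to that source. Your sketch is a faithful reconstruction of the TreeSwap argument used there --- the depth-$d$, $M$-ary tree of base-algorithm copies fed block-averaged losses, the uniform mixture over the root-to-leaf path, the decomposition of swap regret into level-wise terms $A_h - B_h$, and the telescoping $A_h \le B_{h-1} + M^d\,Reg(M)$ obtained by comparing each node's external regret to $\hat\phi$ applied to its parent's emitted distribution --- so it matches the cited proof in all essentials. The only cosmetic deviation is your seeding of the telescope at the root with a stationary distribution of $\hat\phi$; the standard presentation instead compares the root to the best fixed action and absorbs it via $B_d \ge \min_q \sum_t \langle \ell_t, q\rangle$, but your variant is equally valid and only shifts an $O(1/d)$ boundary term that is already accounted for in the stated $3/d$.
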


Importing our instantiation of FTPL, we obtain the following guarantee on the joint history given by no-swap-regret dynamics. The number of rounds needed to reach an $\eps$-approximate correlated equilibrium is polynomial in the parameters of the game, but depends exponentially on $1/\eps$. 

\begin{corollary}\label{cor:treeswap}
    Fix a player $i$ in the trading game with action set $\cA_i$. Using the instantiation of FTPL given by Corollary \ref{cor:ftpl}, there is an algorithm producing randomized actions $p_{1},...,p_{R} \in\Delta\cA_i$ such that, in expectation over the randomized actions, $SReg_i(R)\leq \eps$ after $R=O\left(\left(\frac{n^2\theta^5T^6}{\eps^2}\right)^{\frac{1}{\eps}}\right)$ rounds. Moreover, the per-round amortized running time is $O(\theta^2T^2)$, where $\theta = \max\{|\theta_L|, |\theta_U|\}$. If every player $i$ runs a copy of this algorithm, the empirical distribution $\mathbf{D}$ is an $\eps$-approximate correlated equilibrium after $R=O\left(\left(\frac{n^2\theta^5T^6}{\eps^2}\right)^{\frac{1}{\eps}}\right)$ rounds of no-swap-regret dynamics. 
\end{corollary}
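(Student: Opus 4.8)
The plan is to invoke the swap-regret reduction of Theorem \ref{thm:treeswap} in a black-box fashion, using our instantiation of FTPL (Corollary \ref{cor:ftpl}, with the per-round runtime of Corollary \ref{cor:ftpl-runtime}) as the underlying no-regret algorithm, and then to pass from a per-player swap-regret bound to an approximate correlated equilibrium by the usual definitional argument. Concretely, our FTPL instantiation is a genuine online-linear-optimization algorithm for player $i$: against \emph{any} adversarial sequence of cost vectors in $\cH$ of length $M$, it guarantees average external regret $Reg(M) \le O\!\left(n\theta^{5/2}T^3/\sqrt{M}\right)$ with per-round running time $O(\theta^2 T^2)$. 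Since the reduction of Theorem \ref{thm:treeswap} only ever runs the base algorithm on adversarially chosen length-$M$ subsequences and never inspects its internal state, this is exactly the quantity $Reg(M)$ in that theorem, and the composed algorithm inherits per-round amortized running time $O(\theta^2 T^2)$.

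Next I would choose the parameters $M$ and $d$ to make the swap regret at most $\eps$. From the bound $SReg(R) \le Reg(M) + 3/d$ in Theorem \ref{thm:treeswap}, it suffices to take $d = \Theta(1/\eps)$ so that $3/d \le \eps/2$, and to take $M$ large enough that $Reg(M) \le \eps/2$, i.e. $M = O\!\left(n^2\theta^5 T^6/\eps^2\right)$ (solving $n\theta^{5/2}T^3/\sqrt{M} \le \eps/2$). The constraint $M^{d-1} \le R \le M^d$ of the theorem then forces the horizon to be $R = O(M^d) = O\!\left((n^2\theta^5 T^6/\eps^2)^{1/\eps}\right)$, matching the stated bound after absorbing the constant factor in the exponent. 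This yields the single-player guarantee $SReg_i(R) \le \eps$ in expectation over the algorithm's internal randomness, together with the claimed running time, which is the first half of the corollary.

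For the dynamics claim I would argue as follows: if each player $i \in [n]$ independently runs this composed no-swap-regret algorithm (drawing its round-$r$ action privately from the distribution $p_{i,r}$), then after $R$ rounds every player has expected swap regret at most $\eps$ against the realized play of the others. By the definition of an $\eps$-approximate correlated equilibrium --- which asks precisely that for every player and every swap function $\phi_i$, the expected cost under the empirical joint distribution does not exceed the expected cost of $\phi_i(a_i)$ by more than $\eps$ --- the empirical distribution $\mathbf{D}$ over realized action profiles is an $\eps$-approximate CE. This is the standard online-to-batch-style passage and requires no new ideas beyond the definitions in Appendix \ref{app:swap-regret}.

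I expect the substantive content to lie entirely in the ingredients already assembled (the efficient FTPL best-response via Algorithm \ref{alg:dp-BR}, and the action-count-independent reduction of \citet{dagan2024external, peng2024fast}), so the ``hard part'' here is really just careful bookkeeping: confirming that FTPL's adversarial regret guarantee is exactly what the reduction needs on its length-$M$ subproblems, and tracking the exponential $1/\eps$ dependence cleanly through the window $M^{d-1} \le R \le M^d$ so that the final exponent and the polynomial base come out as stated. No step should require a genuinely new argument.
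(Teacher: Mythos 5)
Your proposal is correct and matches the paper's argument: the paper likewise plugs the FTPL regret bound of Corollary \ref{cor:ftpl} into Theorem \ref{thm:treeswap} with $M = \frac{n^2\theta^5T^6}{\eps^2}$ and $d = \frac{1}{\eps}$, and derives the runtime from Corollary \ref{cor:ftpl-runtime} together with the reduction's amortized-runtime guarantee. Your slightly more careful $\eps/2$-splitting of the two error terms and the explicit definitional passage to approximate CE are just fuller bookkeeping of the same argument.
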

\begin{proof}
    We plug the regret guarantee of FTPL from Corollary \ref{cor:ftpl}, $M=\frac{n^2\theta^5T^6}{\eps^2}$, and $d=\frac{1}{\eps}$ into Theorem \ref{thm:treeswap}. The runtime complexity follows from Corollary \ref{cor:ftpl-runtime} and Theorem \ref{thm:treeswap}.
\end{proof}


\subsection{Details of Experimental Implementation }\label{app:swap-regret-experiments}

We implement the no-swap-regret algorithm of Corollary \ref{cor:treeswap} and plot the distance to CE (i.e. swap regret) of the joint play in Figure \ref{fig:dists-to-eq}. In our implementation, we use the same game parameters as before: $T=5, V_1=10, V_2=10, \theta_L=-5, \theta_U=5$. Given that $M\gg d$ in theory, we set $M = 150$ and $d=2$. For each setting of $\kappa$, we execute 20 runs of no-swap-regret-dynamics, each consisting of $M^d = 22,500$ rounds. 

The results suggest that in practice, the algorithm is indeed hindered by the (exponentially) slow convergence rate given by theory. In fact, we find that FTPL obtains \textit{lower} swap regret than this algorithm (it only guarantees low external regret), even with \textit{much} fewer rounds. 

\ifarxiv
\else

\section{Proofs from Section \ref{sec:BR}}\label{app:BR-proofs}

\subsection{Proof of Theorem \ref{thm:dp}}

The algorithm is simple to describe. Let $\mathrm{OPT}(t, s)$ be the minimum cost to buy $s$ shares beginning at time $t$. First, observe that at the last time step $T$, a strategy must buy all remaining shares $s$. Furthermore, if $s$ shares remain, it must be that $V-s$ shares are held before $T$. Thus, $\mathrm{OPT}(T, s) = p^t(V-s, s)$. Now we work backwards. We can compute: $$\mathrm{OPT}(t, s) = \min_k [ \mathrm{OPT}(t+1, s-k) + p^t(V-s, k) ]$$ Why? The one-step cost of buying $k$ shares at time $t$ with $s$ shares remaining is $p^t(V-s, k)$; the minimum remaining cost is simply the minimum cost of buying $s-k$ shares beginning at the next time step, i.e. $\mathrm{OPT}(t+1, s-k)$. The cost of a best response strategy is then $\mathrm{OPT}(1, V)$; some simple bookkeeping will allow us to recover the optimal strategy.

\begin{proofof}{Theorem \ref{thm:dp}}
    We first show that Algorithm \ref{alg:dp-BR} finds a best response. As above, let $\mathrm{OPT}(t, s)$ be the minimum cost for a strategy to buy $s$ shares beginning at time $t$. The optimal cost is therefore $\mathrm{OPT}(1, V)$. Let $a$ denote any strategy in $\cA(V, \theta_L, \theta_U)$. Since $a$ satisfies $\theta_L\leq a'(t) \leq \theta_U$ for all $t$, we have that $\theta_L t\leq a(t) \leq \theta_U t$ and thus, $V-\theta_Ut \leq s \leq V-\theta_Lt$ for all $t$. 
    
    We now proceed via induction. If $s$ shares remain at the last time step, then it must be that $a(T-1) = V-s$ and $a'(T)=s$. Thus we can define the base case $\mathrm{OPT}(T, s)$ as the cost of buying $s$ shares at time $T$, given that $V-s$ shares are held up until time $T$, i.e.:
    \[
    \mathrm{OPT}(T, s) = p^t(V-s, s)
    \]

    The inductive step rests on the following fact: for $t=1,...,T-1$ and $s = V-\theta_Ut,...,V-\theta_Lt$, 
    \begin{align*}
        \mathrm{OPT}(t, s) = \min_{\theta_L\leq k \leq \theta_U} \left( \mathrm{OPT}(t+1, s-k) + p^t(V-s, k) \right)
    \end{align*} 
    To see this, observe that if $s$ shares remain at time $t$, then it must be that $V - s$ shares are held up until time $t$, i.e. $a(t-1) = V-s$. Suppose $k$ shares are bought at time $t$---i.e. $a'(t)=k$. Then, the one-step cost incurred at time $t$ is precisely $p^t(V-s, k)$. The optimal remaining cost is the minimum cost to buy the remaining $s-k$ shares beginning at time $t+1$---that is, $\mathrm{OPT}(t+1, s-k)$. The optimal solution to buy $s$ shares beginning at time $t$ buys some number of shares $k \in [\theta_L,\theta_U]$ at time step $t$. Since the inductive step chooses $k$ to minimize the cost beginning at time $t$, it is optimal. This proves the inductive step.

    Now, for every $t, s$ pair, Algorithm \ref{alg:dp-BR} stores the minimizer of the previous expression: $$\mathrm{BR}(t,s) = \argmin_{\theta_L\leq k \leq \theta_U} \left( \mathrm{OPT}(t+1, s-k) + p^t(V-s, k) \right)$$ Thus, backtracking starting at $\mathrm{\mathrm{BR}}(1,V)$ recovers the number of shares to buy at every time $t$ in an optimal solution to buy $V$ shares starting at $t=1$, and so recovers an optimal strategy $a^*(t)$.

    It remains analyze the running time of Algorithm \ref{alg:dp-BR}. There are 3 nested iterations. The first iterates through each $t\in[T]$. For each $t$, it iterates through all values $s$ between $V-\theta_Ut$ and $V-\theta_Lt$. Then for each value of $s$, it finds $\mathrm{OPT}(t,s)$ by iterating through all values $k$ between $\theta_L$ and $\theta_U$. Recovering the optimal strategy takes time $T$. Thus, the running time is:
    \begin{align*}
        \left(\sum_{t=1}^T (V-\theta_Lt-(V-\theta_Ut)) (\theta_U-\theta_L)\right) + T 
        &= \left((\theta_U-\theta_L) \sum_{t=1}^T (\theta_Ut-\theta_Lt)\right) + T\\
        &= \left((\theta_U-\theta_L)^2 \sum_{t=1}^T t\right) + T\\
        &= \left((\theta_U-\theta_L)^2 \cdot \frac{T(T+1)}{2}\right) + T\\
        &= O((\theta_U-\theta_L)^2T^2)
    \end{align*}
    which proves the theorem. 
\end{proofof}

\section{Proofs from Section \ref{sec:decomposition}}\label{app:decomposition-proofs}

\subsection{Proof of Theorem \ref{thm:temp-potential}}

\begin{proofof}{Theorem \ref{thm:temp-potential}}
    The task is to show that the change in $\phi$ exactly measures the change in temporary impact cost resulting from a unilateral deviation. 
    For ease of notation, let's define $h(a_i, a_j) \coloneq \sum_{t=1}^T a_i'(t)a_j'(t)$. And so we can write $c^{\text{temp}}(a_i, a_{-i}) = \sum_{j=1}^n h(a_i, a_j)$ and $\phi(\ba) = \sum_{i=1}^n \sum_{j\geq i} h(a_i, a_j)$.

    Suppose player $k$ deviates from $a_k$ to $b_k$. Since $h$ is symmetric, i.e. $h(a_i, a_j) = h(a_j, a_i)$, we can write the change in potential as:
    \begin{align*}
        \phi(b_k, a_{-k}) - \phi(a_k, a_{-k}) &= \sum_{j=1}^n h(b_k, a_j) + \sum_{i\neq k} \sum_{j\geq i, j\neq k} h(a_i, a_j) - \sum_{j=1}^n h(a_k, a_j) - \sum_{i\neq k} \sum_{j\geq i, j\neq k} h(a_i, a_j) \\
        &= \sum_{j=1}^n h(b_k, a_j) - \sum_{j=1}^n h(a_k, a_j) \\
        &= c^{\text{temp}}(b_k, a_{-k}) - c^{\text{temp}}(a_k, a_{-k})
    \end{align*}
    That is, all terms not involving $k$ cancel out, and the remaining $n$ terms involving $k$ exactly match the change in cost. This proves the theorem.
    

\end{proofof}

\subsection{Proof of Theorem \ref{thm:br-dynamics}}

Algorithm \ref{alg:BR-dynamics} implements best response dynamics using Algorithm \ref{alg:dp-BR} as a subroutine.

\begin{algorithm}[t]
    \KwIn{Target volumes $V_1,...,V_n$, trading limits $\theta_L,\theta_U$}
    \KwOut{$\eps$-approximate Nash equilibrium $\ba$}
    
    Initialize $\ba=(a_1,...,a_n)$ arbitrarily.

    Define $c^{\text{temp, t}}(a_i(t-1), a_i'(t); a_{-i}) = a_i'(t)\sum_{j=1}^n a'_j(t)$ to be the one-step temporary cost\;

    \For{$i=1$ \KwTo $n$}{
        Let $\tilde{a}_i \gets \textsf{BR}(V_i, \theta_L, \theta_U, c^{\text{temp, t}}(a_i(t-1), a_i'(t); a_{-i})$)\; 
        If $c^{\text{temp}}(\tilde{a}_i, a_{-i}) \leq c^{\text{temp}}(a_i, a_{-i}) - \eps$, set $a_i \gets \tilde{a}_i$\;
    }
    
    Return $\ba$
    
    \caption{$\eps$-approximate Nash equilibrium (temporary impact only)}
    \label{alg:BR-dynamics}
\end{algorithm}

\begin{proofof}{Theorem \ref{thm:br-dynamics}}
    By definition, the strategy profile found by the algorithm is an $\eps$-approximate Nash equilibrium. Now, by Theorem \ref{thm:temp-potential}, we have that for any player $i$, $c^{\text{temp}}(a_i, a_{-i}) - c^{\text{temp}}(\tilde{a}_i, a_{-i}) = \phi(a_i,a_{-i}) - \phi(\tilde{a}_i,a_{-i})$, where $\phi(\ba) = \sum_{t=1}^T \sum_{i=1}^n a_i'(t) \sum_{j\geq i} a_j'(t)$ is the potential function. Thus, for every deviation from $a_i$ to $\tilde{a}_i$, $ \phi(a_i,a_{-i}) - \phi(\tilde{a}_i,a_{-i}) \geq \eps$. And so to bound the running time, it suffices to bound the magnitude of $\phi$. Since $|a_i'(t)|\leq \theta$ for all players $i$ and time steps $t$, we can calculate for any $\ba$:
    \begin{align*}
        |\phi(\ba)| = \left| \sum_{t=1}^T \sum_{i=1}^n a_i'(t) \sum_{j\geq i} a_j'(t) \right| \leq \sum_{t=1}^T \sum_{i=1}^n \sum_{j\geq i} \theta^2 = \frac{n(n+1)T\theta^2}{2}
    \end{align*}
    Therefore the algorithm halts after at most $\frac{2n(n+1)T\theta^2}{2\eps} = \frac{n(n+1)T\theta^2}{\eps} $ deviations. Each deviation is found using at most $n$ calls to Algorithm \ref{alg:dp-BR}. Thus, plugging in the guarantees of Algorithm \ref{alg:dp-BR} (Theorem \ref{thm:dp}) bounds the total running time. 
\end{proofof}

\subsection{Proof of Theorem \ref{thm:not-potential}}

\begin{proofof}{Theorem \ref{thm:not-potential}}
    Theorem \ref{thm:potential} tells us that for any finite potential game, best response dynamics is guaranteed to converge. Thus it suffices to give an instance for which best response dynamics does not converge. Below we show an instance with $T=5$, $\kappa=1$, and two players, both with the action set $\cA(V)$ for $V=5$.  

    Consider a run of best response dynamics, where player 1's strategy $a_1$ is initialized to be: $$a_1(1)=2, a_1(2)=2, a_1(3)=1, a_1(4)=0, a_1(5)=0$$ and player 2's strategy $a_2$ is initialized to be:
    $$a_2(1)=1, a_2(2)=1, a_2(3)=1, a_2(4)=1, a_2(5)=1$$ 
    Now, player 2 can decrease his cost against $a_1$ by playing $a_2'$, where:
    $$a_2'(1)=3, a_2'(2)=1, a_2'(3)=0, a_2'(4)=0, a_2'(5)=1$$ We have that $c(a_2, a_1) = 36$ while $c(a_2', a_1) = 33$. Then, player 1 can decrease her cost against $a_2'$ by playing $a_1'$, where: $$a_1'(1)=2, a_1'(2)=1, a_1'(3)=1, a_1'(4)=1, a_1'(5)=0$$ We have that $c(a_1, a_2') = 35$ while $c(a_1', a_2') = 34$. Then, player 2 can decrease his cost against $a_1'$ by playing $a_2''$, where:
    $$a_2''(1)=2, a_2''(2)=2, a_2''(3)=1, a_2''(4)=0, a_2''(5)=0$$ We have that $c(a_2', a_1') = 32$ while $c(a_2'', a_1') = 31$. Note that $a_2'' = a_1$, and so best response dynamics will cycle, i.e. it will not converge. This completes the proof.    
\end{proofof}

\subsection{Proof of Lemma \ref{lem:perm-avg-zero-sum}}

\begin{proofof}{Lemma \ref{lem:perm-avg-zero-sum}}
    By expanding out $a_i'(t)$ for every $i$ and rearranging the summations, we compute:
    \begin{align*}
        \sum_{i=1}^n \permmean{}(a_i,a_{-i})
        &= \sum_{i=1}^n \frac{1}{2} \sum_{t=1}^T a'_i(t) \sum_{j=1}^n (a_j(t-1) + a_j(t)) \\
        &= \frac{1}{2} \sum_{t=1}^T \sum_{i=1}^n  (a_i(t) - a_i(t-1)) \sum_{j=1}^n (a_j(t-1) + a_j(t)) \\
        &= \frac{1}{2} \sum_{t=1}^T \left(\sum_{i=1}^n  a_i(t) - \sum_{i=1}^n a_i(t-1)\right) \left(\sum_{j=1}^n (a_j(t-1) +  \sum_{j=1}^n a_j(t)\right) \\
        &= \frac{1}{2} \sum_{t=1}^T \left(\sum_{i=1}^n  a_i(t) - \sum_{i=1}^n a_i(t-1)\right) \left(\sum_{i=1}^n (a_i(t-1) +  \sum_{i=1}^n a_i(t)\right)\\
        &= \frac{1}{2} \sum_{t=1}^T \left( \left(  \sum_{i=1}^n  a_i(t) \right)^2 - \left(  \sum_{i=1}^n  a_i(t-1) \right)^2 \right)
    \end{align*}
    where the second-to-last step switches the indexing notation and the last step follows from the identity $(a-b)(a+b)=a^2-b^2$. Now, expanding out the telescoping sum, this quantity equals:
    \begin{align*}
        \frac{1}{2} \left( \left( \sum_{i=1}^n  a_i(T) \right)^2 - \left( \sum_{i=1}^n  a_i(0) \right)^2 \right) = \frac{1}{2} \left( \sum_{i=1}^n  V_i \right)^2
    \end{align*}
    by the boundary conditions $a_i(0)=0$ and $a_i(T) = V_i$ for all $i$. 
\end{proofof}

\subsection{Proof of Theorem \ref{thm:decomp}}

\begin{proofof}{Theorem \ref{thm:decomp}}
    We compute:
    \begin{align*}
        c(a_i, a_{-i}) &= \temp{}(a_i,a_{-i}) + \kappa \cdot \perm{}(a_i,a_{-i}) \\
        &= \sum_{t=1}^T a'_i(t) \sum_{j=1}^n a'_j(t) + \kappa \sum_{t=1}^T a'_i(t) \sum_{j=1}^n a_j(t-1)  \\
        &= \sum_{t=1}^T a'_i(t) \sum_{j=1}^n a'_j(t) \\ & \ \ \ + \sum_{t=1}^T \left(\frac{\kappa}{2} a'_i(t) \sum_{j=1}^n a_j(t-1) + \frac{\kappa}{2} a'_i(t) \sum_{j=1}^n a_j'(t) - \frac{\kappa}{2} a'_i(t) \sum_{j=1}^n a_j'(t) + \frac{\kappa}{2} a'_i(t) \sum_{j=1}^n a_j(t-1) \right) \\
        &= \sum_{t=1}^T a'_i(t) \sum_{j=1}^n a'_j(t) + \sum_{t=1}^T \left(\frac{\kappa}{2} a'_i(t) \sum_{j=1}^n a_j(t-1) + \frac{\kappa}{2} a'_i(t) \sum_{j=1}^n a_j(t) - \frac{\kappa}{2} a'_i(t) \sum_{j=1}^n a'_j(t) \right) \\
        &= \left(1-\frac{\kappa}{2}\right) \sum_{t=1}^T a'_i(t) \sum_{j=1}^n a'_j(t) + \frac{\kappa}{2} \sum_{t=1}^T a'_i(t) \sum_{j=1}^n (a_j(t-1) + a_j(t)) \\
        &= \left(1-\frac{\kappa}{2}\right)\cdot\temp{}(a_i,a_{-i}) + \kappa\cdot\permmean{}(a_i,a_{-i})
    \end{align*}
    as desired. 
\end{proofof}

\subsection{Permanent Impact Cost (and the General Game) is Not Zero-Sum}\label{app:not-zero-sum}

The following example demonstrates that $c^{\text{perm}}$ is not constant-sum.

\begin{example}
Consider the trading game with two players. Suppose both players buy all $V=V_1=V_2$ shares at $t=1$ and 0 shares at every step
afterwards. Then the sum of permanent impact costs for both players is $0$. On the
other hand, suppose both players buy $V/T$ shares at each time step. Then the sum of permanent impact 
costs is $$2 \sum_{t=1}^T \frac{V}{T}\left(\frac{2V}{T}\cdot(t-1)\right) = \frac{4V^2}{T^2}\sum_{t=1}^T t - \frac{4V^2}{T} = \frac{4V^2}{T^2}\cdot\frac{T(T+1)}{2} - \frac{4V^2}{T} = 2V^2 - \frac{2V^2}{T}$$
which approaches $2V^2$ as $T$ becomes large. Thus the permanent impact only setting is not constant-sum.
\end{example}

In fact, using the same example, we can show that the general game is not constant-sum. If both players buy all $V$ shares upfront, the sum of temporary impact costs for both players is $4V^2$, and so the sum of temporary and permanent impact costs is $4V^2$. On the other hand, if both players buy $V/T$ shares at each time step, then the sum of temporary costs is $2\sum_{t=1}^T (V/T)(2V/T) = 4V^2/T$, which approaches 0 as $T$ becomes large. So the sum of temporary and permanent impact costs approaches $\kappa\cdot 2V^2$ as $T$ becomes large. Thus the general game is not zero-sum for $\kappa\neq 2$.

\section{Details from Section \ref{sec:no-regret-dynamics}}\label{app:FTPL-details}

\subsection{FTPL Preliminaries}

\begin{algorithm}[H]
    \For{$r=1$ \KwTo $R$}{
        Let $H_r = \sum_{s=1}^{r-1} h_s$ be the cumulative cost so far\;
        Let $N_r \sim [0, \eta]^{d}$ be a noise vector chosen uniformly at random\;
        Choose the strategy $f_{r} = \argmin_{f\in\cF} \<f, H_r + N_r\>$\;
        Observe $h_r$\;
    }
    \caption{FTPL}
    \label{alg:ftpl}
\end{algorithm}

We present FTPL in Algorithm \ref{alg:ftpl} and state its regret guarantees below.



\begin{theorem}\citep{kalai03efficient}\label{thm:ftpl}
    Let $D = \max_{f,f'\in\cF}\|f-f'\|_1, M =\max_{h\in\cH}\|h\|_1,$ and $C = \max_{f\in\cF,h\in\cH} |\<f, h\>|$. Against any adversary's choice of strategies $h_1,...,h_R$, FTPL (Algorithm \ref{alg:ftpl}) with noise parameter $\eta = \sqrt{\frac{2MCR}{D}}$ obtains regret:
    $$
    \max_{f\in\cF} \frac{1}{R} \sum_{r=1}^R \left( \E[\<f_r, h_r\>] - \<f, h_r)\> \right) \leq 2\sqrt{\frac{DMC}{R}}
    $$
    where the expectation is taken over the noise vectors.
\end{theorem}

\subsection{Instantiation of FTPL}\label{app:FTPL-linearization}

To implement FTPL, we must represent the learner's loss as a linear optimization problem. We ``linearize" the problem by constructing low-dimensional strategy spaces for the learner and adversary as follows. The learner will play over the space $\cF_{\cA_i} = \{f(a_i)\}_{a_i\in\cA_i} \subseteq \R^{2T}$ and the adversary will play over the space $\cH_{\cA_{-i}} = \{h(a_{-i})\}_{a_{-i}\in \cA_{-i}} \subseteq \R^{2T}$, where $h$ and $f$ apply the following transformations:
\[
f(a_i) = \begin{bmatrix}
a'_i(1) \\
\vdots \\
a'_i(t) \\
\vdots \\
a'_i(T) \\
a'_i(1)(a'_i(1) + \kappa a_i(0)) \\
\vdots \\
a'_i(t)(a'_i(t) + \kappa a_i(t-1)) \\
\vdots \\
a'_i(T)(a'_i(T) + \kappa a_i(T-1))
\end{bmatrix} 
\hspace{1em}, \hspace{1em}
h(a_{-i}) = 
\begin{bmatrix}
\sum_{j\neq i} a'_j(1) + \kappa a_j(0) \\
\vdots \\
\sum_{j\neq i} a'_j(t) + \kappa a_j(t-1) \\
\vdots \\
\sum_{j\neq i} a'_j(T) + \kappa a_j(T-1) \\
1 \\
\vdots \\
1
\end{bmatrix}
\]
We can see that the learner's cost of playing $f(a_i)$ against $h(a_{-i})$ in the OLO problem matches their cost of playing $a_i$ against $a_{-i}$ in the trading game:
\begin{align*}
    \<f(a_i), h(a_{-i})\> &= \sum_{t=1}^T a_i'(t) \sum_{j\neq i} (a'_j(t) + \kappa a_j(t-1))  + \sum_{t=1}^T  a'_i(t)(a'_i(t) + \kappa a_i(t-1))\\
    &= \sum_{t=1}^T \left(a'_i(t) \sum_{j=1}^n a'_j(t) + \kappa a'_i(t) \sum_{j=1}^n a_j(t-1) \right)\\
    &= c(a_i, a_{-i})
\end{align*}


With this instantiation in hand, we can now appeal to the guarantees of FTPL in the following corollary to Theorem \ref{thm:ftpl}. 
\begin{corollary}\label{cor:ftpl}
    In our instantiation, the quantities $D,M,$ and $C$ are polynomial in $n$, $T$, and $\theta$, where $\theta = \max\{|\theta_L|, |\theta_U|\}$. In particular, we have that $D\leq O(\theta^2 T^2)$, $M\leq O((n-1)\theta T^2)$, and $C\leq O(n\theta^2 T^2)$. Plugging this in, FTPL obtains regret bounded by $O\left(\frac{n\theta^{5/2}T^3}{\sqrt{R}}\right)$ in our instantiation.
\end{corollary}

To implement FTPL, it remains to consider how to solve the (randomized) best response problem of FTPL in our instantiation. Observe that there is a one-to-one correspondence between strategies $a_i\in\cA_i$ and $f(a_i)\in\cF$, and so we can speak interchangeably about choosing strategies $a_i$ and $f(a_i)$. Thus, we can write the best response problem as finding: 
\[
a_{i,r} = \argmin_{a_i\in\cA_i} \<f(a_i), H_r + N_r\>
\]
Using the notation $v^k$ for the $k^{th}$ coordinate of a vector $v$, observe that we can write:
\begin{align*}
    \<f(a_i), H_r + N_r\> &= \sum_{k=1}^{2T} f(a_i)^k (H_r+N_r)^k \\
    &= \sum_{t=1}^T a_i'(t)(H_r+N_r)^t + \sum_{t=1}^T a_i'(t)(a_i'(t)+\kappa a_i(t-1))(H_r+N_r)^{T+t}
\end{align*}
Notice that once we have fixed $H_r$ and $N_r$, the cost at each step is solely a function of $a_i(t-1)$ and $a_i'(t)$. Thus, we can invoke Algorithm \ref{alg:dp-BR} as a subroutine, instantiated with the one-step cost: $$p^t_r(a_i(t-1), a_i'(t)) \coloneqq  a_i'(t)(H_r+N_r)^t + a_i'(t)(a_i'(t)+\kappa a_i(t-1))(H_r+N_r)^{T+t}$$ 
Then, since computing $H_r$ and $N_r$ at every round can be done in time $2T$, the running time of FTPL directly inherits from the guarantees of Algorithm \ref{alg:dp-BR}. 

\begin{corollary}\label{cor:ftpl-runtime}
    Our instantiation of FTPL has per-round running time $O(\theta^2T^2)$, where $\theta = \max\{|\theta_L|, |\theta_U|\}$. 
\end{corollary}

\subsection{No-Regret Dynamics}

Finally, to implement no-regret dynamics, every player $i\in[n]$ maintains a copy of FTPL (Algorithm \ref{alg:ftpl}). In rounds $r\in[R]$, every player simultaneously draws a strategy $a_{i,r}$ from the distribution maintained by their copy of FTPL (therefore ensuring that each player's randomness is private). Then, every player observes the full action profile $(a_{1,r},...,a_{n,r})$ and updates their copy of FTPL with the cost vector $h(a_{-i,r})$.

\begin{corollary}\label{cor:ftpl-dynamics}
    For every player $i\in[n]$, let $a_{i,1},...,a_{i,R}$ be draws from the distributions maintained by FTPL in no-regret dynamics, set with noise parameter $\eta = nT\sqrt{2\theta R}$, where $\theta = \max\{|\theta_L|, |\theta_U|\}$. 
    Let $\mathbf{D}$ be the empirical distribution over the realized action profiles $\ba_1,...,\ba_R$, where $\ba_r = (a_{1,r},...,a_{n,r})$. Then, $\mathbf{D}$ is an $\eps$-approximate coarse correlated equilibrium after $R = O\left(\frac{n^2\theta^5T^6}{\eps^2}\right)$ rounds of no-regret dynamics, with total per-round running time $O(n\theta^2 T^2)$.
\end{corollary}

\section{Additional Experimental Results}\label{app:experiments}

\subsection{More on Equilibria Properties}\label{app:eq-experiments}
\begin{wrapfigure}{r}{0.35\textwidth}
\vspace{-2em}
  \begin{center}
    \includegraphics[width=\linewidth]{figures/tv-dist.pdf}
  \end{center}
  \vspace{-8pt}
  \caption{TV distances between outputted joint distribution and product of marginal distributions, for varying $\kappa$. The plot shows means and std. deviations over 100 runs of no-regret dynamics.}
\label{table:tv-dists}
\end{wrapfigure}
Here we investigate the ``correlation" between FTPL strategies for varying $\kappa$. 
Recall that a CCE is a mixed Nash equilibria if its joint distribution can be written as a product distribution---that is, each player's actions can be drawn independently from their own marginal distributions. Since the strategy spaces are not numeric but discrete, combinatorial objects, we cannot measure correlations between player actions in the standard way, but instead will examine
the total variation (TV) distance between the joint distribution returned by no-regret dynamics and the product of each player's marginal distribution. More specifically, let $\mathbf{D}$ be the empirical joint distribution over the realized action pairs $(a_{1,1},a_{2,1}),...,(a_{1,R},a_{2,R})$. Let $D_1$ be the marginal distribution over the first player's actions and $D_2$ be the marginal distribution over the second player's actions. We compute the TV distance between $\mathbf{D}$ and $D_1\times D_2$ as follows:
\[
TV(\mathbf{D},D_1\times D_2) = \sum_{(a_{1},a_{2})\in \text{supp}(\mathbf{D})} \left| \Pr_{\mathbf{D}}[(a_{1},a_{2})] - \Pr_{D_1}[a_1]\cdot\Pr_{D_2}[a_2] \right|
\]
In the sequel, we will refer to this distance informally as ``correlation'' between player strategies.
Figure \ref{table:tv-dists} shows TV distances for varying $\kappa$. For each setting of $\kappa$, we report the average TV distance computed over 100 runs of no-regret dynamics. As expected, TV distance is low for the special case of $\kappa=2$ (no-regret dynamics are known to converge to Nash equilibria in zero/constant-sum games). The TV distance is particularly low for $\kappa=0$ --- in our subsequent results, we see that this can be explained by the fact that when $\kappa=0$, no-regret dynamics finds a \textit{pure} Nash equilibrium fairly quickly (recall that pure Nash equilibria are guaranteed to exist in this regime). For large $\kappa$, the approximate coarse correlated equilibria found by no-regret dynamics exhibit high correlation.

In Figure \ref{fig:actions}, we take a closer look at the strategy pairs played by FTPL over one run of no-regret dynamics (we note that although we show the outputs of just one run, the behavior is typical across runs). These visualizations help explain some phenomena we find above. First, for small $\kappa$, the equilibria are in fact ``close" to a \textit{pure} Nash equilibrium. Specifically, for $\kappa=0, 0.5,$ and $ 1.5$, no-regret dynamics converges to consistently plays a pure Nash equilibrium after some number of rounds. For $\kappa=0$, this strategy pair is reached fairly quickly. This helps explain why regret stabilizes rapidly and TV distances are low for small $\kappa$. For $\kappa=2$, players oscillate between playing, still, a small subset of actions. For higher $\kappa$, we see oscillatory behavior, albeit longer-lived. We note that for $\kappa=5$ and $10$, the strategy pairs found by the end of 2500 rounds are \textit{not} pure Nash equilibria (even though FTPL might appear to have stabilized), indicating that players might continue to cycle between strategy pairs as no-regret dynamics progresses. 

\begin{figure}
\centering
\begin{tabular}{c}
\includegraphics[width=145mm,trim={45mm 8mm 0mm 2mm},clip]{figures/actions-0.pdf} \\  \includegraphics[width=145mm,trim={45mm 8mm 0mm 0},clip]{figures/actions-0.5.pdf} 
\\
\includegraphics[width=145mm,trim={42mm 8mm 0mm 0},clip]{figures/actions-1.pdf}  \\
\includegraphics[width=145mm,trim={42mm 8mm 0mm 0},clip]{figures/actions-1.5.pdf}  \\
\includegraphics[width=145mm,trim={45mm 8mm 0mm 0},clip]{figures/actions-2.pdf}  \\
\includegraphics[width=145mm,trim={45mm 8mm 0mm 0},clip]{figures/actions-2.5.pdf}  \\
\includegraphics[width=145mm,trim={45mm 8mm 0mm 0},clip]{figures/actions-3.pdf}  \\
\includegraphics[width=145mm,trim={45mm 8mm 0mm 0},clip]{figures/actions-5.pdf}  \\
\includegraphics[width=145mm,trim={45mm 0mm 0mm 0},clip]{figures/actions-10.pdf}  \\
\end{tabular}
\caption{Actions played by players 1 and 2 over one run of no-regret dynamics for varying $\kappa$ (note: actions are indexed in no particular order and indices might differ from plot to plot; the purpose is to show the progression of actions played)}
\label{fig:actions}
\end{figure}

\subsection{FTPL Convergence Rate}\label{app:ftpl-convergence}
Here we examine how regret evolves as no-regret dynamics progresses, in order to evaluate the speed of convergence in our implementation. In Figures \ref{fig:reg-cumulative} and \ref{fig:reg-avg}, we show cumulative and average/per-round regret (respectively) as a function of rounds of no-regret dynamics for different settings of $\kappa$. We find that average regret converges to 0 (and so the empirical distribution converges to a coarse correlated equilibria) more rapidly that our theory suggests; while our asymptotic convergence rates scale as $O(T^6/\eps^2)$, we see that average regret (i.e. distance to coarse correlated equilibria) flattens out after 500-1000 rounds for all settings of $\kappa$. 

In Figure \ref{fig:reg-cumulative}, we see that regret behaves somewhat differently over the course of FTPL for different $\kappa$. Most notably, for $\kappa=2$, we see that regret oscillates. As a whole, as $\kappa$ increases, the shape of regret transitions from quickly flattening out, to oscillating, to quickly flattening out again. 
However, the individual trajectories at small $\kappa$ are quite smooth, while behavior becomes
more volatile at larger $\kappa$.
These findings reflect changes in the game's underlying structure --- as we saw in Section \ref{sec:decomposition}, the game morphs from being a potential game (at $\kappa=0$) to a constant-sum game (at $\kappa=2$).

\begin{figure}
\centering
\begin{tabular}{cccc}
& \includegraphics[width=45mm,trim={0mm 11mm 0mm 0},clip]{figures/regrets-cumulative-0.pdf} & 
\hspace{2pt}
\includegraphics[width=40mm,trim={11mm 11mm 3mm 0},clip]{figures/regrets-cumulative-0.5.pdf}  
& \includegraphics[width=40mm,trim={11mm 11mm 3mm 0},clip]{figures/regrets-cumulative-1.pdf} \\ & \includegraphics[width=44mm,trim={0 11mm 3mm 0},clip]{figures/regrets-cumulative-1.5.pdf} 
& \includegraphics[width=40mm,trim={11mm 11mm 3mm 0},clip]{figures/regrets-cumulative-2.pdf} & \includegraphics[width=40mm,trim={11mm 11mm 3mm 0},clip]{figures/regrets-cumulative-2.5.pdf} \\
& \includegraphics[width=44mm,trim={0mm 0mm 3mm 0},clip]{figures/regrets-cumulative-3.pdf} &  \includegraphics[width=40mm,trim={11mm 0 3mm 0},clip]{figures/regrets-cumulative-5.pdf} & \includegraphics[width=40mm,trim={11mm 0 3mm 0},clip]{figures/regrets-cumulative-10.pdf} \\
\end{tabular}
\caption{Cumulative regrets of players 1 and 2 for varying $\kappa$, averaged across 100 runs of no-regret dynamics. Faint lines represent individual runs, dark lines represent averages.}
\label{fig:reg-cumulative}
\end{figure}

\begin{figure}
\centering
\begin{tabular}{cccc}
& \includegraphics[width=45mm,trim={0mm 3mm 0mm 0},clip]{figures/regrets-average-0.pdf} & 
\hspace{2pt}
\includegraphics[width=40mm,trim={11mm 3mm 3mm 0},clip]{figures/regrets-average-2.pdf}  
& \includegraphics[width=40mm,trim={11mm 3mm 3mm 0},clip]{figures/regrets-average-10.pdf} 
\end{tabular}
\caption{Average/per-round regrets of players 1 and 2 for varying $\kappa$, averaged across 100 runs of no-regret dynamics. We exclude other $\kappa$ for concision; the curves (as shown in this manner) look fairly identical.}
\label{fig:reg-avg}
\end{figure}

\subsection{Compute Resources}
All experiments were run on CPU (Macbook Pro with 1.4 GHz Quad-Core Intel Core i5 Processor and 16 GB memory).


\newpage
\section*{NeurIPS Paper Checklist}

\begin{enumerate}

\item {\bf Claims}
    \item[] Question: Do the main claims made in the abstract and introduction accurately reflect the paper's contributions and scope?
    \item[] Answer: \answerYes{} 
    \item[] Justification: Claims can be found in theorem statements and experimental findings.
    \item[] Guidelines:
    \begin{itemize}
        \item The answer NA means that the abstract and introduction do not include the claims made in the paper.
        \item The abstract and/or introduction should clearly state the claims made, including the contributions made in the paper and important assumptions and limitations. A No or NA answer to this question will not be perceived well by the reviewers. 
        \item The claims made should match theoretical and experimental results, and reflect how much the results can be expected to generalize to other settings. 
        \item It is fine to include aspirational goals as motivation as long as it is clear that these goals are not attained by the paper. 
    \end{itemize}

\item {\bf Limitations}
    \item[] Question: Does the paper discuss the limitations of the work performed by the authors?
    \item[] Answer: \answerYes{} 
    \item[] Justification: Please find in limitations section.
    \item[] Guidelines:
    \begin{itemize}
        \item The answer NA means that the paper has no limitation while the answer No means that the paper has limitations, but those are not discussed in the paper. 
        \item The authors are encouraged to create a separate "Limitations" section in their paper.
        \item The paper should point out any strong assumptions and how robust the results are to violations of these assumptions (e.g., independence assumptions, noiseless settings, model well-specification, asymptotic approximations only holding locally). The authors should reflect on how these assumptions might be violated in practice and what the implications would be.
        \item The authors should reflect on the scope of the claims made, e.g., if the approach was only tested on a few datasets or with a few runs. In general, empirical results often depend on implicit assumptions, which should be articulated.
        \item The authors should reflect on the factors that influence the performance of the approach. For example, a facial recognition algorithm may perform poorly when image resolution is low or images are taken in low lighting. Or a speech-to-text system might not be used reliably to provide closed captions for online lectures because it fails to handle technical jargon.
        \item The authors should discuss the computational efficiency of the proposed algorithms and how they scale with dataset size.
        \item If applicable, the authors should discuss possible limitations of their approach to address problems of privacy and fairness.
        \item While the authors might fear that complete honesty about limitations might be used by reviewers as grounds for rejection, a worse outcome might be that reviewers discover limitations that aren't acknowledged in the paper. The authors should use their best judgment and recognize that individual actions in favor of transparency play an important role in developing norms that preserve the integrity of the community. Reviewers will be specifically instructed to not penalize honesty concerning limitations.
    \end{itemize}

\item {\bf Theory assumptions and proofs}
    \item[] Question: For each theoretical result, does the paper provide the full set of assumptions and a complete (and correct) proof?
    \item[] Answer: \answerYes{} 
    \item[] Justification: We provide complete proofs in the appendix. 
    \item[] Guidelines:
    \begin{itemize}
        \item The answer NA means that the paper does not include theoretical results. 
        \item All the theorems, formulas, and proofs in the paper should be numbered and cross-referenced.
        \item All assumptions should be clearly stated or referenced in the statement of any theorems.
        \item The proofs can either appear in the main paper or the supplemental material, but if they appear in the supplemental material, the authors are encouraged to provide a short proof sketch to provide intuition. 
        \item Inversely, any informal proof provided in the core of the paper should be complemented by formal proofs provided in appendix or supplemental material.
        \item Theorems and Lemmas that the proof relies upon should be properly referenced. 
    \end{itemize}

    \item {\bf Experimental result reproducibility}
    \item[] Question: Does the paper fully disclose all the information needed to reproduce the main experimental results of the paper to the extent that it affects the main claims and/or conclusions of the paper (regardless of whether the code and data are provided or not)?
    \item[] Answer: \answerYes{} 
    \item[] Justification: We provide pseudocode and specify the parameters used. 
    \item[] Guidelines:
    \begin{itemize}
        \item The answer NA means that the paper does not include experiments.
        \item If the paper includes experiments, a No answer to this question will not be perceived well by the reviewers: Making the paper reproducible is important, regardless of whether the code and data are provided or not.
        \item If the contribution is a dataset and/or model, the authors should describe the steps taken to make their results reproducible or verifiable. 
        \item Depending on the contribution, reproducibility can be accomplished in various ways. For example, if the contribution is a novel architecture, describing the architecture fully might suffice, or if the contribution is a specific model and empirical evaluation, it may be necessary to either make it possible for others to replicate the model with the same dataset, or provide access to the model. In general. releasing code and data is often one good way to accomplish this, but reproducibility can also be provided via detailed instructions for how to replicate the results, access to a hosted model (e.g., in the case of a large language model), releasing of a model checkpoint, or other means that are appropriate to the research performed.
        \item While NeurIPS does not require releasing code, the conference does require all submissions to provide some reasonable avenue for reproducibility, which may depend on the nature of the contribution. For example
        \begin{enumerate}
            \item If the contribution is primarily a new algorithm, the paper should make it clear how to reproduce that algorithm.
            \item If the contribution is primarily a new model architecture, the paper should describe the architecture clearly and fully.
            \item If the contribution is a new model (e.g., a large language model), then there should either be a way to access this model for reproducing the results or a way to reproduce the model (e.g., with an open-source dataset or instructions for how to construct the dataset).
            \item We recognize that reproducibility may be tricky in some cases, in which case authors are welcome to describe the particular way they provide for reproducibility. In the case of closed-source models, it may be that access to the model is limited in some way (e.g., to registered users), but it should be possible for other researchers to have some path to reproducing or verifying the results.
        \end{enumerate}
    \end{itemize}

\item {\bf Open access to data and code}
    \item[] Question: Does the paper provide open access to the data and code, with sufficient instructions to faithfully reproduce the main experimental results, as described in supplemental material?
    \item[] Answer: \answerYes{} 
    \item[] Justification: Pseudocode and paramter settings are provided in the text. We omit a public link to code to preserve anonymity. We will include a github link after the reviewing process.
    \item[] Guidelines:
    \begin{itemize}
        \item The answer NA means that paper does not include experiments requiring code.
        \item Please see the NeurIPS code and data submission guidelines (\url{https://nips.cc/public/guides/CodeSubmissionPolicy}) for more details.
        \item While we encourage the release of code and data, we understand that this might not be possible, so “No” is an acceptable answer. Papers cannot be rejected simply for not including code, unless this is central to the contribution (e.g., for a new open-source benchmark).
        \item The instructions should contain the exact command and environment needed to run to reproduce the results. See the NeurIPS code and data submission guidelines (\url{https://nips.cc/public/guides/CodeSubmissionPolicy}) for more details.
        \item The authors should provide instructions on data access and preparation, including how to access the raw data, preprocessed data, intermediate data, and generated data, etc.
        \item The authors should provide scripts to reproduce all experimental results for the new proposed method and baselines. If only a subset of experiments are reproducible, they should state which ones are omitted from the script and why.
        \item At submission time, to preserve anonymity, the authors should release anonymized versions (if applicable).
        \item Providing as much information as possible in supplemental material (appended to the paper) is recommended, but including URLs to data and code is permitted.
    \end{itemize}

\item {\bf Experimental setting/details}
    \item[] Question: Does the paper specify all the training and test details (e.g., data splits, hyperparameters, how they were chosen, type of optimizer, etc.) necessary to understand the results?
    \item[] Answer: \answerYes{} 
    \item[] Justification: We include all relevant parameter settings with our experiments.
    \item[] Guidelines:
    \begin{itemize}
        \item The answer NA means that the paper does not include experiments.
        \item The experimental setting should be presented in the core of the paper to a level of detail that is necessary to appreciate the results and make sense of them.
        \item The full details can be provided either with the code, in appendix, or as supplemental material.
    \end{itemize}

\item {\bf Experiment statistical significance}
    \item[] Question: Does the paper report error bars suitably and correctly defined or other appropriate information about the statistical significance of the experiments?
    \item[] Answer: \answerYes{} 
    \item[] Justification: We show standard deviation intervals whenever relevant.
    \item[] Guidelines:
    \begin{itemize}
        \item The answer NA means that the paper does not include experiments.
        \item The authors should answer "Yes" if the results are accompanied by error bars, confidence intervals, or statistical significance tests, at least for the experiments that support the main claims of the paper.
        \item The factors of variability that the error bars are capturing should be clearly stated (for example, train/test split, initialization, random drawing of some parameter, or overall run with given experimental conditions).
        \item The method for calculating the error bars should be explained (closed form formula, call to a library function, bootstrap, etc.)
        \item The assumptions made should be given (e.g., Normally distributed errors).
        \item It should be clear whether the error bar is the standard deviation or the standard error of the mean.
        \item It is OK to report 1-sigma error bars, but one should state it. The authors should preferably report a 2-sigma error bar than state that they have a 96\% CI, if the hypothesis of Normality of errors is not verified.
        \item For asymmetric distributions, the authors should be careful not to show in tables or figures symmetric error bars that would yield results that are out of range (e.g. negative error rates).
        \item If error bars are reported in tables or plots, The authors should explain in the text how they were calculated and reference the corresponding figures or tables in the text.
    \end{itemize}

\item {\bf Experiments compute resources}
    \item[] Question: For each experiment, does the paper provide sufficient information on the computer resources (type of compute workers, memory, time of execution) needed to reproduce the experiments?
    \item[] Answer: \answerYes{} 
    \item[] Justification: Please see additional experimental details in appendix.
    \item[] Guidelines:
    \begin{itemize}
        \item The answer NA means that the paper does not include experiments.
        \item The paper should indicate the type of compute workers CPU or GPU, internal cluster, or cloud provider, including relevant memory and storage.
        \item The paper should provide the amount of compute required for each of the individual experimental runs as well as estimate the total compute. 
        \item The paper should disclose whether the full research project required more compute than the experiments reported in the paper (e.g., preliminary or failed experiments that didn't make it into the paper). 
    \end{itemize}
    
\item {\bf Code of ethics}
    \item[] Question: Does the research conducted in the paper conform, in every respect, with the NeurIPS Code of Ethics \url{https://neurips.cc/public/EthicsGuidelines}?
    \item[] Answer: \answerYes{} 
    \item[] Justification: This research follows the code of ethics.
    \item[] Guidelines:
    \begin{itemize}
        \item The answer NA means that the authors have not reviewed the NeurIPS Code of Ethics.
        \item If the authors answer No, they should explain the special circumstances that require a deviation from the Code of Ethics.
        \item The authors should make sure to preserve anonymity (e.g., if there is a special consideration due to laws or regulations in their jurisdiction).
    \end{itemize}

\item {\bf Broader impacts}
    \item[] Question: Does the paper discuss both potential positive societal impacts and negative societal impacts of the work performed?
    \item[] Answer: \answerYes{} 
    \item[] Justification: We see no direct paths to negative applications. 
    \item[] Guidelines:
    \begin{itemize}
        \item The answer NA means that there is no societal impact of the work performed.
        \item If the authors answer NA or No, they should explain why their work has no societal impact or why the paper does not address societal impact.
        \item Examples of negative societal impacts include potential malicious or unintended uses (e.g., disinformation, generating fake profiles, surveillance), fairness considerations (e.g., deployment of technologies that could make decisions that unfairly impact specific groups), privacy considerations, and security considerations.
        \item The conference expects that many papers will be foundational research and not tied to particular applications, let alone deployments. However, if there is a direct path to any negative applications, the authors should point it out. For example, it is legitimate to point out that an improvement in the quality of generative models could be used to generate deepfakes for disinformation. On the other hand, it is not needed to point out that a generic algorithm for optimizing neural networks could enable people to train models that generate Deepfakes faster.
        \item The authors should consider possible harms that could arise when the technology is being used as intended and functioning correctly, harms that could arise when the technology is being used as intended but gives incorrect results, and harms following from (intentional or unintentional) misuse of the technology.
        \item If there are negative societal impacts, the authors could also discuss possible mitigation strategies (e.g., gated release of models, providing defenses in addition to attacks, mechanisms for monitoring misuse, mechanisms to monitor how a system learns from feedback over time, improving the efficiency and accessibility of ML).
    \end{itemize}
    
\item {\bf Safeguards}
    \item[] Question: Does the paper describe safeguards that have been put in place for responsible release of data or models that have a high risk for misuse (e.g., pretrained language models, image generators, or scraped datasets)?
    \item[] Answer: \answerNA{} 
    \item[] Justification: 
    \item[] Guidelines:
    \begin{itemize}
        \item The answer NA means that the paper poses no such risks.
        \item Released models that have a high risk for misuse or dual-use should be released with necessary safeguards to allow for controlled use of the model, for example by requiring that users adhere to usage guidelines or restrictions to access the model or implementing safety filters. 
        \item Datasets that have been scraped from the Internet could pose safety risks. The authors should describe how they avoided releasing unsafe images.
        \item We recognize that providing effective safeguards is challenging, and many papers do not require this, but we encourage authors to take this into account and make a best faith effort.
    \end{itemize}

\item {\bf Licenses for existing assets}
    \item[] Question: Are the creators or original owners of assets (e.g., code, data, models), used in the paper, properly credited and are the license and terms of use explicitly mentioned and properly respected?
    \item[] Answer: \answerNA{} 
    \item[] Justification: 
    \item[] Guidelines:
    \begin{itemize}
        \item The answer NA means that the paper does not use existing assets.
        \item The authors should cite the original paper that produced the code package or dataset.
        \item The authors should state which version of the asset is used and, if possible, include a URL.
        \item The name of the license (e.g., CC-BY 4.0) should be included for each asset.
        \item For scraped data from a particular source (e.g., website), the copyright and terms of service of that source should be provided.
        \item If assets are released, the license, copyright information, and terms of use in the package should be provided. For popular datasets, \url{paperswithcode.com/datasets} has curated licenses for some datasets. Their licensing guide can help determine the license of a dataset.
        \item For existing datasets that are re-packaged, both the original license and the license of the derived asset (if it has changed) should be provided.
        \item If this information is not available online, the authors are encouraged to reach out to the asset's creators.
    \end{itemize}

\item {\bf New assets}
    \item[] Question: Are new assets introduced in the paper well documented and is the documentation provided alongside the assets?
    \item[] Answer: \answerNA{} 
    \item[] Justification: 
    \item[] Guidelines:
    \begin{itemize}
        \item The answer NA means that the paper does not release new assets.
        \item Researchers should communicate the details of the dataset/code/model as part of their submissions via structured templates. This includes details about training, license, limitations, etc. 
        \item The paper should discuss whether and how consent was obtained from people whose asset is used.
        \item At submission time, remember to anonymize your assets (if applicable). You can either create an anonymized URL or include an anonymized zip file.
    \end{itemize}

\item {\bf Crowdsourcing and research with human subjects}
    \item[] Question: For crowdsourcing experiments and research with human subjects, does the paper include the full text of instructions given to participants and screenshots, if applicable, as well as details about compensation (if any)? 
    \item[] Answer: \answerNA{} 
    \item[] Justification: 
    \item[] Guidelines:
    \begin{itemize}
        \item The answer NA means that the paper does not involve crowdsourcing nor research with human subjects.
        \item Including this information in the supplemental material is fine, but if the main contribution of the paper involves human subjects, then as much detail as possible should be included in the main paper. 
        \item According to the NeurIPS Code of Ethics, workers involved in data collection, curation, or other labor should be paid at least the minimum wage in the country of the data collector. 
    \end{itemize}

\item {\bf Institutional review board (IRB) approvals or equivalent for research with human subjects}
    \item[] Question: Does the paper describe potential risks incurred by study participants, whether such risks were disclosed to the subjects, and whether Institutional Review Board (IRB) approvals (or an equivalent approval/review based on the requirements of your country or institution) were obtained?
    \item[] Answer: \answerNA{} 
    \item[] Justification:
    \item[] Guidelines:
    \begin{itemize}
        \item The answer NA means that the paper does not involve crowdsourcing nor research with human subjects.
        \item Depending on the country in which research is conducted, IRB approval (or equivalent) may be required for any human subjects research. If you obtained IRB approval, you should clearly state this in the paper. 
        \item We recognize that the procedures for this may vary significantly between institutions and locations, and we expect authors to adhere to the NeurIPS Code of Ethics and the guidelines for their institution. 
        \item For initial submissions, do not include any information that would break anonymity (if applicable), such as the institution conducting the review.
    \end{itemize}

\item {\bf Declaration of LLM usage}
    \item[] Question: Does the paper describe the usage of LLMs if it is an important, original, or non-standard component of the core methods in this research? Note that if the LLM is used only for writing, editing, or formatting purposes and does not impact the core methodology, scientific rigorousness, or originality of the research, declaration is not required.
    \item[] Answer: \answerNA{} 
    \item[] Justification:
    \item[] Guidelines:
    \begin{itemize}
        \item The answer NA means that the core method development in this research does not involve LLMs as any important, original, or non-standard components.
        \item Please refer to our LLM policy (\url{https://neurips.cc/Conferences/2025/LLM}) for what should or should not be described.
    \end{itemize}

\end{enumerate}

\fi

\end{document}